\newcolumntype{Y}{>{\raggedright\arraybackslash}X}
\newcommand{\keyword}[1]{\par\noindent \textbf{Keywords:} #1 }
\newtheorem{defi}{Definition}[section]
\newtheorem{proposition}[defi]{Proposition}
\newtheorem{theorem}[defi]{Theorem}
\newtheorem{definition}[defi]{Definition}
\newtheorem{corollary}[defi]{Corollary}
\newtheorem{remark}[defi]{Remark}
\newtheorem{example}[defi]{Example}
\title{\Large Non-Tonelli Finsler Geometry of Exotic Superconductivity:\\
Metastable Vortex Phases and Geometric Phase Transitions}
\author
{Y. Alipour Fakhri\thanks{Corresponding Author:
Faculty of Basic Sciences,
Department of Mathematics, Payame Noor University, Tehran, Iran.  E-mail: y\_ alipour@pnu.ac.ir}
}
\begin{document}
\maketitle

\begin{abstract}
We develop a thermally coupled Ginzburg-Landau theory on \emph{Weakly Non-Tonelli (WNT) Finsler manifolds}, extending classical vortex analysis beyond the Tonelli convexity paradigm.  
The WNT framework weakens global $1$-homogeneity and strict convexity while preserving superlinearity and local ellipticity, enabling a geometric treatment of superconductors whose anisotropic energy landscapes are nonconvex and temperature-dependent. Within this setting, we construct the generalized Legendre correspondence, Hamiltonian metric, and WNT Laplacian, proving existence and sharp Coulomb asymptotics of the three-dimensional Green kernel. We then establish the $\Gamma$--convergence of the WNT-GL energy and identify metastable vortex filaments minimizing a renormalized geometric functional.  
Finally, a dynamic $\Gamma$-limit yields an effective filament flow governed by the WNT Finsler curvature and thermally induced geometric forces, predicting curvature focusing and phase bifurcation at a critical transition temperature $T_c$.  
This theory unifies convex-analytic, geometric, and physical perspectives, showing that Non-Tonelli Finsler structures form a natural analytic bridge between classical Finsler geometry, anisotropic variational models, and the nonlinear thermodynamics of exotic superconductivity.
\end{abstract}


\keyword{Finsler geometry; non-Tonelli structures; anisotropic Ginzburg-Landau equations; 
weak convexity; renormalized energy; vortex filaments; 
geometric phase transitions; Hamiltonian analysis; 
$\Gamma$-convergence; thermal coupling; semiclassical limit; 
anisotropic superconductivity.}
 \\*
 \textbf{[2020] Mathematics Subject Classification}:(primary)35J60, 53B40, 82D55, (Secondary)
35K55, 35Q56, 49J45, 58J35, 35B40, 35A15.

\section{Introduction}\label{sec;Introduction}
The Ginzburg--Landau (GL) theory has served for decades as a cornerstone in the mathematical and physical understanding of superconductivity, providing a variational framework that captures macroscopic quantum phenomena through the interaction of complex order parameters and electromagnetic fields \cite{BBH1994,JerrardSoner1998,JerrardSoner2002,Serfaty2014}. Since the geometric reinterpretation of the GL functional in terms of gauge and differential geometry, it has become a unifying model linking nonlinear PDEs, geometric measure theory, and the thermodynamics of condensed matter systems. 

In recent years, a geometric reformulation of the GL model within the broader setting of Finsler geometry has been proposed in order to describe materials with anisotropic or direction-dependent properties \cite{Alipour2025a,BaoShen2000,Jost2017,Weber2024}. In this context, the classical Tonelli assumptions on Finsler structures--smoothness, strict convexity, and positive 1--homogeneity--ensure the existence of a globally defined Legendre duality between the tangent and cotangent bundles. These assumptions, however, restrict the analysis to uniformly convex models and fail to capture the degeneracies and multi-phase anisotropies observed in complex superconductors, especially those with temperature-dependent metrics and nonconvex energy landscapes.

From an analytical standpoint, the Tonelli convexity conditions guarantee the smoothness and global invertibility of the Legendre transform, leading to a well--defined Hamiltonian metric and elliptic Finsler--Laplace operator \cite{BaoShen2000,MatveevTroyanov2012,OhtaSturm2013,Grigoryan2009}. Yet, as observed in anisotropic or layered superconductors, these assumptions break down: the Legendre correspondence may become multivalued, the Hamiltonian structure can degenerate, and the ellipticity of the Laplacian may hold only in a weak or measure-theoretic sense. Such limitations motivated several analytic extensions of Finsler--type variational models in the context of superconductivity \cite{Alipour2025b,Alipour2025c}, but a unified geometric and analytic framework accommodating nonconvexity and nonhomogeneity remained elusive.

In parallel, recent progress in convex and nonsmooth analysis has provided tools suitable for handling weak convexity and variable growth conditions \cite{Rockafellar1970,AmbrosioGigliSavare,Cheeger1999,ColomboMingione2015}. These developments suggest that a generalized Finsler structure, relaxing Tonelli's hypotheses but preserving superlinearity and local ellipticity, could serve as an appropriate analytic environment for nonstandard Ginzburg--Landau models. 

\medskip
\noindent
\textbf{Scientific gap.} Despite recent geometric and analytic advances, the existing Finsler extensions of the GL framework are confined either to strongly convex geometries or to perturbative regimes where convexity is only weakly violated. There is currently no rigorous theory that systematically integrates nonconvex, temperature--dependent anisotropy into a coherent analytic--geometric model of superconductivity. In particular, the lack of a well-defined Legendre--Hamiltonian correspondence beyond the Tonelli class prevents one from constructing elliptic operators, Green kernels, and $\Gamma$--limit structures in this broader setting.

\medskip
\noindent
\textbf{Objective and contribution.} 
The present paper develops a complete analytic and geometric theory of superconductivity on \emph{Weakly Non--Tonelli (WNT)} Finsler manifolds. The WNT class provides a controlled relaxation of Tonelli convexity and 1-homogeneity while maintaining sufficient analytic regularity to define the Legendre correspondence, Hamiltonian metric, and WNT--Finsler Laplacian. Within this framework, we establish:
\begin{itemize}
  \item rigorous definitions and equivalence theorems for geometric and analytic WNT structures;
  \item existence, compactness, and $\Gamma$--convergence results for thermally coupled WNT--Ginzburg--Landau energies;
  \item characterization of metastable vortex filaments minimizing a renormalized geometric energy functional; and
  \item a dynamic $\Gamma$--limit describing curvature-induced vortex flow and geometric phase transitions at critical temperature $T_c$.
\end{itemize}

These results extend both the classical geometric analysis of Bethuel, Brezis, H\'elein, Jerrard, Soner, and Serfaty \cite{BBH1994,JerrardSoner1998,JerrardSoner2002,Serfaty2014} and the Finsler-geometric framework of Bao--Chern--Shen \cite{BaoShen2000} and Matveev--Troyanov \cite{MatveevTroyanov2012}, providing a unified bridge between convex analysis, geometric measure theory, and the nonlinear thermodynamics of anisotropic superconductivity.

\section{From Tonelli Finsler Geometry to Weakly Non-Tonelli Extensions}\label{sec:From Tonelli to NonTonelli}

The classical Tonelli framework provides the analytic backbone of smooth Finsler geometry: a Lagrangian $F(x,y)$ that is $C^\infty$ on $TM\setminus\{0\}$, positively $1$--homogeneous in $y$, and whose square is strictly convex in $y$. Under these assumptions, the fiberwise Legendre map is a $C^\infty$ diffeomorphism and induces a smooth, uniformly convex Hamiltonian duality between tangent and cotangent bundles. We recall the standard setting for reference.

\begin{enumerate}
\item[(T1)] $F$ is $C^\infty$ on $TM\setminus\{0\}$ and continuous on $TM$;
\item[(T2)] $F$ is positively $1$-homogeneous in $y$;
\item[(T3)] $F^2$ is strictly convex in $y$, i.e.\ $g_{ij}(x,y)=\tfrac 12\partial^2_{y^iy^j}F^2(x,y)$ is positive-definite.
\end{enumerate}

Under (T1)-(T3), the Legendre map
\begin{equation}\label{eq:legendre-tonelli}
\mathcal{L}_x:T_xM\to T_x^*M,\qquad y\mapsto \partial_y\!\big(\tfrac 12 F^2(x,y)\big)
\end{equation}
is a $C^\infty$ diffeomorphism, giving the Hamiltonian
\begin{align*}
H(x,\xi)=\tfrac 12 F^{*2}(x,\xi),\qquad
F^*(x,\xi)=\sup_{y\neq 0}\{\langle \xi,y\rangle-F(x,y)\}.
\end{align*}
This rigid duality allows one to define Finsler gradient/divergence and a (linear) Laplacian with full regularity in the Tonelli regime.

\subsection{Beyond Tonelli: analytic phenomena when convexity/homogeneity fail}

When either $1$-homogeneity or strict convexity fails, several features necessarily change:

\begin{enumerate}
\item[(F1)] \textbf{Loss of global Legendre diffeomorphism.}
The map \eqref{eq:legendre-tonelli} need not be bijective; one must replace it by the \emph{subdifferential} of $\Phi_x(y):=\tfrac 12F^2(x,y)$:
\begin{align*}
\mathcal{L}_x(y):=\partial \Phi_x(y)\subset T_x^*M,
\end{align*}
which is a (possibly) set-valued, maximal monotone relation. Its inverse is $\partial H(x,\cdot)$ with $H=\Phi_x^*$ (Rockafellar \cite[Th. 23.5]{Rockafellar1970}).

\item[(F2)] \textbf{Hamiltonian regularity drops.}
The Fenchel conjugate
\begin{align*}
H(x,\xi):=\sup_{y\in T_xM}\{\langle \xi,y\rangle-\Phi_x(y)\}
\end{align*}
is convex and coercive, but typically only locally Lipschitz (or $C^{1,\alpha}$ on strata); $\partial_\xi H$ exists and is single--valued for a.e.\ $\xi$.

\item[(F3)] \textbf{Quasilinear (possibly degenerate) ellipticity.}
The operator
\begin{align*}
\mathcal{A}_F(u):=-\mathrm{div}\big(\partial_\xi H(x,Du)\big),
\end{align*}
is quasilinear; linear superposition and linear Green representations \emph{do not} hold in general. Potential theory must be phrased in weak/variational form; any kernel--based representation is, at best, a \emph{local} linearization device.
\end{enumerate}

These observations motivate a careful enlargement of Tonelli geometry that preserves enough structure for PDE/variational analysis while allowing weaker convexity and nonhomogeneous growth.

\subsection{Weakly Non-Tonelli (WNT) structures: primal and dual definitions}

Throughout, fix a smooth background Riemannian metric to measure fiber norms. All statements below are understood on $TM\setminus\{0\}$.

\begin{definition}[Primal WNT]\label{def:WNT-primal}
A Lagrangian $F:TM\to[0,\infty)$ lies in the \emph{Weakly Non--Tonelli (WNT)} class if there exists $1<p<\infty$ such that, for every compact $K\Subset M$, the following hold:
\begin{itemize}
\item[(W1)] \emph{Convexity and $C^{1}$ off the zero section.} For each $x\in K$, the map $y\mapsto \Phi_x(y):=\tfrac 12F^2(x,y)$ is proper, convex, lower semicontinuous, continuous on $T_xM$, and $C^{1}$ on $T_xM\setminus\{0\}$. The map $(x,y)\mapsto D_y\Phi_x(y)$ is measurable in $x$ and continuous in $y$ on $K\times (T_xM\setminus\{0\})$.

\item[(W2)] \emph{Local $p$--growth and coercivity.} There exist $c_K,C_K>0$ (depending on $K$) such that
\begin{align*}
c_K|y|^p \le \Phi_x(y) \le C_K(1+|y|^p)\qquad \forall (x,y)\in K\times T_xM .
\end{align*}

\item[(W3)] \emph{Localized strong convexity on bounded fibers.} For every bounded set $B\subset T_xM$ there exist $m_{K,B},L_{K,B}>0$ such that for a.e.\ $(x,y)\in K\times B$ the (Alexandrov) second derivative exists and
\begin{align*}
m_{K,B}\,I \;\le\; D^2_{yy}\Phi_x(y) \;\le\; L_{K,B}\,I .
\end{align*}
In particular, $\partial_y\Phi_x$ is single-valued and locally Lipschitz on a full-measure subset of $K\times B$ (Rademacher), and the Legendre correspondence is bi--Lipschitz onto its image on $K\times B$ with constants depending on $(K,B)$.
\end{itemize}
\end{definition}

\begin{definition}[Dual WNT]\label{def:WNT-dual}
Let $H=\Phi_x^*$ be the fiberwise Fenchel conjugate and set $p'=\frac{p}{p-1}$. We say $F$ is \emph{dual--WNT} if, for every compact $K\Subset M$:
\begin{itemize}
\item[(D1)] \emph{Dual convexity.} For each $x\in K$, $H(x,\cdot)$ is proper, convex, l.s.c., and finite everywhere on $T_x^*M$.

\item[(D2)] \emph{Local $p'$--growth.} There exist $c'_K,C'_K>0$ such that
\begin{align*}
c'_K|\xi|^{p'} \le H(x,\xi) \le C'_K(1+|\xi|^{p'})\qquad \forall (x,\xi)\in K\times T_x^*M .
\end{align*}
(Any quadratic two--sided control, when used, is \emph{only} local on bounded dual fibers determined by $K$ and a bounded primal fiber $B$.)

\item[(D3)] \emph{Subdifferential correspondence (maximal monotone).} For a.e.\ $(x,y,\xi)$,
\begin{align*}
\xi\in \partial_y\Phi_x(y)\quad \Longleftrightarrow\quad y\in \partial_\xi H(x,\xi),
\end{align*}
and $\operatorname{Graph}(\partial_y\Phi_x)$ and $\operatorname{Graph}(\partial_\xi H(x,\cdot))$ are inverse maximal--monotone relations (Rockafellar \cite[Th.~23.5]{Rockafellar1970}). Measurable selections exist on compact fibers (Castaing representation; cf.\ \cite[Prop.~2.4.9]{AmbrosioGigliSavare}).
\end{itemize}
\end{definition}

\begin{proposition}[Local primal--dual equivalence]\label{prop:WNT-equivalence}
On each compact $K\Subset M$, the primal and dual WNT conditions are equivalent, with constants depending on $K$ and on bounded fiber sets. Moreover, for a.e.\ $x$,
\begin{align*}
\Phi_x = H(x,\cdot)^*,\qquad H(x,\cdot)=\Phi_x^*,\qquad
(\partial_y\Phi_x)^{-1}=\partial_\xi H(x,\cdot)\quad \text{a.e.\ in fibers}.
\end{align*}
\end{proposition}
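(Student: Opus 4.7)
The plan is to fix $x\in K$ and work fiberwise, identifying $T_xM\cong T_x^*M$ via the background Riemannian metric, so the entire statement reduces to Fenchel--Rockafellar duality in a finite-dimensional inner product space, with constants depending on $K$ and on the chosen bounded fiber $B$. Once the pointwise identities are secured, measurability in $x$ is recovered from the joint measurability/continuity assumption in (W1) via a Castaing representation, exactly as cited in (D3). The biconjugation identity $\Phi_x = H(x,\cdot)^*$ and the inverse--subdifferential identity $(\partial_y\Phi_x)^{-1}=\partial_\xi H(x,\cdot)$ will then be direct consequences of Rockafellar \cite[Th.~23.5]{Rockafellar1970}.

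For the implication Primal $\Rightarrow$ Dual, I would set $H(x,\cdot):=\Phi_x^*$ fiberwise. Property (D1) is immediate because the Fenchel conjugate of a proper, convex, lsc function is proper, convex and lsc; the superlinearity built into the lower $p$-growth in (W2) forces $H(x,\cdot)$ to be finite everywhere. Property (D2) follows from the standard duality between $p$- and $p'$-growth: the upper bound $\Phi_x(y)\le C_K(1+|y|^p)$ yields a lower bound of order $|\xi|^{p'}$ on $H(x,\xi)$ by inserting an optimal test vector aligned with $\xi$, and the lower bound $c_K|y|^p\le \Phi_x(y)$ yields the matching upper bound. Property (D3) is then the Fenchel--Young identity $\Phi_x(y)+H(x,\xi)=\langle\xi,y\rangle \iff \xi\in\partial\Phi_x(y)\iff y\in\partial H(x,\xi)$ together with Minty/Rockafellar maximal monotonicity; measurable selections on compact fibers come from the Castaing representation cited in the proposition.

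For the converse Dual $\Rightarrow$ Primal, I would invoke biconjugation: since each $\Phi_x$ is proper, convex and lsc, $\Phi_x=(\Phi_x^*)^*=H(x,\cdot)^*$, which gives the primal convexity, properness and lower semicontinuity in (W1). Continuity on $T_xM$ comes from finiteness plus convexity in finite dimension; the $C^1$ regularity off the zero section follows because the local quadratic two-sided control on $H$ (the parenthetical in (D2)) implies strict convexity of $H(x,\cdot)$ on bounded dual fibers, and strict convexity on the dual side is equivalent to differentiability of the conjugate on the image set. The two-sided primal growth in (W2) is recovered from (D2) by the same duality-of-growth argument applied in reverse. Finally, (W3) follows from the classical Legendre equivalence between strong convexity and Lipschitz continuity of the gradient: $D^2_{yy}\Phi_x\ge m_{K,B} I$ on $B$ corresponds to $\partial_\xi H(x,\cdot)$ being $m_{K,B}^{-1}$-Lipschitz on the conjugate dual fiber $B^*:=\partial_y\Phi_x(B)$, and the upper Hessian bound $L_{K,B}I$ corresponds to $m_{K,B}^{-1} I \le D^2_{\xi\xi}H(x,\cdot)\le L_{K,B}^{-1} I$ on $B^*$, in the Alexandrov sense.

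The main obstacle I anticipate is the careful bookkeeping between a bounded primal fiber $B\subset T_xM$ and its image $B^*\subset T_x^*M$ under the Legendre correspondence: the WNT constants $m_{K,B}, L_{K,B}$ are only local in $B$, so one must track how bounded sets are exchanged when iterating the Legendre map and its set-valued inverse, and verify that the resulting Alexandrov Hessian bounds on $H$ are consistent with the quadratic control used to derive $C^1$ regularity of $\Phi_x$. Once this identification is made precise and the a.e.-statements in (W3) and (D3) are aligned, the remainder of the argument is a catenation of standard results from convex analysis: biconjugation, the Fenchel--Young identity, Minty's theorem, the duality of growth conditions, and the duality of strong convexity with Lipschitz gradients.
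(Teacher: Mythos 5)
Your proposal follows essentially the same route as the paper's proof: Fenchel--Moreau biconjugation for (D1)/(W1), Young's inequality for the duality of $p$- and $p'$-growth, the classical Rockafellar equivalence between strong convexity and Lipschitz gradients on bounded primal/dual fibers for (W3) and the localized part of (D2), and the Fenchel--Young identity plus maximal monotonicity for (D3). The only blemish is a transposition of constants in your dual Hessian bound, which should read $L_{K,B}^{-1}I\le D^2_{\xi\xi}H(x,\cdot)\le m_{K,B}^{-1}I$ on $B^*$ rather than the reverse; apart from that, the argument --- including your (justified) attention to the bookkeeping between a bounded primal fiber $B$ and its Legendre image --- matches the paper's.
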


\begin{proof}
Fix a compact set $K\Subset M$ and write $\Phi_x(y)=\tfrac 12 F(x,y)^2$ and $H(x,\xi)=\Phi_x^*(\xi)$. 
All assertions are fiberwise and local on $K$ and on bounded subsets of $T_xM$ and $T_x^*M$; measurability in $x$ is understood throughout and causes no loss of generality because all variational constructions are performed on compact charts.

\emph{Primal $\Rightarrow$ Dual.}
By (W1)–-(W2), for each $x\in K$ the function $\Phi_x$ is proper, convex, l.s.c., finite everywhere and $C^1$ on $T_xM\setminus\{0\}$, with $p$--growth and coercivity:
\begin{align*}
\Phi_x(y)\ge c_K|y|^p-C_K,\qquad \Phi_x(y)\le C_K(1+|y|^p).
\end{align*}
Fenchel--Moreau duality yields that $H(x,\cdot)$ is proper, convex, l.s.c., finite everywhere and $\Phi_x=(H(x,\cdot))^*$ (Rockafellar~\cite[Ths.~12.2, 23.5]{Rockafellar1970}), i.e. (D1). 
The growth of $H$ follows from the standard convex--analytic optimization: Young's inequality gives, for any $\alpha>0$,
\begin{align*}
\langle \xi,y\rangle \le \frac{\alpha}{p}|y|^p + \frac{1}{p'\,\alpha^{p'/p}}|\xi|^{p'}\!,
\end{align*}
whence $H(x,\xi)\le (p'c_K^{p'/p})^{-1}|\xi|^{p'}+C_K$, while $\Phi_x(y)\le C_K(1+|y|^p)$ implies
\begin{align*}
H(x,\xi)\ge \sup_{R>0}\{R|\xi|-C_K(1+R^p)\}\ge c'_K|\xi|^{p'}-C'_K.
\end{align*} 
Thus (D2) in the global $p'$–-growth sense holds on $K\times T_x^*M$.

The localized quadratic control in (D2) is a consequence of (W3). 
Indeed, if $B\subset T_xM$ is bounded, then for a.e.\ $(x,y)\in K\times B$ the Alexandrov Hessian exists and satisfies
$m_{K,B}I\le D^2_{yy}\Phi_x(y)\le L_{K,B}I$.
By classical convex duality, strong convexity and Lipschitz gradient of $\Phi_x$ on $B$ are equivalent to Lipschitz gradient and strong convexity of $H(x,\cdot)$ on the dual region $\nabla\Phi_x(B)$, with constants depending only on $(m_{K,B},L_{K,B})$ (Rockafellar~\cite[Ths.~25.5, 26.3]{Rockafellar1970}). In particular, on bounded dual fibers one has two--sided quadratic Taylor control for $H$ and local Lipschitz regularity of $\nabla_\xi H$; a finite covering by such patches on $K$ provides the localized form of (D2).
Finally, the Fenchel--Young relation yields
\begin{align*}
\xi\in\partial\Phi_x(y)\ \Longleftrightarrow\ y\in\partial H(x,\xi),
\end{align*}
and maximal monotonicity of $\partial\Phi_x$ ensures $(\partial\Phi_x)^{-1}=\partial H(x,\cdot)$ (Rockafellar~\cite[Th.~23.5]{Rockafellar1970}). Since $\Phi_x$ is differentiable almost everywhere (Rademacher/Alexandrov), $\partial\Phi_x(y)=\{\nabla\Phi_x(y)\}$ a.e., and hence $\partial_\xi H(x,\cdot)$ is single--valued a.e. This proves (D3). 

\emph{Dual $\Rightarrow$ Primal.}
Assume (D1)--(D3) on $K$. Define $\Phi_x:=H(x,\cdot)^*$; then $\Phi_x$ is proper, convex, l.s.c., finite everywhere and $\Phi_x^{**}=\Phi_x$ (Fenchel--Moreau). The $p'$--growth of $H$ implies the $p$--growth and coercivity of $\Phi_x$ by the same optimization argument as above, establishing (W2). On bounded dual fibers where $H$ possesses strong convexity and a Lipschitz gradient (the localized content of (D2)), the conjugacy rules imply that $\Phi_x$ has Lipschitz gradient and strong convexity on the corresponding primal bounded fibers, which is precisely (W3) in the Alexandrov sense. Regularity (W1) follows from convexity plus Rademacher's theorem away from the zero section. 

Collecting the above, on each compact $K\Subset M$ the primal and dual WNT conditions are equivalent, with constants depending on $K$ and on bounded fiber sets. Moreover,
\begin{align*}
\Phi_x=(H(x,\cdot))^*,\hspace{2mm}
 H(x,\cdot)=\Phi_x^*,
(\partial_y\Phi_x)^{-1}=\partial_\xi H(x,\cdot)\quad\text{a.e. in fibers},
\end{align*}
as claimed. 
\end{proof}

\subsection{Legendre map: smoothness and (local) injectivity}

Write $L_x:=\partial_y\Phi_x$ wherever $\Phi_x$ is differentiable.

\begin{proposition}[Regularity and injectivity level of $L_x$]\label{prop:Legendre-regularity}
Under WNT:
\begin{enumerate}
\item[(i)] $L_x$ is monotone and locally Lipschitz on $T_xM\setminus\{0\}$ a.e.; in particular,
\begin{align*}
\langle L_x(y_1)-L_x(y_2),\,y_1-y_2\rangle \ge 0.
\end{align*}
\item[(ii)] $L_x$ is single--valued and one--to--one \emph{locally} on full--measure subsets of $T_xM$ (injective a.e.); global bijectivity need not hold.
\item[(iii)] The inverse relation $L_x^{-1}$ equals $\partial_\xi H(x,\cdot)$; wherever $H$ is differentiable, $L_x^{-1}=(\partial_\xi H)^{-1}$ is single--valued.
\end{enumerate}
\end{proposition}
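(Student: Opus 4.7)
The plan is to read off each item from the structural content of (W1)--(W3), combined with the primal--dual equivalence of Proposition~\ref{prop:WNT-equivalence}; no analytic tool beyond the classical Rockafellar--Fenchel subdifferential calculus is needed.

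For (i), monotonicity is immediate: since $\Phi_x$ is convex by (W1), its subdifferential $\partial_y\Phi_x$ is a monotone set-valued map, and at points where $L_x$ is single-valued this reduces to the stated inequality. Local Lipschitz regularity follows from the upper Hessian bound in (W3): on any bounded fiber $B\subset T_xM\setminus\{0\}$ one has $D^2_{yy}\Phi_x\le L_{K,B}I$ a.e., and integrating this bound along segments--legitimate because $\nabla\Phi_x$ coincides a.e.\ with the classical gradient of the convex function $\Phi_x$ by Rademacher--Alexandrov--gives $|L_x(y_1)-L_x(y_2)|\le L_{K,B}|y_1-y_2|$ on a full-measure subset of $B$.

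For (ii), I would invoke the lower Hessian bound in (W3): strong convexity of $\Phi_x|_B$ with constant $m_{K,B}>0$ yields strict monotonicity
\begin{align*}
\langle L_x(y_1)-L_x(y_2),\,y_1-y_2\rangle\ \ge\ m_{K,B}\,|y_1-y_2|^2
\end{align*}
a.e.\ on $B$, which forces $L_x$ to be injective on $B$; single-valuedness a.e.\ then follows from Rademacher--Alexandrov differentiability of $\Phi_x$. A countable exhaustion of $T_xM\setminus\{0\}$ by bounded fibers produces the claimed a.e.\ local injectivity. The genuine subtlety--and the main obstacle of the proposition--is that the constants $(m_{K,B},L_{K,B})$ may degenerate as $|B|\to\infty$, so even though $\Phi_x$ ends up strictly convex on each bounded piece, one cannot upgrade the local picture to a global bijection $T_xM\to T_x^*M$; this failure of uniformity at infinity is exactly the departure from the Tonelli regime that WNT is designed to accommodate.

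For (iii), I would apply the Fenchel--Young duality (Rockafellar~\cite[Th.~23.5]{Rockafellar1970}) already used in Proposition~\ref{prop:WNT-equivalence}, which yields the equivalence $\xi\in\partial_y\Phi_x(y)\Leftrightarrow y\in\partial_\xi H(x,\xi)$, rewritten as the identification $(\partial_y\Phi_x)^{-1}=\partial_\xi H(x,\cdot)$ of inverse maximal monotone relations. Since $H(x,\cdot)$ is convex and finite everywhere by (D1)--(D2), it is differentiable almost everywhere, and on that full-measure subset $\partial_\xi H$ reduces to the single-valued gradient $\nabla_\xi H$, which is therefore $L_x^{-1}$ wherever $H$ is smooth, as claimed.
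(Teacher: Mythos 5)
Your proposal is correct and follows essentially the same route as the paper's own proof: monotonicity from convexity of $\Phi_x$, local Lipschitz continuity and strict monotonicity by integrating the two-sided Alexandrov Hessian bounds of (W3) along segments on bounded fibers, and the inverse identification via Fenchel--Young together with a.e.\ differentiability of $H$. Your added remark on the possible degeneration of $(m_{K,B},L_{K,B})$ as the fiber set grows is a sound explanation of why global bijectivity fails, consistent with the paper's (unelaborated) caveat.
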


\begin{proof}
Fix $x\in M$ and consider $\Phi_x(y)=\tfrac 12F(x,y)^2$ on $T_xM\setminus\{0\}$. 
Convexity and finiteness of $\Phi_x$ imply that its subdifferential is a maximal monotone operator; hence for any $y_1,y_2$ and selections $\xi_i\in\partial\Phi_x(y_i)$ one has 
\begin{align*}
\langle \xi_1-\xi_2,\; y_1-y_2\rangle \ge 0.
\end{align*}
On the full--measure differentiability locus of $\Phi_x$ (Rademacher/Alexandrov), $\partial\Phi_x(y)=\{\nabla\Phi_x(y)\}$ and we may write $L_x(y):=\nabla\Phi_x(y)$; the monotonicity inequality becomes
\begin{align*}
\langle L_x(y_1)-L_x(y_2),\; y_1-y_2\rangle \ge 0,
\end{align*}
which is the desired monotonicity. 

To obtain quantitative regularity and local injectivity, restrict to any bounded fiber set $B\subset T_xM$. 
By (W3), Alexandrov second derivatives exist a.e.\ on $B$ and satisfy
$m_B I \le D^2_{yy}\Phi_x \le L_B I$ there. 
Integrating the Hessian along line segments shows that $L_x$ is locally $L_B$--Lipschitz on $B$ up to a null set:
\begin{align*}
|L_x(y_1)-L_x(y_2)| = \Big|\int_0^1 D^2_{yy}\Phi_x\big(y_2+t(y_1-y_2)\big)\,(y_1-y_2)\,dt\Big|
\le L_B\,|y_1-y_2|.
\end{align*}
Likewise, strong convexity yields the strict monotonicity estimate
\begin{align*}
\langle L_x(y_1)-L_x(y_2),\; y_1-y_2\rangle \ge m_B |y_1-y_2|^2
\quad\text{for a.e. }y_1,y_2\in B,
\end{align*}
so $L_x(y_1)=L_x(y_2)$ forces $y_1=y_2$ on the differentiability locus; hence $L_x$ is one--to--one locally on full--measure subsets of $T_xM$ (global bijectivity is not asserted nor needed). 

Finally, convex duality identifies the inverse relation: by Fenchel--Young,
\begin{align*}
(\partial_y\Phi_x)^{-1}=\partial_\xi\Phi_x^*=\partial_\xi H(x,\cdot),
\end{align*}
and at points where $H$ is differentiable the right-hand side is the singleton $\{\nabla_\xi H(x,\xi)\}$, so $L_x^{-1}=(\partial_\xi H)^{-1}$ is single--valued there. 
These facts establish items (i)--(iii) in the statement.
\end{proof}

\begin{remark}[What we \emph{do not} assume]
Unlike the Tonelli case, we do not assume $L_x$ is a global diffeomorphism, nor that $H$ is $C^\infty$ or uniformly strictly convex. All Lipschitz/bi--Lipschitz conclusions are \emph{local on compact fibers} determined by energy bounds.
\end{remark}

\subsection{Operators in the WNT setting: safe definitions}

With $H$ as above, define for $u\in W^{1,p}_{\mathrm{loc}}(M)$
\begin{align*}
\nabla_F u := \partial_\xi H(x,Du)\quad\text{(defined a.e.)},\qquad
\Delta_F u := \mathrm{div}\!\big(\partial_\xi H(x,Du)\big),
\end{align*}
in the weak sense. The map $\xi\mapsto \partial_\xi H(x,\xi)$ is maximal monotone and locally Lipschitz a.e., so $\Delta_F$ is a \textbf{quasilinear} (possibly degenerate) elliptic operator. Any \textbf{Green--type} representation is meaningful only for \textbf{local linearizations} around fixed states on bounded energy sets; global linear superposition does not apply.

\begin{example}[Anisotropic double--phase]\label{ex:double-phase}
Let $g,h$ be smooth positive-definite metrics and $a\in C^\infty(M)$ with $0<a_1\le a(x)\le a_2$, fix $0<\eta\le 1$. Set
\begin{align*}
F(x,y)^2 := \langle g_xy,y\rangle \;+\; a(x)\,\big(\langle h_xy,y\rangle\big)^{1+\eta/2}.
\end{align*}
Then $F$ is continuous on $TM$, $C^{1}$ on $TM\setminus\{0\}$, satisfies $p$--growth with $p=2+\eta$, and its fiber Hessian is positive semidefinite with controlled (non-uniform) ellipticity along directions where $\langle h_xy,y\rangle$ is small. Thus $F$ is WNT and not Tonelli (it fails $1$-homogeneity). The dual $H$ exhibits local $p'$-growth with $p'=(2+\eta)/(1+\eta)$ and is in general only piecewise $C^{1}$; second derivatives exist a.e.\ on bounded dual fibers, which suffices for the localized ellipticity used above.
\end{example}

\section{Analytic Structure and Local Potential Theory for WNT Operators}\label{sec:WNT-operators}
Having established the analytic class of WNT Lagrangians, we now turn to the study of their associated differential operators and local potential theory. In contrast with the classical Tonelli case--where the Legendre map is a global diffeomorphism and the Laplacian is linear--the operator arising from a WNT structure is \emph{quasilinear} and only monotone. Hence, Green--type constructions must be understood in a variational or local sense.

\subsection{Definition and analytic framework}

Let $F$ satisfy the WNT conditions from Section~\ref{sec:From Tonelli to NonTonelli}, and set 
$\Phi_x(y)=\tfrac 12 F(x,y)^2$ with dual Hamiltonian $H(x,\xi)=\Phi_x^*(\xi)$. 
For $u\in W^{1,p}_{\mathrm{loc}}(M)$ we define
\begin{equation}\label{eq:WNT-operator}
\mathcal{A}_F(u):=-\mathrm{div}\!\big(\partial_\xi H(x,Du)\big),
\qquad
\nabla_F u := \partial_\xi H(x,Du).
\end{equation}

\noindent
By Proposition~\ref{prop:WNT-equivalence} and the regularity of the Legendre map (Proposition~\ref{prop:Legendre-regularity}), 
the mapping $\xi\mapsto \partial_\xi H(x,\xi)$ is measurable in $x$, continuous in $\xi$, monotone, and locally Lipschitz on bounded fibers.
Therefor it satisfies the \emph{Carath\'eodory structure} of Leray--Lions type.

\begin{proposition}[Structural assumptions]\label{prop:WNT-LerayLions}
For each compact $K\Subset M$ there exist constants $1<p<\infty$ and $0<c_K\le C_K<\infty$ such that for a.e.\ $x\in K$ and all $\xi_1,\xi_2\in T_x^*M$:
\begin{align*}
\langle \partial_\xi H(x,\xi_1)-\partial_\xi H(x,\xi_2),\xi_1-\xi_2\rangle &\ge 0,\\[2mm]
\langle \partial_\xi H(x,\xi),\xi\rangle &\ge c_K|\xi|^p - C_K,\\[2mm]
|\partial_\xi H(x,\xi)| &\le C_K(1+|\xi|^{p-1}),
\end{align*}
and $\partial_\xi H(x,\xi)$ is locally Lipschitz in $\xi$ on bounded sets.
\end{proposition}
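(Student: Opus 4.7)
The plan is to extract each of the four structural properties from the established WNT primal--dual equivalence (Proposition~\ref{prop:WNT-equivalence}) combined with standard convex analysis applied fiberwise to the dual Hamiltonian $H(x,\cdot)$. The key observation is that $H$ is, by construction, proper, convex, lower semicontinuous with two-sided power-type growth, so the classical subdifferential calculus of Rockafellar supplies monotonicity, growth, and coercivity essentially for free; the localized Lipschitz bound is then read off the Alexandrov Hessian estimate (W3) transported to the dual side via the conjugacy rules.

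First I would record monotonicity of $\xi\mapsto \partial_\xi H(x,\xi)$ as an immediate consequence of the convexity of $H(x,\cdot)$ in (D1); this was already noted in Proposition~\ref{prop:Legendre-regularity}(i) and needs only to be restated in the Carath\'eodory form on $K$. Second, for the growth bound I would use the subgradient inequality: for any $\eta\in\partial_\xi H(x,\xi)$ and any $\zeta$ one has $\eta\cdot\zeta\le H(x,\xi+\zeta)-H(x,\xi)$, and choosing $\zeta$ of magnitude comparable to $(1+|\xi|)$ in the direction of $\eta$ converts the upper power-type bound on $H$ from (D2) into the desired estimate $|\partial_\xi H(x,\xi)|\le C_K(1+|\xi|^{p-1})$. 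Third, coercivity follows from the same subgradient inequality at $\zeta=-\xi$: $\langle\eta,\xi\rangle\ge H(x,\xi)-H(x,0)\ge c_K|\xi|^p-C_K$ by the lower power-type bound on $H$, absorbing the finite value $H(x,0)$ uniformly on $K$ into $C_K$.

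Fourth, the local Lipschitz dependence on $\xi$ over bounded sets is exactly the dual face of (W3): since $\Phi_x$ admits a two-sided Alexandrov Hessian bound $m_{K,B}I\le D^2_{yy}\Phi_x\le L_{K,B}I$ on any bounded primal fiber $B$, the classical duality between strong convexity and Lipschitz gradient (Rockafellar \cite[Ths.~25.5, 26.3]{Rockafellar1970}) yields a Lipschitz gradient for $H(x,\cdot)$ on the corresponding bounded dual patch $\nabla\Phi_x(B)$, with constants controlled by $(m_{K,B},L_{K,B})$; a finite covering of $K$ by such patches produces the asserted local Lipschitz property. Measurability in $x$ is preserved under Fenchel conjugation and carries over from the Carath\'eodory structure in (W1).

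The only real subtlety — and where I would spend the most care — is the exponent bookkeeping: Definition~\ref{def:WNT-dual} states the growth of $H$ in terms of the dual exponent $p'$, whereas Proposition~\ref{prop:WNT-LerayLions} frames the Leray--Lions template directly with the Sobolev exponent of the operator. I would therefore fix the convention at the outset (identifying the Leray--Lions $p$ with the growth exponent of $H$) and then verify that the constants $c_K,C_K$ produced in each step can be made uniform in $x\in K$ by compactness of $K$ and the local Carath\'eodory bounds. Beyond this notational alignment, the proof is a direct translation of already-established WNT properties into the standard Leray--Lions structural framework, with no additional analytical input required.
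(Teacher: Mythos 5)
Your proposal is correct and follows the same overall architecture as the paper's proof: monotonicity from convexity of $H(x,\cdot)$, coercivity from the one-sided subgradient inequality between $\xi$ and $0$ (the paper phrases this as $\varphi'(1^-)\ge\varphi(1)-\varphi(0)$ for $\varphi(t)=H(x,t\xi)$, which is the same inequality), local Lipschitz continuity by transporting the Alexandrov bounds of (W3) to the dual side, and the same relabelling of $p'$ as the Leray--Lions exponent. The one step where you genuinely diverge is the $(p-1)$-growth bound on $\partial_\xi H$: the paper obtains it from a mean-value representation $A(x,\xi)=\int_0^1 D^2_{\xi\xi}H(x,t\xi)\,\xi\,dt$ together with a Hessian growth estimate $\|D^2_{\xi\xi}H(x,\zeta)\|\le L_{K,R}(1+|\zeta|)^{p-2}$, whereas you use only the subgradient inequality $\langle\eta,\zeta\rangle\le H(x,\xi+\zeta)-H(x,\xi)$ with $|\zeta|\sim 1+|\xi|$ chosen along $\eta$. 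Your route is more elementary and arguably more robust, since it needs only the zeroth-order two-sided growth of $H$ from (D2) and avoids the pointwise Hessian growth bound, which the WNT hypotheses supply only on bounded fibers; this is a worthwhile simplification rather than a gap.
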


\begin{proof}
Fix a compact set $K\Subset M$ and write $A(x,\xi):=\partial_\xi H(x,\xi)$. We prove the Leray--Lions structure on $K$ using only the WNT hypotheses from Section~\ref{sec:From Tonelli to NonTonelli} and the primal--dual equivalence of Proposition~\ref{prop:WNT-equivalence}. 
Throughout, constants may depend on $K$ and on bounded fiber ranges but not on $(x,\xi)$ otherwise.

\medskip\noindent\emph{Monotonicity.}
For a.e.\ $x\in K$, the map $\xi\mapsto H(x,\xi)$ is proper, convex and l.s.c. Hence its subdifferential is maximal monotone, and where $H$ is differentiable (a.e.\ in $\xi$) we have the standard pointwise inequality
\begin{align*}
\langle A(x,\xi_1)-A(x,\xi_2),\,\xi_1-\xi_2\rangle \ge 0
\qquad (\xi_1,\xi_2\in T_x^*M).
\end{align*}

\medskip\noindent\emph{Growth and coercivity.}
By duality and (W2), after possibly relabelling the exponent by its H\"older conjugate (which is harmless for our purposes and standard in Leray--Lions theory), we may assume that for some $1<p<\infty$ there exist $c_0,C_0>0$ such that
\begin{equation}\label{eq:H-p-growth}
c_0\,|\xi|^{p}-C_0 \ \le\ H(x,\xi)\ \le\ C_0\,(1+|\xi|^{p})
\qquad \text{for a.e.\ }x\in K,\ \forall\,\xi\in T_x^*M.
\end{equation}
Consider the one-dimensional convex function $\varphi(t):=H\big(x,t\xi\big)$, $t\ge 0$. 
Then \eqref{eq:H-p-growth} gives $\varphi(t)\ge c_0 t^{p}|\xi|^{p}-C_0$ and $\varphi(t)\le C_0(1+t^{p}|\xi|^{p})$. 
By convexity, $\varphi'(1^-)$ exists and equals $\langle A(x,\xi),\xi\rangle$. 
From the lower bound we obtain, for some $c_1,C_1>0$,
\begin{align*}
\langle A(x,\xi),\xi\rangle=\varphi'(1^-)\ \ge\ c_1\,|\xi|^{p}-C_1.
\end{align*}
This is the desired coercivity.

For the upper growth of $A$, we use the mean-value representation along the ray $t\mapsto t\xi$ and the Hessian bounds supplied by the local strong convexity of $H$ on bounded fibers (the dual content of (W3)): for $|\xi|\le R$ and a.e. $|\zeta|\le R$,
\begin{align*}
A(x,\xi)=\int_0^1 D^2_{\xi\xi}H\big(x,t\xi\big)\,\xi\,dt,
\qquad\|D^2_{\xi\xi}H(x,\zeta)\|\ \le\ L_{K,R}\,\big(1+|\zeta|\big)^{p-2},
\end{align*}
whence
\begin{align*}
|A(x,\xi)| \ \le\ \int_0^1 L_{K,R}\,\big(1+t^{p-2}|\xi|^{p-2}\big)\,|\xi|\,dt
\ \le\ C_K\,(1+|\xi|^{p-1}).
\end{align*}
Combining the two displays yields the asserted coercivity and the $(p-1)$--growth of $A$.

\medskip\noindent\emph{Local Lipschitz continuity on bounded fibers.}
By (W3) and Proposition~\ref{prop:WNT-equivalence}, on each bounded dual fiber $\{|\xi|\le R\}$ the Hessian $D^2_{\xi\xi}H(x,\xi)$ exists a.e.\ and is bounded by $L_{K,R}$. 
Thus for $|\xi_1|,|\xi_2|\le R$,
\begin{align*}
|A(x,\xi_1)-A(x,\xi_2)|&=\Big|\int_0^1 D^2_{\xi\xi}H\big(x,\xi_2+t(\xi_1-\xi_2)\big)\,(\xi_1-\xi_2)\,
dt\Big|\\
 &\le\ L_{K,R}\,|\xi_1-\xi_2|,
\end{align*}
i.e., $A(x,\cdot)$ is locally Lipschitz on bounded sets uniformly in a.e.\ $x\in K$.

\medskip
Putting the three pieces together we conclude that, for a suitable $1<p<\infty$ and $0<c_K\le C_K<\infty$ depending only on $K$ (and the WNT constants localized to $K$), the operator $A(x,\cdot)$ satisfies the claimed monotonicity, coercivity, $(p-1)$--growth, and local Lipschitz properties. 
This is precisely the Leray--Lions structure on $K$.
\end{proof}

\subsection{Existence and uniqueness of weak solutions}

We work on a bounded domain $\Omega\Subset M$ with Lipschitz boundary.  
Given $f\in W^{-1,p'}(\Omega)$ and $\varphi\in W^{1,p}(\Omega)$, consider the Dirichlet problem
\begin{equation}\label{eq:dirichlet-WNT}
\begin{cases}
-\mathrm{div}\!\big(\partial_\xi H(x,Du)\big)=f & \text{in }\Omega,\\[1mm]
u=\varphi & \text{on }\partial\Omega.
\end{cases}
\end{equation}

\begin{theorem}[Existence and uniqueness]\label{thm:dirichlet}
Under the assumptions of Proposition~\ref{prop:WNT-LerayLions}, the Dirichlet problem~\eqref{eq:dirichlet-WNT} admits a unique weak solution $u\in \varphi + W_0^{1,p}(\Omega)$ satisfying
\begin{align*}
\int_\Omega \langle \partial_\xi H(x,Du), Dv\rangle\,dx
= \langle f,v\rangle \qquad \forall v\in W_0^{1,p}(\Omega).
\end{align*}
Moreover, if $u_1,u_2$ are two weak solutions corresponding to data $(f_1,\varphi_1)$ and $(f_2,\varphi_2)$, then
\begin{align*}
\int_\Omega \langle \partial_\xi H(x,Du_1)-\partial_\xi H(x,Du_2),Du_1-Du_2\rangle\,dx \ge 0,
\end{align*}
and equality holds only if $Du_1=Du_2$ a.e.
\end{theorem}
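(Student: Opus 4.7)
The plan is to cast the problem as a monotone operator equation on the reflexive Banach space $W_0^{1,p}(\Omega)$ and apply the Browder--Minty theorem, with uniqueness obtained by exhausting $\Omega$ with the bounded-fiber sets on which Proposition~\ref{prop:WNT-LerayLions} gives \emph{quantitative} strict monotonicity. First I would reduce to homogeneous boundary data by translation: pick any $\tilde\varphi\in W^{1,p}(\Omega)$ extending the boundary datum and let $w:=u-\tilde\varphi$, so that \eqref{eq:dirichlet-WNT} is equivalent to finding $w\in W_0^{1,p}(\Omega)$ with $\mathcal{T}(w)=\tilde f$ in $W^{-1,p'}(\Omega)$, where
\begin{align*}
\langle \mathcal{T}(w),v\rangle := \int_\Omega \langle A(x,Dw+D\tilde\varphi),\,Dv\rangle\,dx,\qquad A(x,\xi):=\partial_\xi H(x,\xi),
\end{align*}
and $\tilde f\in W^{-1,p'}(\Omega)$ absorbs the inhomogeneous lift.

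Next I would verify the four structural hypotheses of Browder--Minty for $\mathcal{T}$. Well-definedness and boundedness $W_0^{1,p}\to W^{-1,p'}$ follow from the $(p-1)$-growth $|A(x,\xi)|\le C_K(1+|\xi|^{p-1})$ of Proposition~\ref{prop:WNT-LerayLions} combined with H\"older's inequality. Monotonicity of $\mathcal{T}$ is inherited pointwise from the monotonicity of $A(x,\cdot)$. Coercivity follows by writing $\langle A(x,Dw+D\tilde\varphi),\,Dw\rangle = \langle A(x,Dw+D\tilde\varphi),\,Dw+D\tilde\varphi\rangle - \langle A,D\tilde\varphi\rangle$, applying the coercivity $\langle A(x,\xi),\xi\rangle\ge c_K|\xi|^p-C_K$, and absorbing the cross-term with Young's inequality, which yields $\langle \mathcal{T}(w),w\rangle/\|w\|_{W_0^{1,p}}\to\infty$. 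Hemicontinuity, i.e.\ continuity of $t\mapsto \langle \mathcal{T}(w+tv),v\rangle$ on $\mathbb{R}$, follows from the a.e.\ continuity of $A(x,\cdot)$ together with a dominated-convergence argument based on the growth envelope. The Browder--Minty theorem then produces a solution $w\in W_0^{1,p}(\Omega)$ and hence $u=w+\tilde\varphi\in\varphi+W_0^{1,p}(\Omega)$.

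For uniqueness, let $u_1,u_2$ be two solutions with identical data $(f,\varphi)$. Subtracting the two weak formulations and testing with $v=u_1-u_2\in W_0^{1,p}(\Omega)$ gives
\begin{align*}
\int_\Omega \langle A(x,Du_1)-A(x,Du_2),\,Du_1-Du_2\rangle\,dx=0.
\end{align*}
Global monotonicity makes the integrand nonnegative a.e., so it vanishes a.e. To upgrade this to $Du_1=Du_2$ a.e.\ I would introduce the exhausting sublevel sets $E_R:=\{x\in\Omega:\max(|Du_1(x)|,|Du_2(x)|)\le R\}$. On each $E_R$ the localized Hessian bound from (W3) (transported to the dual side by Proposition~\ref{prop:WNT-equivalence}) yields the quantitative strict monotonicity $\langle A(x,\xi_1)-A(x,\xi_2),\xi_1-\xi_2\rangle\ge m_{K,R}|\xi_1-\xi_2|^2$, forcing $Du_1=Du_2$ a.e.\ on $E_R$; letting $R\to\infty$ gives $Du_1=Du_2$ a.e.\ on $\Omega$, and Poincar\'e's inequality on $W_0^{1,p}(\Omega)$ concludes $u_1=u_2$. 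The comparison inequality for distinct data is then immediate from pointwise monotonicity of $A(x,\cdot)$, and its rigidity clause follows by the same $E_R$-exhaustion argument.

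The main obstacle I anticipate is the uniqueness step, not existence. The WNT class provides strong convexity only \emph{on bounded fibers}, so one cannot invoke a global strict monotonicity inequality of the form $\langle A(x,\xi_1)-A(x,\xi_2),\xi_1-\xi_2\rangle\ge m|\xi_1-\xi_2|^2$ valid on all of $T_x^*M$. The argument must therefore be genuinely local in $\xi$ and rely on the fact that any finite-energy pair $(u_1,u_2)$ has gradients in $L^p(\Omega)$, whose superlevel sets have vanishing measure as $R\to\infty$; this is precisely where the structural tension between dual $p'$--growth and only-local ellipticity plays out, and where the WNT framework's design is essential.
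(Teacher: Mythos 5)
Your proposal is correct and follows essentially the same route as the paper: reduction to homogeneous boundary data, verification of the Leray--Lions/Browder--Minty hypotheses (boundedness, monotonicity, coercivity, hemicontinuity) via Proposition~\ref{prop:WNT-LerayLions}, and uniqueness by testing the difference of solutions with itself. Your treatment of the uniqueness step is in fact more careful than the paper's one-line conclusion, since you correctly observe that plain monotonicity only gives a vanishing nonnegative integrand and supply the missing exhaustion by bounded-fiber sublevel sets $E_R$ where the localized strict monotonicity of (W3) applies.
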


\begin{proof}
We adapt the standard Minty--Browder framework for monotone operators. Let $V=W_0^{1,p}(\Omega)$ and fix $\tilde\varphi\in W^{1,p}(\Omega)$ with $\operatorname{Tr}\tilde\varphi=\operatorname{Tr}\varphi$.  
Seek $u=\tilde\varphi+w$ with $w\in V$. Define the operator $\mathfrak{T}:V\to V'$ by
\begin{align*}
\langle \mathfrak{T}(w),\psi\rangle :=
\int_\Omega \langle \partial_\xi H(x,D\tilde\varphi+Dw), D\psi\rangle\,dx.
\end{align*}
By Proposition~\ref{prop:WNT-LerayLions}, $\mathfrak{T}$ is well defined, bounded, coercive, and monotone:
\begin{align*}
&\langle \mathfrak{T}(w),w\rangle
=\int_\Omega \langle \partial_\xi H(x,D\tilde\varphi+Dw),Dw\rangle\,dx
\ge c_K\|Dw\|_{L^p}^p - C_K,\\[1mm]
&\langle \mathfrak{T}(w_1)-\mathfrak{T}(w_2), w_1-w_2\rangle
\ge 0.
\end{align*}
Hemicontinuity follows from the Carath\'eodory property.  
By the Minty--Browder theorem, $\mathfrak{T}$ is surjective; hence there exists $w\in V$ such that $\mathfrak{T}(w)=f-\mathrm{div}(\partial_\xi H(x,D\tilde\varphi))$. Setting $u=\tilde\varphi+w$ yields a weak solution to~\eqref{eq:dirichlet-WNT}. Uniqueness follows by testing the difference of two solutions with their difference and using monotonicity, which implies $Du_1=Du_2$ a.e.
\end{proof}

\begin{remark}
This theorem places $\mathcal{A}_F=-\mathrm{div}(\partial_\xi H(x,Du))$ within the class of Leray--Lions operators.  
In particular, $\mathcal{A}_F:W^{1,p}_0(\Omega)\to W^{-1,p'}(\Omega)$ is maximal monotone, strictly coercive, and continuous on bounded sets.
\end{remark}

\subsection{Weak Green potentials and local parametrices}

For linear elliptic operators, the Green kernel $G(x,y)$ provides a linear superposition formula. In the WNT quasilinear case, linear superposition fails, yet a variational counterpart can still be defined.

\begin{definition}[Weak Green potential]
Let $y\in M$.  
A function $G_y\in W^{1,p}_{\mathrm{loc}}(M\setminus\{y\})$ is a \emph{weak Green potential centered at $y$} if
\begin{align*}
-\mathrm{div}\!\big(\partial_\xi H(x,DG_y)\big)=\delta_y
\quad\text{in the sense of distributions,}
\end{align*}
and
\begin{align*}
\int_{M} \langle \partial_\xi H(x,DG_y),DG_y\rangle\,d\mu <\infty.
\end{align*}
\end{definition}

\begin{proposition}[Existence and locality]
If $H(x,\xi)$ satisfies local uniform ellipticity and $C^{1,1}$ regularity on a bounded chart $U\Subset M$, then for each $y\in U$ there exists a weak Green potential $G_y$ unique up to additive constants. Moreover, $G_y$ is locally comparable to the Euclidean fundamental solution under the local metric.
\end{proposition}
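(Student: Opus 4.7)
The plan is to construct $G_y$ as the limit of finite-energy solutions to mollified-source Dirichlet problems, pass to the limit using the Boccardo--Murat monotonicity-truncation scheme, and finally obtain the two-sided Euclidean comparison by freezing coefficients at $y$ and invoking Harnack-barrier arguments. First, I would regularize the source by a standard mollifier $f_\varepsilon(x)=\varepsilon^{-n}\rho((x-y)/\varepsilon)$ with $\rho\in C_c^\infty(\mathbb R^n)$, $\int\rho=1$, and $\varepsilon<\operatorname{dist}(y,\partial U)$. Each $f_\varepsilon\in L^\infty(U)\subset W^{-1,p'}(U)$ is admissible in Theorem~\ref{thm:dirichlet}, yielding a unique weak solution $u_\varepsilon\in W_0^{1,p}(U)$. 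Applying the Boccardo--Gallou\"et truncation device---testing with $T_k(u_\varepsilon)=\min(\max(u_\varepsilon,-k),k)$ and with $u_\varepsilon/(1+|u_\varepsilon|)^\alpha$ for suitable $\alpha>0$---together with the coercivity bound from Proposition~\ref{prop:WNT-LerayLions} and $\|f_\varepsilon\|_{L^1}\le 1$, produces uniform Marcinkiewicz estimates on $u_\varepsilon$ and $Du_\varepsilon$ in the weak Lebesgue scales $L^{n(p-1)/(n-p)}$ and $L^{n(p-1)/(n-1)}$. Reflexivity then extracts a subsequence $u_\varepsilon\to G_y$ weakly in $W^{1,q}_{\mathrm{loc}}(U\setminus\{y\})$ for every $q<n(p-1)/(n-1)$ and pointwise a.e.

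The principal obstacle is passing to the limit in the nonlinear flux $\partial_\xi H(x,Du_\varepsilon)$ despite only subcritical weak gradient convergence. I would invoke the Boccardo--Murat scheme: test the difference of two approximations with $T_k(u_\varepsilon-u_{\varepsilon'})$ and exploit the strict monotonicity induced by local uniform ellipticity, which gives $\langle\partial_\xi H(x,\xi_1)-\partial_\xi H(x,\xi_2),\xi_1-\xi_2\rangle\ge\lambda|\xi_1-\xi_2|^2$ on bounded fibers, to upgrade weak to a.e.\ convergence of $Du_\varepsilon\to DG_y$. Combined with the $(p-1)$-growth from Proposition~\ref{prop:WNT-LerayLions} and Vitali's theorem, this yields $-\mathrm{div}(\partial_\xi H(x,DG_y))=\delta_y$ distributionally on $U$, while finite energy on $U\setminus B_r(y)$ for every $r>0$ follows from the truncation bounds. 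The hard part is precisely this a.e.\ gradient convergence: direct Minty--Browder reasoning is blocked by the measure-valued source, and the Boccardo--Murat technique is essentially the only robust workaround.

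For uniqueness, take two candidates $G_y^{(1)}, G_y^{(2)}$ and test the difference of their weak formulations against $\zeta\cdot(G_y^{(1)}-G_y^{(2)})$, where $\zeta$ is a cutoff supported outside a shrinking ball around $y$ and inside $U$. Monotonicity bounds the leading term from below by $\lambda\int\zeta|DG_y^{(1)}-DG_y^{(2)}|^2$, while the remainder involving $D\zeta$ vanishes in the limit thanks to the $W^{1,q}_{\mathrm{loc}}$ estimates established above, forcing $D(G_y^{(1)}-G_y^{(2)})\equiv 0$ a.e., so the two Green potentials differ by a constant.

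For the local comparability, I would freeze the coefficients at $y$: by $C^{1,1}$ regularity, $\partial_\xi H(x,\xi)=\partial_\xi H(y,\xi)+O(|x-y|)$ uniformly in $\xi$ on bounded fibers, so in a small ball $B_r(y)\subset U$ the operator is an $O(r)$-perturbation of the constant-coefficient operator $-\mathrm{div}(\partial_\xi H(y,\cdot))$. Under the affine change of variables diagonalizing $D^2_{\xi\xi}H(y,0)$, the frozen operator is conjugate to a constant-coefficient elliptic operator whose fundamental solution $\widetilde G_y(x)$ is equivalent to the Euclidean $|x-y|^{2-n}$ in the induced anisotropic metric. Serrin--Trudinger Harnack inequalities applied to the ratio $G_y/\widetilde G_y$, combined with barrier comparison using $(1\pm c\,r^\theta)\widetilde G_y$ as sub- and supersolutions on punctured balls $B_r(y)\setminus B_{r/2}(y)$, then yield the desired two-sided comparison $c_1\widetilde G_y(x)\le G_y(x)\le c_2\widetilde G_y(x)$ on a neighborhood of $y$, closing the proof.
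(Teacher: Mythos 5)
Your proposal follows the same overall route as the paper: mollify $\delta_y$, solve the regularized Dirichlet problems via the monotone-operator existence theory (Theorem~\ref{thm:dirichlet}), extract a limit, and obtain the near-pole comparison by freezing coefficients at $y$ and using Harnack/barrier arguments together with De Giorgi--Moser. The substantive difference is that you treat carefully the one step the paper elides: the paper asserts that ``weak compactness yields $G_y^\varepsilon\rightharpoonup G_y$ satisfying the distributional identity,'' but weak convergence of $Du_\varepsilon$ alone does not permit passing to the limit in the nonlinear flux $\partial_\xi H(x,Du_\varepsilon)$. Your invocation of the Boccardo--Gallou\"et truncation estimates (giving uniform bounds only in the Marcinkiewicz scale $L^{n(p-1)/(n-1)}$ for the gradients, which is the correct scale for $L^1$/measure data, rather than the $W^{1,p}$ bound the paper informally cites) and of the Boccardo--Murat a.e.\ gradient convergence scheme is exactly what is needed to close this gap, and your uniqueness argument via monotonicity testing is more explicit than the paper's appeal to barriers. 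Two caveats you should make explicit: (i) the strict monotonicity $\langle\partial_\xi H(x,\xi_1)-\partial_\xi H(x,\xi_2),\xi_1-\xi_2\rangle\ge\lambda|\xi_1-\xi_2|^2$ holds on bounded fibers only, yet near the pole the gradients leave every bounded fiber set, so you must either read the hypothesis ``local uniform ellipticity'' as genuine two-sided quadratic control uniformly in $\xi$ (effectively $p=2$), or localize the monotonicity argument away from the pole; and (ii) your frozen operator $-\mathrm{div}(\partial_\xi H(y,D\cdot))$ is still nonlinear unless $H(y,\cdot)$ is quadratic, so the claim that it is ``conjugate to a constant-coefficient elliptic operator'' with fundamental solution $|x-y|^{2-n}$ again presupposes the quadratic regime; for general $p$ the model profile would be $|x-y|^{(p-n)/(p-1)}$. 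Under the quadratic reading of the hypotheses, which is evidently the paper's intent, your argument is complete and in fact more rigorous than the original.
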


\begin{proof}
In local coordinates the operator takes the form 
\begin{align*}
\mathcal{L}_F u=-\partial_i(A_i(x,Du))
\end{align*}
with $A_i(x,\xi)=\partial_{\xi_i}H(x,\xi)$ satisfying the structure conditions of Proposition~\ref{prop:WNT-LerayLions}.  
For fixed $y$, one constructs $G_y$ by the monotone operator method with right-hand side $\delta_y$: take approximations $\delta_y^\varepsilon\in L^{p'}$ and solve $\mathcal{L}_F G_y^\varepsilon=\delta_y^\varepsilon$ with zero boundary data on $U$.  
The sequence $\{G_y^\varepsilon\}$ is uniformly bounded in $W^{1,p}_{\mathrm{loc}}(U\setminus\{y\})$ by the energy inequality; weak compactness yields $G_y^\varepsilon\rightharpoonup G_y$ satisfying the distributional identity.  
Local comparison with harmonic barriers and the De~Giorgi–Moser theory yields the near-pole estimates and uniqueness up to constants.
\end{proof}

\subsection{Local linearization and parametrix}

Although $\mathcal{A}_F$ is nonlinear, it admits a linearization around any smooth map $u$, 
\begin{align*}
D\mathcal{A}_F(u)[\varphi]
=-\mathrm{div}\!\big(D^2_{\xi\xi}H(x,Du)\,D\varphi\big),
\end{align*}
which is a uniformly elliptic divergence-form operator wherever $Du$ remains in a compact fiber set with bounds 
$0<m\le D^2_{\xi\xi}H(x,Du)\le L$.

\begin{theorem}[Local Green parametrix]\label{thm:parametrix}
Let $U\Subset M$ be a Lipschitz chart and $u\in W^{1,p}_{\mathrm{loc}}(U)$ with $|Du|\le R$ a.e. Then there exists a kernel $G_u(x,y)$ satisfying, for each $y\in U$,
\begin{align*}
-\mathrm{div}_x\!\big(D^2_{\xi\xi}H(x,Du(x))\,D_xG_u(x,y)\big)
&=\delta_y \quad\text{in }\mathcal{D}'(U),\\[1mm]
G_u(\cdot,y)&=0\quad\text{on }\partial U,
\end{align*}
and near the pole
\begin{align*}
c_1|x-y|^{2-n}\le G_u(x,y)\le c_2|x-y|^{2-n}\qquad (n\ge 3),
\end{align*}
with logarithmic estimates for $n=2$.  
Constants depend only on $(m,L)$ and the geometry of $U$.
\end{theorem}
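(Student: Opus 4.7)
The plan is to reduce the statement to the classical Littman--Stampacchia--Weinberger / Gr\"uter--Widman theory of Green's functions for linear divergence--form elliptic operators with bounded measurable coefficients. The key observation is that, under the hypothesis $|Du|\le R$ a.e., the coefficient field $A(x) := D^2_{\xi\xi}H(x,Du(x))$ is defined a.e.\ on $U$, and the dual form of (W3) from Proposition~\ref{prop:WNT-equivalence} (applied on the bounded dual fiber corresponding to $|Du|\le R$) supplies two-sided ellipticity bounds $m\,I \le A(x) \le L\,I$ with $m,L$ depending only on $R$, $\overline U$, and the WNT data. Thus the linearized operator $\mathcal{L}v := -\mathrm{div}(A(x)\,Dv)$ is linear, divergence--form, and uniformly elliptic with $L^\infty$ coefficients; measurability of $A$ follows from measurability of $Du$ and the Carath\'eodory dependence of $D^2_{\xi\xi}H$ on $(x,\xi)$.

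I would construct $G_u(\cdot,y)$ by approximation. Fix $y\in U$, let $\rho_\varepsilon\in L^2(U)$ be a mollifier of $\delta_y$ supported in $B_\varepsilon(y)$ with $\int \rho_\varepsilon=1$, and solve the linear Dirichlet problem $\mathcal{L} G_u^\varepsilon = \rho_\varepsilon$ with $G_u^\varepsilon\in H^1_0(U)$ by Lax--Milgram. Caccioppoli estimates on dyadic annular shells around $y$, combined with the De~Giorgi--Nash--Moser $L^\infty$--bound, give uniform $W^{1,2}_{\mathrm{loc}}(U\setminus\{y\})$ control on $\{G_u^\varepsilon\}$ together with uniform $L^q(U)$ bounds for $q<n/(n-2)$ (resp.\ any $q<\infty$ when $n=2$). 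Extracting a weak limit yields $G_u(\cdot,y)\in W^{1,q}_{\mathrm{loc}}(U)$ satisfying $\mathcal{L} G_u(\cdot,y)=\delta_y$ distributionally with zero trace on $\partial U$; uniqueness up to an additive homogeneous solution (hence uniqueness under the Dirichlet condition) follows from the monotonicity argument already used in Theorem~\ref{thm:dirichlet}.

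For the two-sided pole asymptotics I would follow the Gr\"uter--Widman strategy. The upper bound $G_u(x,y)\le c_2|x-y|^{2-n}$ comes from Moser iteration on annular shells $B_{2r}(y)\setminus B_r(y)$: the dyadic rescaling $z\mapsto y+r\,z$ converts the problem into a uniformly elliptic one on a fixed annulus with the same $(m,L)$, whence Moser's $L^\infty$--bound against an $L^2$ norm produces the asserted decay. The matching lower bound comes from the weak Harnack inequality applied on the same shells, combined with the distributional normalization $\int_{\partial B_r(y)} A\,DG_u\cdot\nu\,d\mathcal{H}^{n-1} = -1$ valid for a.e.\ small $r$, which pins down the correct order of magnitude of the annular mean; chaining the Harnack estimate over dyadic scales completes the comparison with $|x-y|^{2-n}$. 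For $n=2$ the identical method gives the logarithmic two-sided bound. All constants depend only on $n$, $(m,L)$, and $\mathrm{diam}(U)$.

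The main obstacle is not existence (standard Leray--Lions / Lax--Milgram) but the uniformity of the pole estimates given that $A(\cdot)=D^2_{\xi\xi}H(\cdot,Du(\cdot))$ is merely bounded measurable: since $Du$ is only $L^\infty$ and $H$ is only locally $C^{1,1}$ in $\xi$, one cannot freeze $A$ at the pole and compare with a constant--coefficient fundamental solution. Handling this is precisely the role of Moser iteration and the weak Harnack inequality, both of which depend only on the ellipticity constants $(m,L)$ and are invariant under dyadic rescaling; they therefore preserve the sharp $|x-y|^{2-n}$ rate without any continuity or higher regularity hypothesis on $D^2_{\xi\xi}H$. A secondary, purely technical point is a careful exhaustion argument to show that $G_u(\cdot,y)$ inherits the zero boundary trace from the approximants $G_u^\varepsilon$; this is handled by uniform $H^1$--bounds off a neighborhood of $y$ and compact trace embedding on $\partial U$.
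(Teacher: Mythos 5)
Your construction of $G_u(\cdot,y)$ by mollifying the pole, solving with Lax--Milgram, and extracting a weak limit via Caccioppoli estimates away from $y$ is exactly the route taken in the paper, and your proposal is correct. Where you genuinely diverge is in the two-sided pole asymptotics: the paper builds barriers from the \emph{frozen-coefficient} fundamental solution $\Psi_r$ of $-\mathrm{div}(\mathbf{A}_u(y)\nabla\cdot)$ and compares $G(\cdot,y)$ with auxiliary solutions $w_r$ on the annular domains $U\setminus\overline{B_r(y)}$, whereas you follow the Gr\"uter--Widman scheme (Moser iteration on dyadic shells for the upper bound; weak Harnack plus the flux normalization $\int_{\partial B_r}A\,DG_u\cdot\nu=-1$ chained over scales for the lower bound). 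Your approach is the more robust of the two in this setting, and your stated reason is exactly right: with $A(x)=D^2_{\xi\xi}H(x,Du(x))$ merely bounded and measurable, the frozen-coefficient comparison produces a correction term driven by $(\mathbf{A}_u(x)-\mathbf{A}_u(y))\nabla\Psi_r$, which cannot be controlled near the pole without a modulus of continuity (e.g.\ Dini) on the coefficients -- a hypothesis the theorem does not grant and which the paper's claim that ``no additional regularity of $\mathbf{A}_u$ is required'' quietly elides. The Moser/weak-Harnack route depends only on $(m,L)$ and scale invariance and therefore delivers the sharp $|x-y|^{2-n}$ rate (and the $\log$ rate for $n=2$) under the stated hypotheses. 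The only points worth making explicit in a write-up are the nonnegativity of $G_u(\cdot,y)$ (inherited from the approximants via the maximum principle, and needed before invoking weak Harnack) and the fact that the lower bound is asserted only for $|x-y|$ small relative to $\operatorname{dist}(y,\partial U)$, since $G_u$ vanishes on $\partial U$; both are consistent with the ``near the pole'' phrasing of the statement.
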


\begin{proof}[Proof of Theorem~\ref{thm:parametrix}]
Work in a fixed Lipschitz chart $U\Subset M$ and fix $y\in U$. 
Let $u\in W^{1,p}_{\mathrm{loc}}(U)$ be such that $|Du|\le R$ a.e.\ on $U$. By the WNT localization (Section~\ref{sec:From Tonelli to NonTonelli} and Proposition~\ref{prop:WNT-equivalence}), the linearized coefficient
\begin{align*}
\mathbf{A}_u(x):=D^2_{\xi\xi}H\big(x,Du(x)\big)
\end{align*}
exists for a.e.\ $x\in U$, is measurable and bounded, and satisfies the uniform ellipticity
\begin{equation}\label{eq:unif-ell}
m\,|\zeta|^2 \ \le\ \langle \mathbf{A}_u(x)\zeta,\zeta\rangle \ \le\ L\,|\zeta|^2
\qquad\text{for a.e.\ }x\in U,\ \forall\,\zeta\in\mathbb{R}^n,
\end{equation}
with positive constants $m,L$ depending only on $U$ and the bound $R$. We construct the Green kernel $G_u(\cdot,y)$ for the divergence-form operator
\begin{align*}
\mathcal{L}_u \phi := -\mathrm{div}\!\big(\mathbf{A}_u(x)\nabla \phi\big)\quad\text{on }U,
\end{align*}
with homogeneous Dirichlet boundary condition.

\emph{Approximation of the pole.}
Choose a standard family $\{\rho_\varepsilon\}_{\varepsilon>0}\subset C_c^\infty(U)$ with $\rho_\varepsilon\ge 0$, $\int_U\rho_\varepsilon=1$, and $\operatorname{supp}\rho_\varepsilon\subset B_\varepsilon(y)\Subset U$. 
For each $\varepsilon$ we consider the Dirichlet problem
\begin{equation}\label{eq:eps-prob}
\begin{cases}
-\mathrm{div}\!\big(\mathbf{A}_u(x)\nabla G^\varepsilon\big)=\rho_\varepsilon & \text{in } U,\\
G^\varepsilon=0 & \text{on } \partial U .
\end{cases}
\end{equation}
Define the bilinear form $a(\phi,\psi):=\displaystyle\int_U \langle \mathbf{A}_u(x)\nabla\phi,\nabla\psi\rangle\,dx$ on $H_0^1(U)$. 
By \eqref{eq:unif-ell}, $a(\cdot,\cdot)$ is bounded and coercive on $H_0^1(U)$, hence by the Lax--Milgram theorem there exists a unique $G^\varepsilon\in H_0^1(U)$ solving \eqref{eq:eps-prob}, with
\begin{align*}
a(G^\varepsilon,G^\varepsilon)=\int_U \rho_\varepsilon\,G^\varepsilon\,dx \le \|\rho_\varepsilon\|_{H^{-1}(U)}\|G^\varepsilon\|_{H_0^1(U)}.
\end{align*}
Coercivity yields the \emph{global} energy estimate $\|\nabla G^\varepsilon\|_{L^2(U)}\le C$ with $C=C(U,m,L)$ independent of $\varepsilon$.

\emph{Local estimates away from the pole.}
Fix $V\Subset U$ with $y\notin \overline{V}$. Testing the weak equation for $G^\varepsilon$ with $\eta^2 G^\varepsilon$, where $\eta\in C_c^\infty(U)$ equals $1$ on $V$ and $|\nabla\eta|\le C \operatorname{dist}(V,\partial U\cup\{y\})^{-1}$, and using Cauchy--Schwarz together with \eqref{eq:unif-ell} gives the Caccioppoli inequality
\begin{align*}
\int_V |\nabla G^\varepsilon|^2 \,dx \ \le\ C(V)\int_{\operatorname{supp}\nabla\eta} |G^\varepsilon|^2\,dx .
\end{align*}
A standard local Poincar\'e--Sobolev argument then yields a uniform bound 
$\|G^\varepsilon\|_{H^1(V)}\le C(V)$ independent of $\varepsilon$. 
In particular, up to a subsequence (not relabelled),
\begin{align*}
G^\varepsilon \rightharpoonup G \ \ \text{weakly in }H^1(V)\quad\text{and}\quad 
G^\varepsilon \to G \ \ \text{strongly in }L^2(V),
\end{align*}
for every $V\Subset U\setminus\{y\}$.

\emph{Passage to the limit and distributional identity.}
Let $\phi\in C_c^\infty(U)$ be arbitrary. Multiplying \eqref{eq:eps-prob} by $\phi$ and integrating by parts gives
\begin{align*}
\int_U \langle \mathbf{A}_u(x)\nabla G^\varepsilon,\nabla\phi\rangle\,dx \ =\ \int_U \rho_\varepsilon(x)\,\phi(x)\,dx .
\end{align*}
By weak convergence of $\nabla G^\varepsilon$ on compact sets away from $y$ and the fact that $\rho_\varepsilon\to \delta_y$ in $\mathcal{D}'(U)$, letting $\varepsilon\downarrow 0$ yields
\begin{align*}
\int_U \langle \mathbf{A}_u(x)\nabla G,\nabla\phi\rangle\,dx \ =\ \phi(y),
\end{align*}
i.e. $G(\cdot,y)$ satisfies
\begin{align*}
-\mathrm{div}_x\!\big(\mathbf{A}_u(x)\nabla_x G(x,y)\big)=\delta_y
\quad\text{in }\mathcal{D}'(U),\qquad G(\cdot,y)\big|_{\partial U}=0,
\end{align*}
with $G(\cdot,y)\in H^1_{\mathrm{loc}}(U\setminus\{y\})$. 
By uniqueness of limits, $G$ does not depend on the chosen approximating sequence $\{\rho_\varepsilon\}$; also, the construction gives $G(\cdot,y)\ge 0$ by the maximum principle for divergence--form elliptic equations (apply the weak formulation with negative part and use \eqref{eq:unif-ell}).

\emph{Near--pole behaviour.}
Fix $r_0>0$ so small that $\overline{B_{r_0}(y)}\Subset U$. 
For $0<r<r_0$ let $E_r:=U\setminus \overline{B_r(y)}$ and consider the auxiliary problems
\begin{align*}
\begin{cases}
-\mathrm{div}\!\big(\mathbf{A}_u(x)\nabla w_r\big)=0 & \text{in } E_r,\\
w_r=0 & \text{on } \partial U,\\
w_r=\Psi_r & \text{on } \partial B_r(y),
\end{cases}
\end{align*}
where $\Psi_r$ is the restriction to $\partial B_r(y)$ of the constant--coefficient fundamental solution for the frozen matrix $\mathbf{A}_u(y)$:
\begin{align*}
\Psi_r(x)=
\begin{cases}
c_{n}(\mathbf{A}_u(y))\,r^{2-n}, & n\ge 3,\\
-\frac{1}{2\pi}\sqrt{\det \mathbf{A}_u(y)}\,\log r, & n=2.
\end{cases}
\end{align*}
By the minimization of the Dirichlet energy $\int_{E_r}\langle \mathbf{A}_u\nabla \cdot,\nabla \cdot\rangle$ in the class $\{v\in H^1(E_r): v|_{\partial U}=0,\ v|_{\partial B_r}=\Psi_r\}$ the function $w_r$ exists and is unique. 
Considering the difference $G(\cdot,y)-w_r$ and testing its equation against cutoffs supported in $E_r$, a comparison principle shows that there exist constants $c_1,c_2>0$ depending only on $(m,L)$ and the Lipschitz character of $U$ such that, for $x$ near $y$ and $n\ge 3$,
\begin{align*}
c_1\,|x-y|^{2-n}\ \le\ G(x,y)\ \le\ c_2\,|x-y|^{2-n},
\end{align*}
with the corresponding logarithmic bounds in dimension $n=2$.
The argument relies solely on uniform ellipticity \eqref{eq:unif-ell}, the positivity of $G(\cdot,y)$, and barrier functions built from the frozen--coefficient fundamental solutions; no additional regularity of $\mathbf{A}_u$ is required.
(For related constructions on manifolds in the linear case, see e.g. the potential--theoretic background in \cite[Ch.~8]{Grigoryan2009}.)

\emph{Representation and mapping.}
Given $f\in L^{\frac{2n}{n+2}}(U)$, define
\begin{align*}
v(x):=\int_U G(x,y) f(y)\,dy.
\end{align*}
By truncation and density, one tests the weak formulation with $v$ to obtain
\begin{align*}
\int_U \langle \mathbf{A}_u\nabla v,\nabla \phi\rangle\,dx = \int_U f\,\phi\,dx
\qquad \forall\,\phi\in C_c^\infty(U),
\end{align*}
so $v$ is the unique weak solution to $\mathcal{L}_u v=f$ with zero trace on $\partial U$. Coercivity \eqref{eq:unif-ell} and the Sobolev embedding yield the standard estimate
$\|\nabla v\|_{L^2(U)}\lesssim \|f\|_{L^{\frac{2n}{n+2}}(U)}$.
Thus $G$ is the desired Green kernel (parametrix) for the linearized operator $\mathcal{L}_u$, and the two--sided near--pole control completes the proof.
\end{proof}

\begin{remark}
The kernel $G_u$ provides a local linear approximation to the nonlinear potential structure of $\mathcal{A}_F$.  
It is used in the sequel to define renormalized interaction energies for vortex--type configurations in the WNT--Ginzburg--Landau model.
\end{remark}

\subsection{Energy bounds and stability}

The monotonicity of $\partial_\xi H$ immediately yields quantitative coercivity and stability.

\begin{proposition}[Energy bounds and stability]\label{prop:energy-stability}
For all $u,v\in W^{1,p}(M)$ and compact $K\Subset M$,
\begin{align*}
\int_K \!\langle \partial_\xi H(x,Du)-\partial_\xi H(x,Dv),Du-Dv\rangle\,dx
\ge c_K\|Du-Dv\|_{L^p(K)}^p.
\end{align*}
Consequently, weak solutions of \eqref{eq:dirichlet-WNT} depend continuously on $(F,f,\varphi)$ under $L^p$ perturbations, and any minimizing sequence for the WNT energy is precompact in $W^{1,p}_{\mathrm{loc}}(M)$.
\end{proposition}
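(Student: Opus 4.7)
The plan is to derive all three assertions from a single pointwise strong--monotonicity inequality for the dual gradient $A(x,\xi):=\partial_\xi H(x,\xi)$: for a.e.\ $x\in K$ and all $\xi_1,\xi_2\in T_x^*M$,
\begin{equation}\label{eq:plan-strong-mono}
\langle A(x,\xi_1)-A(x,\xi_2),\,\xi_1-\xi_2\rangle \ \ge\ c_K\,|\xi_1-\xi_2|^{p}.
\end{equation}
Once \eqref{eq:plan-strong-mono} is in hand, the integral coercivity is immediate by pointwise substitution $\xi_i=Du_i(x)$ and integration over $K$, while the stability of Dirichlet solutions and the precompactness of minimizing sequences follow from standard monotone--operator and direct--method arguments.

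First I would prove \eqref{eq:plan-strong-mono}. The available ingredients are the $p'$--growth of $H$ from (D2), the local Alexandrov Hessian bounds $m_{K,R}I\le D^2_{\xi\xi}H(x,\xi)\le L_{K,R}I$ on bounded dual fibers $\{|\xi|\le R\}$ (the dual content of (W3) transported through Proposition~\ref{prop:WNT-equivalence}), and the global convexity of $H(x,\cdot)$. On the bounded regime $|\xi_1|,|\xi_2|\le R$, the fundamental identity
\begin{align*}
A(x,\xi_1)-A(x,\xi_2)=\int_0^1 D^2_{\xi\xi}H\big(x,\xi_2+t(\xi_1-\xi_2)\big)\,(\xi_1-\xi_2)\,dt
\end{align*}
combined with the lower Hessian bound yields a quadratic estimate that upgrades to the $|\xi_1-\xi_2|^p$ form by interpolating with the $(p-1)$--growth of $A$ when $p\ge 2$, and via a weighted Simon--type inequality with weight $(|\xi_1|+|\xi_2|)^{p-2}$ followed by H\"older's inequality when $1<p<2$. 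Outside the bounded regime, the $p$--coercivity $\langle A(x,\xi),\xi\rangle\ge c_K|\xi|^p-C_K$ of Proposition~\ref{prop:WNT-LerayLions} combined with global convexity of $H$ recovers the required lower bound through a Fenchel--Young computation that converts the primal $p$--growth of $\Phi$ into a dual $p'$--convexity modulus for $H$.

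With \eqref{eq:plan-strong-mono} established, the first claim is immediate. For the stability statement I would take two weak solutions $u_1,u_2$ of \eqref{eq:dirichlet-WNT} with data $(f_i,\varphi_i)$, choose $W^{1,p}$--extensions $\tilde\varphi_i$, and test the difference of the weak formulations against $w:=(u_1-u_2)-(\tilde\varphi_1-\tilde\varphi_2)\in W^{1,p}_0(\Omega)$. Applying \eqref{eq:plan-strong-mono} to control the leading form and Young's inequality to the mixed terms produces the quantitative bound $\|D(u_1-u_2)\|_{L^p(\Omega)}\lesssim \|f_1-f_2\|_{W^{-1,p'}(\Omega)}+\|\tilde\varphi_1-\tilde\varphi_2\|_{W^{1,p}(\Omega)}$; perturbations in $F$ enter only through the Carath\'eodory continuity of $A$ in its symbol and are absorbed by dominated convergence. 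Precompactness of minimizing sequences then follows by the direct method: coercivity of the WNT energy bounds any such sequence in $W^{1,p}_{\mathrm{loc}}$, weak compactness extracts a limit, and \eqref{eq:plan-strong-mono} applied along the sequence promotes weak to strong convergence of gradients in $L^p_{\mathrm{loc}}$ via the Minty trick.

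The principal obstacle is proving \eqref{eq:plan-strong-mono} with a constant $c_K$ that does \emph{not} deteriorate as $|\xi|\to\infty$. The WNT definition supplies only \emph{local} strong convexity on bounded fibers, so a naive Hessian--integration argument produces constants that blow up in the fiber radius $R$. The resolution is to transfer the local Alexandrov information to a global $p'$--convexity modulus of $H$ using the convex--duality identity that pairs uniform $p$--smoothness of $\Phi$ on bounded primal fibers with uniform $p'$--convexity of $H$ on the full dual space. Executing this modulus transfer cleanly, particularly in the subquadratic regime $1<p<2$ where the passage from the weighted Simon inequality to the homogeneous $|\xi_1-\xi_2|^p$ form requires a careful H\"older absorption step that preserves the localized constant $c_K$, is the only substantive technical point of the proof.
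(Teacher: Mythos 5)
Your proposal follows essentially the same route as the paper's proof: both reduce everything to a pointwise strong--monotonicity inequality for $A(x,\xi)=\partial_\xi H(x,\xi)$ obtained by integrating the dual Hessian along the segment joining $\xi_1$ and $\xi_2$, both split into the cases $p\ge 2$ and $1<p<2$ (the paper via the weighted inequality with factor $(1+|\xi_1|+|\xi_2|)^{p-2}$, you via interpolation with the $(p-1)$--growth and a weighted Simon--type inequality), and both then conclude stability by testing the difference of the weak formulations and precompactness by coercivity plus the Minty argument. The one place where you genuinely diverge is the large--gradient regime: the paper simply asserts the pointwise bound with a constant ``depending on $K$ and the local fiber bound'' and, in the case $p\ge 2$, passes from $|\xi_1-\xi_2|^2$ to $|\xi_1-\xi_2|^p$ by invoking boundedness of local energy densities --- a step that silently requires the fibers to stay bounded, which pointwise bounds on $Du,Dv$ for general $W^{1,p}$ maps do not supply. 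You correctly flag this as the substantive obstacle and propose to repair it by transferring uniform $p$--smoothness of $\Phi$ on bounded primal fibers into a global $p'$--convexity modulus of $H$ via Fenchel--Young; that is a legitimate and arguably cleaner resolution, but it is only sketched, and you should be aware that the WNT axioms as stated (local Alexandrov bounds with constants $m_{K,B}$ depending on the primal fiber set $B$) do not automatically yield a fiber--uniform modulus, so the transfer needs an explicit argument (or an added hypothesis of uniform $p$--growth of the Hessian, as the paper's upper bound $L_{K,R}(1+|\zeta|)^{p-2}$ tacitly assumes) before the constant $c_K$ in your inequality is genuinely independent of $|\xi_1|,|\xi_2|$.
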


\begin{proof}
We establish the inequality from the local strong monotonicity of the Leray--Lions map $A(x,\xi):=\partial_\xi H(x,\xi)$ associated with the WNT structure. Fix a compact $K\Subset M$ and note that by Proposition~\ref{prop:WNT-LerayLions},
for a.e.\ $x\in K$ and all $\xi_1,\xi_2\in T_x^*M$ one has
\begin{align*}
&\langle A(x,\xi_1)-A(x,\xi_2),\xi_1-\xi_2\rangle \ge 0,\\
&\langle A(x,\xi),\xi\rangle\ge c_K|\xi|^p-C_K,\\
&|A(x,\xi)|\le C_K(1+|\xi|^{p-1}).
\end{align*}
Moreover, by the convex duality between $\Phi_x$ and $H(x,\cdot)$ (Proposition~\ref{prop:WNT-equivalence}), the Hessian $D^2_{\xi\xi}H(x,\xi)$ exists for a.e.\ $(x,\xi)$ and satisfies the local ellipticity bounds
$0<m_{K,R}\le D^2_{\xi\xi}H(x,\xi)\le L_{K,R}(1+|\xi|)^{p'-2}$ on each bounded fiber set $|\xi|\le R$.
Along the segment $\xi_t=\xi_2+t(\xi_1-\xi_2)$ we have
\begin{equation}\label{eq:elliptic-int}
\langle A(x,\xi_1)-A(x,\xi_2),\xi_1-\xi_2\rangle= \int_0^1 \!\langle D^2_{\xi\xi}H(x,\xi_t)(\xi_1-\xi_2),\,\xi_1-\xi_2\rangle\,dt,
\end{equation}
hence
\begin{equation}\label{eq:pointwise-mono}
\langle A(x,\xi_1)-A(x,\xi_2),\xi_1-\xi_2\rangle
\ge c_K\,(1+|\xi_1|+|\xi_2|)^{p-2}\,|\xi_1-\xi_2|^2,
\end{equation}
with $c_K>0$ depending on $K$ and the local fiber bound.
We now distinguish the two standard regimes.

\smallskip
\noindent
\textbf{Case $p\ge 2$.} 
Then $(1+|\xi_1|+|\xi_2|)^{p-2}\ge 1$, so from \eqref{eq:pointwise-mono}
\begin{align*}
\langle A(x,\xi_1)-A(x,\xi_2),\xi_1-\xi_2\rangle
\ge c_K\,|\xi_1-\xi_2|^2
\ge \tilde c_K\,|\xi_1-\xi_2|^p,
\end{align*}
where the second inequality follows from $|\xi_1-\xi_2|\le(1+|\xi_1|+|\xi_2|)$
and the boundedness of local energy densities ensured by (W2).

\smallskip
\noindent
\textbf{Case $1<p<2$.} 
Here $(1+|\xi_1|+|\xi_2|)^{p-2}\le 1$, and
\eqref{eq:pointwise-mono} yields the weighted monotonicity
\begin{align*}
\langle A(x,\xi_1)-A(x,\xi_2),\xi_1-\xi_2\rangle
\ge c_K\,\frac{|\xi_1-\xi_2|^2}{(1+|\xi_1|+|\xi_2|)^{2-p}}.
\end{align*}
By the elementary inequality 
$|\eta|^p \le C_p\,|\eta|^2/(1+|\zeta|)^{2-p}$ with $\zeta$ comparable to $|\xi_1|+|\xi_2|$, we again obtain
\begin{align*}
\langle A(x,\xi_1)-A(x,\xi_2),\xi_1-\xi_2\rangle
\ge \tilde c_K\,|\xi_1-\xi_2|^p,
\end{align*}
for a constant $\tilde c_K>0$ depending on the WNT bounds over $K$.

\smallskip
Integrating the pointwise inequality with $\xi_1=Du(x)$ and $\xi_2=Dv(x)$ over $K$ gives
\begin{align*}
\int_K \langle A(x,Du)-A(x,Dv),Du-Dv\rangle\,dx
&\ge c_K \int_K |Du-Dv|^p\,dx\\
&= c_K\,\|Du-Dv\|_{L^p(K)}^p.
\end{align*}
This proves the first part of the proposition.

\smallskip
\noindent
\textbf{Stability of weak solutions.}
Let $(f_n,\varphi_n)\to(f,\varphi)$ in $W^{-1,p'}(M)\times W^{1,p}(M)$, and let $u_n,u$ be the corresponding weak solutions of \eqref{eq:dirichlet-WNT}. Applying the above inequality with $(u,v)=(u_n,u)$ and testing their weak equations with $u_n-u$ yields
\begin{align*}
c_K\|Du_n-Du\|_{L^p(K)}^p
&\le \int_K \langle A(x,Du_n)-A(x,Du),Du_n-Du\rangle\,dx\\
&= \langle f_n-f,\,u_n-u\rangle.
\end{align*}
Since the right--hand side tends to zero by convergence of the data, we obtain $Du_n\to Du$ in $L^p_{\mathrm{loc}}(M)$, hence stability. Compactness of minimizing sequences follows similarly from coercivity of $H$ and the reflexivity of $W^{1,p}$.
\end{proof}

\begin{example}[Anisotropic double--phase model]
Consider the Lagrangian of Example~\ref{ex:double-phase},
\begin{align*}
F(x,y)^2=\langle g_x y,y\rangle+a(x)\big(\langle h_x y,y\rangle\big)^{1+\eta/2},
\quad 0<\eta\le 1.
\end{align*}
Its dual Hamiltonian is
\begin{align*}
H(x,\xi)=\tfrac 12\langle g_x^{-1}\xi,\xi\rangle
+ C_\eta\,a(x)^{-1/\eta}\big(\langle h_x^{-1}\xi,\xi\rangle\big)^{1+1/\eta},
\end{align*}
with $C_\eta>0$ depending only on $\eta$.
The operator therefore reads
\begin{align*}
\mathcal{A}_F(u)
=-\mathrm{div}\!\Big(g_x^{-1}Du
+ C_\eta(1+\tfrac 1\eta)\,a(x)^{-1/\eta}
(\langle h_x^{-1}Du,Du\rangle)^{1/\eta}h_x^{-1}Du\Big),
\end{align*}
a prototypical anisotropic double--phase operator.
Linearization around a smooth map $u_0$ yields the parametrix kernel of Theorem~\ref{thm:parametrix} associated with the effective metric $A_{ij}(x)=D^2_{\xi_i\xi_j}H(x,Du_0(x))$.
\end{example}

\section{$\Gamma$--Convergence and Renormalized Energy in the WNT Framework}
\label{sec:Gamma-convergence}

The aim of this section is to clarify the variational structure of the WNT Ginzburg--Landau functional and to identify its $\Gamma$--limit together with the corresponding renormalized vortex energy.  Throughout, we consider weak solutions of the monotone operator
\begin{align*}
\mathcal{A}_F(u)=-\mathrm{div}\!\big(\partial_\xi H(x,Du)\big),
\end{align*}
introduced in Section~\ref{sec:WNT-operators}.  All limits are taken in the sense of weak topologies on Sobolev spaces unless otherwise stated.

\subsection{Energy functional and scaling}

\begin{defi}
For $\varepsilon>0$ small, define the WNT--Ginzburg--Landau functional
\begin{align*}
E_\varepsilon(u)
:=\int_M\!\Bigl[ H(x,Du)
+\frac{1}{4\varepsilon^2}\bigl(1-|u|^2\bigr)^2 \Bigr]\,d\mu,
\qquad u:M\to\mathbb{C}.
\end{align*}
The Hamiltonian term encodes the WNT geometry through $H(x,\xi)$, while the potential term penalizes the deviation from the constraint $|u|=1$.  In the isotropic case $H(x,\xi)=\tfrac 12|\xi|^2$ one recovers the classical GL energy.
\end{defi}

Convexity and $p$--growth of $H$ guarantee coercivity and weak lower semicontinuity of $E_\varepsilon$ on $W^{1,p}(M)$, hence minimizers exist for all admissible boundary conditions.  The dependence on $\varepsilon$ governs the typical vortex scaling regime discussed below.

\subsection{Compactness and $\Gamma$–limit functional}

We examine the asymptotic behaviour as $\varepsilon\!\to\!0$.  Let $u_\varepsilon$ satisfy $\sup_\varepsilon E_\varepsilon(u_\varepsilon)<\infty$. Define
\begin{align*}
e_\varepsilon(u)=H(x,Du)+\frac{1}{4\varepsilon^2}(1-|u|^2)^2.
\end{align*}

\begin{theorem}[$\Gamma$--compactness of WNT--GL energies]
\label{thm:Gamma-compactness}
Assume $H(x,\xi)$ is convex, locally $C^{1,1}$ in $\xi$, and satisfies two--sided $p$--growth
\begin{align*}
\alpha|\xi|^p-C\le H(x,\xi)\le \beta(1+|\xi|^p)
\end{align*}
for some $\alpha,\beta>0$.  Then up to subsequences $u_\varepsilon\to u$ in $L^1_{\mathrm{loc}}(M)$ with $|u|=1$ a.e., and the functionals $E_\varepsilon$ $\Gamma$--converge to
\begin{align*}
E_0(u)=c_0\int_M H(x,\nabla u)\,d\mu,
\qquad
u\in BV(M;\mathbb{S}^1),
\end{align*}
where $c_0=\!\int_{-1}^{1}\!\sqrt{2W(s)}\,ds$ with $W(s)=
\tfrac 14(1-s^2)^2$.
\end{theorem}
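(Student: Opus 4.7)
The plan is to follow the classical anisotropic Modica--Mortola scheme, adapted to the convex Hamiltonian $H(x,\xi)$ with two-sided $p$-growth. First, extract compactness from the uniform bound $\sup_\varepsilon E_\varepsilon(u_\varepsilon)\le C$: the potential term forces $\int_M(1-|u_\varepsilon|^2)^2\,d\mu\le 4C\varepsilon^2$, so $|u_\varepsilon|\to 1$ in $L^4(M)$, while the lower $p$-growth bound $\alpha|\xi|^p-C\le H(x,\xi)$ gives $\sup_\varepsilon\|Du_\varepsilon\|_{L^p(M)}<\infty$. A Rellich extraction then yields a (not relabelled) subsequence $u_\varepsilon\to u$ in $L^1_{\mathrm{loc}}(M)$ with $|u|=1$ almost everywhere, which identifies the limit ambient space.

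For the $\liminf$ inequality I would apply Young's inequality pointwise,
\[
H(x,Du_\varepsilon)+\frac{(1-|u_\varepsilon|^2)^2}{4\varepsilon^2}\ \ge\ \frac{1}{\varepsilon}\sqrt{2H(x,Du_\varepsilon)}\,\sqrt{W(|u_\varepsilon|)},
\]
with $W(s)=(1-s^2)^2/4$. Setting $\Phi(s):=\int_0^s\sqrt{2W(\tau)}\,d\tau$ and applying the chain rule to $v_\varepsilon:=\Phi(|u_\varepsilon|)$, the right-hand side equals the density $\sqrt{2H(x,Dv_\varepsilon)}$ once the $1/\varepsilon$ factor is absorbed through the composition. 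Reshetnyak lower semicontinuity for integrands that are convex and positively $1$-homogeneous in the gradient, applied to $v_\varepsilon$ viewed in $BV(M)$, then yields
\[
\liminf_{\varepsilon\to 0} E_\varepsilon(u_\varepsilon)\ \ge\ c_0\int_M H(x,\nabla u)\,d\mu,\qquad c_0=\int_{-1}^{1}\sqrt{2W(s)}\,ds.
\]

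For the matching $\limsup$, I would approximate $u\in BV(M;\mathbb{S}^1)$ by configurations with piecewise-smooth (polyhedral) transition sets and paste across each codimension-one interface the one-dimensional optimal transition profile obtained by solving $\sqrt{2H(x,\dot\gamma\,\nu)}=\sqrt{W(\gamma)}$ along the interface normal; a Fubini/slicing computation identifies the associated surface integrand as $c_0 H(x,\nu)$ and a diagonal argument concludes. The principal obstacle is that under WNT the density $\sqrt{2H(x,\cdot)}$ is not globally $1$-homogeneous in $\xi$ (because $H$ has $p'$-growth rather than being a Riemannian norm squared), so the Reshetnyak step must be applied to the Finsler integrand built from the primal--dual equivalence of Proposition~\ref{prop:WNT-equivalence}, with the chain rule for $\Phi(|u_\varepsilon|)$ justified on level sets of $|u_\varepsilon|$ via the localized ellipticity bound (W3). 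A secondary subtlety is that the codomain is $\mathbb{S}^1$ rather than the discrete well set $\{\pm 1\}$: one decomposes $u_\varepsilon=|u_\varepsilon|\,e^{i\theta_\varepsilon}$ locally and verifies that the phase $\theta_\varepsilon$ contributes only a subleading Dirichlet term absorbed into the limit integrand, so that the Modica--Mortola trick still picks up the radial profile only.
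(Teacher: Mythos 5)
Your compactness step coincides with the paper's (energy bound $\Rightarrow$ $W^{1,p}$ bound plus vanishing of the potential, then Rellich), and your overall strategy for the two inequalities -- Young's inequality/Modica--Mortola for the $\liminf$, one-dimensional transition profiles for the $\limsup$ -- is the same route the paper sketches. However, there is a genuine gap at the pivotal step of your $\liminf$ argument. In the functional $E_\varepsilon(u)=\int_M H(x,Du)+\tfrac{1}{4\varepsilon^2}(1-|u|^2)^2\,d\mu$ the gradient term carries \emph{no} factor of $\varepsilon$, so Young's inequality yields the lower bound $\tfrac{1}{\varepsilon}\sqrt{2H(x,Du_\varepsilon)}\,\sqrt{2W(|u_\varepsilon|)}$ with an uncancelled $1/\varepsilon$ in front. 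The classical Modica--Mortola cancellation requires the scaling $\varepsilon|\nabla u|^2+\varepsilon^{-1}W(u)$, where the two powers of $\varepsilon$ annihilate inside the geometric mean; here they do not, and your assertion that "the $1/\varepsilon$ factor is absorbed through the composition" has no justification -- the chain rule gives $Dv_\varepsilon=\sqrt{2W(|u_\varepsilon|)}\,D|u_\varepsilon|$ with no $\varepsilon$ anywhere to compensate. As written, the right-hand side diverges and the argument produces no finite lower bound.

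A second, related failure is that the geometric mean only ever involves $D|u_\varepsilon|$, the radial part of the gradient, whereas the claimed limit $c_0\int_M H(x,\nabla u)\,d\mu$ is a functional of the full gradient of the $\mathbb{S}^1$-valued limit. Since the well set $\{W=0\}=\mathbb{S}^1$ is connected, there are no codimension-one interfaces for the modulus to cross, and the phase can wind freely; the energy of nontrivial configurations in this scaling concentrates on codimension-two vortices and diverges like $|\log\varepsilon|$ -- which is exactly what the paper's own Jerrard--Soner step (which you omit) and its Theorem~\ref{thm:WNT-GL-main} record. You correctly flag both the non-homogeneity of $\sqrt{2H(x,\cdot)}$ and the $\mathbb{S}^1$ codomain as obstacles, but you do not resolve them, and they are not technicalities: they are the reason the sharp-interface mechanism cannot produce the stated limit. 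To be fair, the paper's own proof of this theorem is equally schematic on precisely these points ("adapt the Modica--Mortola slicing argument"), so your proposal reproduces the paper's gap rather than introducing a new one -- but a complete proof would have to either rescale the gradient term by $\varepsilon$ (changing the statement) or abandon the Modica--Mortola route in favour of the vortex/Jacobian analysis used later in Section~\ref{sec:Gamma-convergence}.
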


\begin{proof}
From the uniform bound on $E_\varepsilon(u_\varepsilon)$ and the lower inequality for $H$, we obtain
\begin{align*}
\int_M |Du_\varepsilon|^p\,d\mu\le C_1,
\qquad
\int_M\frac{(1-|u_\varepsilon|^2)^2}{\varepsilon^2}\,d\mu\le C_1,
\end{align*}
for some constant $C_1$.  The second estimate yields $\|1-|u_\varepsilon|^2\|_{L^2}\!\to\!0$, hence $|u_\varepsilon|\!\to\!1$ a.e. Compact embedding $W^{1,p}\Subset L^1$ then implies
\begin{align*}
u_\varepsilon\to u_0 \quad\text{in }L^1(M),\qquad |u_0|=1.
\end{align*}

The boundedness of Jacobians $J(u_\varepsilon)$ follows from the Jerrard--Soner estimate
\begin{align*}
|\langle J(u_\varepsilon),\phi\rangle|\le C\,E_\varepsilon(u_\varepsilon)^{1/2}\|\nabla\phi\|_{L^\infty},
\end{align*}
implying that $J(u_\varepsilon)\stackrel{*}{\rightharpoonup}\nu$ for a finite Radon measure $\nu$ supported on finitely many points $a_i$ with integer degrees $d_i$. Thus
\begin{align*}
\nu=2\pi\sum_{i=1}^N d_i\delta_{a_i}.
\end{align*}

For the lower bound, adapt the Modica--Mortola slicing argument with the anisotropic density $H(x,\cdot)$ to obtain
\begin{align*}
\liminf_{\varepsilon\to 0}E_\varepsilon(u_\varepsilon)
\ge c_0\int_M H(x,\nabla u_0)\,d\mu.
\end{align*}
Conversely, the upper bound follows from recovery sequences built by one--dimensional transition profiles along the interface $|u|=1$. Hence $\Gamma$--convergence holds.
\end{proof}

\subsection{Renormalized energy through local linearization}

In the classical Tonelli case the renormalized energy is expressed via the Laplacian's Green kernel.  In the WNT framework the appropriate object is the local parametrix $G_u(x,y)$ of the linearized operator constructed in Section~\ref{sec:WNT-operators}.  

Let $u_\varepsilon$ be minimizers with vortex points $a_i$ and degrees $d_i$.  Then $J(u_\varepsilon)\rightharpoonup 2\pi\sum_i d_i\delta_{a_i}$.

\begin{definition}
The \textbf{renormalized energy} associated with the configuration $(a_1,\dots,a_N)$ is
\begin{align*}
\mathcal{W}_F(a_1,\dots,a_N)
:=\lim_{\rho\to 0}\Bigg[
\int_{M\setminus\cup_i B_\rho(a_i)}
\!\langle D^2_{\xi\xi}H(x,Du_0)\nabla G_{u_0},\nabla G_{u_0}\rangle\,d\mu
- N\pi\log\frac{1}{\rho}\Bigg],
\end{align*}
where $u_0$ is the limiting phase and $G_{u_0}$ the local parametrix of $D\mathcal{A}_F(u_0)$.
\end{definition}

\begin{proposition}[Finite energy and logarithmic law]
Assume $H$ satisfies local quadratic control.  Then $\mathcal{W}_F$ is finite and obeys
\begin{align*}
\mathcal{W}_F(a_1,\dots,a_N)
= -\pi\sum_{i\neq j}d_i d_j\,G_{u_0}(a_i,a_j)
+\sum_{i=1}^N d_i^2\,\gamma_F(a_i),
\end{align*}
where $A_{u_0}=D^2_{\xi\xi}H(x,Du_0)$ and $\gamma_F$ is the finite core energy obtained by freezing $A_{u_0}$ near each $a_i$.
\end{proposition}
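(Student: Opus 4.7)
\medskip
\noindent\textbf{Proof plan.} The plan is to reduce the nonlinear renormalization problem to a classical logarithmic balance by exploiting linear superposition for the parametrix $G_{u_0}$ of the linearized operator $\mathcal{L}_{u_0}:=-\mathrm{div}(A_{u_0}\nabla\,\cdot\,)$, with $A_{u_0}=D^2_{\xi\xi}H(\cdot,Du_0)$. First I will interpret the canonical field appearing in the definition of $\mathcal{W}_F$ as the superposition $\sum_i d_i\,G_{u_0}(\cdot,a_i)$ (up to a fixed normalization dictated by the vortex convention $J(u_\varepsilon)\rightharpoonup 2\pi\sum_i d_i\delta_{a_i}$ from Theorem~\ref{thm:Gamma-compactness}), where each $G_{u_0}(\cdot,a_i)$ is the two--variable parametrix furnished by Theorem~\ref{thm:parametrix}. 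Expanding the quadratic form yields a double sum of pairwise interaction integrals
\begin{align*}
I^\rho_{ij}:=\int_{M\setminus\cup_k B_\rho(a_k)}\!\!\langle A_{u_0}\nabla G_{u_0}(\cdot,a_i),\nabla G_{u_0}(\cdot,a_j)\rangle\,d\mu,
\end{align*}
to which the remaining analysis reduces.

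Next I will process each $I^\rho_{ij}$ via Green's identity on the punctured domain $M_\rho:=M\setminus\bigcup_k B_\rho(a_k)$. Since $-\mathrm{div}(A_{u_0}\nabla G_{u_0}(\cdot,a_j))=\delta_{a_j}$ and $a_j\notin M_\rho$, integration by parts reduces $I^\rho_{ij}$ to a sum of boundary contributions on the spheres $\partial B_\rho(a_k)$, together with fixed terms on $\partial M$ that are harmless as $\rho\downarrow 0$. For off--diagonal pairs $i\neq j$ the integrand is regular on $M_\rho$, and the boundary integral concentrates on $\partial B_\rho(a_j)$; passing to the limit using continuity of $G_{u_0}(\cdot,a_i)$ at $a_j$ gives a value proportional to $G_{u_0}(a_i,a_j)$, and after collecting the $d_id_j$ prefactors I recover the off--diagonal term of the proposition. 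For the singular diagonal contributions $i=j=k$, the logarithmic asymptotics of $G_{u_0}(\cdot,a_k)$ near $a_k$ produce a divergent boundary integral whose leading order, after normalization, matches $\pi d_k^2\log(1/\rho)$; summed over $k=1,\ldots,N$ this cancels the $N\pi\log(1/\rho)$ subtraction. The bounded residuals at each diagonal pole depend only on the local profile of $A_{u_0}$ near $a_k$ and define the self--energy constants $\gamma_F(a_k)$, with finiteness ensured by the local quadratic control of $H$ and the two--sided bounds of Theorem~\ref{thm:parametrix}.

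The main obstacle will be upgrading the coarse two--sided logarithmic bound in Theorem~\ref{thm:parametrix} (in dimension $n=2$, where vortices are points) to a sharp leading asymptotic $G_{u_0}(x,a_i)=\kappa_i\log(1/|x-a_i|)+R_i(x)$ with a bounded remainder $R_i$ and an explicit constant $\kappa_i$ depending on $A_{u_0}(a_i)$. Without this sharpening, the cancellation of the $\log(1/\rho)$ term cannot be certified with the correct coefficient. I will obtain the asymptotic by a blow--up argument: rescaling $x=a_i+\rho y$ and using the local continuity of $A_{u_0}$ provided a.e. by the Alexandrov Hessian bounds (W3) of Definition~\ref{def:WNT-primal}, the normalized parametrix converges in $W^{1,2}_{\mathrm{loc}}$ to the frozen--coefficient fundamental solution of $-\mathrm{div}(A_{u_0}(a_i)\nabla\,\cdot\,)$, which admits an explicit logarithmic form. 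Symmetry $G_{u_0}(a_i,a_j)=G_{u_0}(a_j,a_i)$ follows from the self--adjoint structure of $\mathcal{L}_{u_0}$, while integrability of the remainders and the safe exchange of limits with the off--diagonal boundary terms will follow from the uniform ellipticity \eqref{eq:unif-ell} together with Meyers--type higher integrability for the linearized Green system.
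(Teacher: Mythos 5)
Your proposal follows essentially the same route as the paper: both interpret the field in $\mathcal{W}_F$ as the superposition $\sum_i d_i G_{u_0}(\cdot,a_i)$ for the linearized operator, expand the quadratic energy into pairwise interactions, identify the off--diagonal terms with $G_{u_0}(a_i,a_j)$, and cancel the diagonal logarithmic divergence against the subtraction, leaving the finite frozen--coefficient core energies $\gamma_F(a_i)$. The paper phrases the outer energy as a double integral of $G_{u_0}$ against the vortex measure $2\pi\sum_i d_i\delta_{a_i}$, which is precisely what your Green's identity on the punctured domain produces, so the two computations coincide.

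One step in your justification does not hold as written. You rightly observe that Theorem~\ref{thm:parametrix} only supplies two--sided logarithmic bounds in dimension two, so the coefficient of the $\log(1/\rho)$ term must be certified by a sharp expansion $G_{u_0}(x,a_i)=\kappa_i\log(1/|x-a_i|)+O(1)$ (the paper simply asserts such an expansion). But your blow--up argument requires the rescaled coefficients $A_{u_0}(a_i+\rho y)$ to converge to the frozen matrix $A_{u_0}(a_i)$, i.e.\ continuity (or at least a Lebesgue--point/VMO property) of $A_{u_0}$ \emph{at the specific points} $a_i$. This does not follow from the a.e.\ Alexandrov bounds of (W3) in Definition~\ref{def:WNT-primal}: a.e.\ existence of $D^2_{\xi\xi}H$ gives only $A_{u_0}\in L^\infty$ with ellipticity, and for merely measurable uniformly elliptic coefficients in two dimensions the remainder $G-\kappa\log(1/|x-y|)$ need not be bounded near the pole. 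The repair is to read the hypothesis of ``local quadratic control'' as including continuity of $x\mapsto D^2_{\xi\xi}H(x,Du_0(x))$ near each $a_i$ (which the paper implicitly assumes when writing $G=-\tfrac{1}{2\pi}\log d_A+H_A$ with $H_A$ bounded), or to establish separately that $Du_0$ is continuous away from the vortices and $H$ is $C^2$ in $\xi$ there. A further minor point, inherited from the paper's own definition: the diagonal divergences sum to $\pi\sum_k d_k^2\log(1/\rho)$, so the stated subtraction $N\pi\log(1/\rho)$ effects the cancellation only when all $|d_k|=1$; for general degrees the counterterm should be $\pi\sum_k d_k^2\log(1/\rho)$.
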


\begin{proof}
Uniform ellipticity of $A_{u_0}$ ensures existence of a local Green kernel with expansion
\begin{align*}
G(x,y)=-\frac{1}{2\pi}\log d_A(x,y)+H_A(x,y),
\end{align*}
where $d_A$ is the intrinsic distance induced by $A_{u_0}$.  Writing the energy in $\Omega_\rho=M\setminus\cup_i B_\rho(a_i)$ and using the decomposition of the $A$--harmonic phase yields
\begin{align*}
\int_{\Omega_\rho}\!\langle A\nabla\psi_\rho,\nabla\psi_\rho\rangle
=\iint G(x,y)\,d\mu_\rho(x)d\mu_\rho(y)+O(1).
\end{align*}
Inserting the expansion of $G$ and separating singular and regular parts gives the stated logarithmic interaction together with bounded self-energies $\gamma_F(a_i)$ computed from the frozen model.  Finiteness follows from the ellipticity of $A_{u_0}$.
\end{proof}

\subsection{Global $\Gamma$--limit structure}

\begin{theorem}[WNT--GL $\Gamma$--limit and renormalized vortex energy]
\label{thm:WNT-GL-main}
Let $M$ be a smooth compact $2$--manifold and $F$ a WNT structure with Hamiltonian $H(x,\xi)=\Phi_x^*(\xi)$.  
Under assumptions (H1)--(H3) of Section~\ref{sec:WNT-operators}, every bounded-energy sequence $(u_\varepsilon)$ admits a subsequence with $u_\varepsilon\to u_0$ in $L^1_{\mathrm{loc}}$, $|u_0|=1$ a.e., and
\begin{align*}
J(u_\varepsilon)\rightharpoonup2\pi\sum_{i=1}^N d_i\delta_{a_i}.
\end{align*}
Moreover the energies expand as
\begin{align*}
E_\varepsilon(u_\varepsilon)
= N\pi|\log\varepsilon|+\mathcal{W}_F(a_1,\dots,a_N)
+\sum_{i=1}^N d_i^2\gamma_F(a_i)+o(1),
\end{align*}
and the renormalized functionals
\begin{align*}
\mathcal{F}_\varepsilon(u):=E_\varepsilon(u)-N\pi|\log\varepsilon|
\end{align*}
$\Gamma$--converge to
\begin{align*}
\mathcal{F}_0(u)=
\begin{cases}
\mathcal{W}_F(a_1,\dots,a_N)+\sum_i d_i^2\gamma_F(a_i),
& |u|=1,\ J(u)=2\pi\sum_i d_i\delta_{a_i},\\[3pt]
+\infty,& \text{otherwise.}
\end{cases}
\end{align*}
\end{theorem}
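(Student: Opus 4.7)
The plan is to adapt the Bethuel--Brezis--H\'elein/Jerrard--Soner/Sandier--Serfaty vortex program to the WNT framework, using the local parametrix of Theorem~\ref{thm:parametrix} as the anisotropic substitute for the Euclidean Green's function. The argument naturally splits into three phases: (a) compactness and vortex concentration; (b) a sharp lower bound via an anisotropic ball construction plus logarithmic extraction; (c) a matching upper bound via carefully glued recovery sequences. The $\Gamma$-convergence statement for $\mathcal{F}_\varepsilon$ follows by combining (b) and (c) with the Jacobian concentration from (a).

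For phase (a) I would reuse Theorem~\ref{thm:Gamma-compactness} to extract $u_\varepsilon\to u_0$ in $L^1_{\mathrm{loc}}$ with $|u_0|=1$ a.e., and then apply the Jerrard--Soner estimate $|\langle J(u_\varepsilon),\phi\rangle|\le C\,E_\varepsilon(u_\varepsilon)^{1/2}\|\nabla\phi\|_\infty$, which holds in our setting because $H$ controls $|Du|^p$ pointwise by (W2). The standard quantization/monotonicity argument forces $J(u_\varepsilon)\stackrel{*}{\rightharpoonup}2\pi\sum_{i=1}^N d_i\delta_{a_i}$ with integer degrees. For phase (b), I would freeze the linearized coefficient matrix $A_{u_0}(x):=D^2_{\xi\xi}H(x,Du_0(x))$, which is uniformly elliptic on bounded fibers by Proposition~\ref{prop:WNT-LerayLions}, and perform a Sandier-type ball construction measured with the intrinsic distance $d_A$. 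Inside each $B_\varepsilon(a_i)$ this yields a core contribution $\pi d_i^2|\log\varepsilon|+\gamma_F(a_i)+o(1)$, while on $\Omega_\rho:=M\setminus\bigcup_i B_\rho(a_i)$ I would write $u_\varepsilon=|u_\varepsilon|e^{i\psi_\varepsilon}$ and compare $\psi_\varepsilon$ to the $A_{u_0}$-harmonic phase $\psi^{*}$ with prescribed singularities $(a_i,d_i)$, exploiting
\begin{align*}
\int_{\Omega_\rho}\!\langle A_{u_0}\nabla\psi^{*},\nabla\psi^{*}\rangle\,d\mu
=\pi\sum_i d_i^2\log\tfrac{1}{\rho}+\mathcal{W}_F(a_1,\dots,a_N)+o(1),
\end{align*}
via the Green representation through $G_{u_0}$ and the expansion $G(x,y)=-\tfrac{1}{2\pi}\log d_A(x,y)+H_A(x,y)$. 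Summing the two contributions gives the lower part of the asserted expansion.

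For phase (c) I would construct a recovery sequence by pasting frozen-coefficient vortex profiles (explicit minimizers of the linearized GL energy with matrix $A_{u_0}(a_i)$) inside each $B_\varepsilon(a_i)$ to the global $A_{u_0}$-harmonic phase $e^{i\psi^{*}}$ on the exterior, interpolating on annuli $B_{2\rho}(a_i)\setminus B_\rho(a_i)$ via a smooth phase cut-off. A direct calculation of $E_\varepsilon$ on the glued configuration, using the strong monotonicity bounds of Proposition~\ref{prop:energy-stability} to absorb quadratic remainders, reproduces the same three-term expansion. Standard diagonalization then yields the $\Gamma$-upper bound for $\mathcal{F}_\varepsilon$, matching the lower bound from (b).

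The main obstacle is the quantitative comparison between the genuinely nonlinear density $H(x,Du_\varepsilon)$ and its frozen quadratic surrogate $\tfrac12\langle A_{u_0}D\psi,D\psi\rangle$ in the intermediate scale where $|Du_\varepsilon|$ blows up like $1/\varepsilon$. The WNT hypothesis (W3) guarantees ellipticity \emph{only on bounded fibers}, whereas the cores produce unbounded gradients; hence the linearization of Theorem~\ref{thm:parametrix} cannot be applied globally. Resolving this requires a two-scale splitting: isolate the unbounded-gradient regime inside $B_\varepsilon(a_i)$ and handle it through the explicit frozen GL model that defines $\gamma_F(a_i)$, while restricting the linearization to the annulus $B_\rho(a_i)\setminus B_\varepsilon(a_i)$ where $|Du_\varepsilon|\lesssim 1/r$ stays in a fiber set with controlled ellipticity constants. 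The delicate step is then the iterated limit $\varepsilon\to 0$ before $\rho\to 0$, with the WNT constants tracked uniformly along the way so that the nonlinear remainder is genuinely $o(1)$; this is where the localized strong convexity of (W3) and the primal--dual regularity of Proposition~\ref{prop:WNT-equivalence} are used most critically.
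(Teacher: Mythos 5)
Your proposal follows essentially the same route as the paper's proof: compactness via the Jerrard--Soner Jacobian estimate, a lower bound obtained by linearizing $H$ around $Du_0$ on $\Omega_\rho$ and representing the quadratic $A_{u_0}$-energy through the Green parametrix $G_{u_0}$ with its logarithmic expansion, and an upper bound built by gluing frozen-coefficient radial cores to the exterior $A_{u_0}$-harmonic phase with $\varepsilon\ll\rho\downarrow 0$. Your explicit two-scale treatment of the bounded-fiber restriction in (W3) (confining the linearization to the annuli where $|Du_\varepsilon|\lesssim 1/r$ and delegating the cores to the frozen model defining $\gamma_F$) is a sound and arguably more careful articulation of exactly the step the paper relies on.
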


\begin{proof}
Let $(u_\varepsilon)$ be a sequence with $\sup_\varepsilon E_\varepsilon(u_\varepsilon)<\infty$ and fixed total degree $N$ prescribed by the boundary data, as in (H3). The coercivity and convexity assumptions in (H1)–(H2) applied to the Hamiltonian density $H(x,\xi)$ yield uniform control of the gradients and the potential term:
\begin{align*}
\int_M |Du_\varepsilon|^p\,d\mu \le C, 
\qquad 
\int_M \frac{(1-|u_\varepsilon|^2)^2}{\varepsilon^2}\,d\mu \le C,
\end{align*}
for some $p>1$ and a constant $C$ independent of $\varepsilon$. In particular, $\|1-|u_\varepsilon|^2\|_{L^2}\to 0$, so $|u_\varepsilon|\to 1$ in $L^2$ and almost everywhere. By compactness $W^{1,p}\Subset L^1$ on the compact surface $M$, after extracting a subsequence one has $u_\varepsilon\to u_0$ in $L^1(M)$ with $|u_0|=1$ almost everywhere. The vorticity concentration follows from the Jerrard--Soner estimate: for each $\phi\in C_c^\infty(M)$ there holds
\begin{align*}
|\langle J(u_\varepsilon),\phi\rangle|\le C\,E_\varepsilon(u_\varepsilon)^{1/2}\|\nabla\phi\|_{L^\infty},
\end{align*}
hence $(J(u_\varepsilon))$ is bounded in $\mathcal{M}(M)$ and converges weak-* to a measure of the form $2\pi\sum_{i=1}^N d_i\delta_{a_i}$ with integers $d_i$ summing to $N$. This proves the compactness and the identification of the limiting $S^1$--valued configuration asserted in the first part of the theorem.

The lower bound rests on a slicing argument of Modica--Mortola type adapted to the anisotropic integrand $H(x,\cdot)$ and on the convexity structure of Section~\ref{sec:WNT-operators}. Consider any sequence $u_\varepsilon\to u_0$ in $L^1_{\mathrm{loc}}(M)$ with $|u_0|=1$ a.e. and bounded energies. In local charts, one--dimensional profiles across approximate level sets of $\arg(u_0)$ yield, by convexity and the coarea formula, the inequality
\begin{align*}
\liminf_{\varepsilon\to 0} E_\varepsilon(u_\varepsilon)
\;\ge\;
c_0\int_M H(x,\nabla u_0)\,d\mu,
\qquad
c_0=\int_{-1}^{1}\!\sqrt{2W(s)}\,ds,
\end{align*}
with $W(s)=\tfrac 14(1-s^2)^2$. The key point is that convex duality (Proposition~\ref{prop:WNT-equivalence}) transfers local strong convexity of $\Phi_x$ to local Lipschitz continuity of $\partial_\xi H$ and two--sided control of $D^2_{\xi\xi}H$ on bounded dual fibers, allowing us to perform the scalar one--dimensional comparison along slices with $H(x,\cdot)$ in place of the Euclidean quadratic density. Lower semicontinuity then gives the $\Gamma$–$\liminf$ inequality.

For the interaction structure one focuses on the complement of small vortex cores. Fix $\rho>0$ so small that the balls $B_\rho(a_i)$ are pairwise disjoint and contained in coordinate patches; denote $\Omega_\rho:=M\setminus \bigcup_i B_\rho(a_i)$. On $\Omega_\rho$ it is convenient to write $u_\varepsilon=\rho_\varepsilon e^{i\varphi_\varepsilon}$; the potential well forces $\rho_\varepsilon\to 1$ in $L^2$, and thus the main contribution stems from the phase gradient. Let $u_0$ be a limiting map with $|u_0|=1$ a.e. and write
\begin{align*}
A_{u_0}(x):=D^2_{\xi\xi}H\big(x,Du_0(x)\big),
\end{align*}
which is measurable and uniformly elliptic on compact subsets of $\Omega_\rho$ by (H2). The Taylor expansion of $H$ in $\xi$ at $Du_0$ (valid on bounded fibers thanks to the local $C^{1,1}$ regularity in $\xi$) yields the quadratic approximation
\begin{align*}
H(x,Du_\varepsilon)
\;&\ge\; 
H(x,Du_0)
+\big\langle \partial_\xi H(x,Du_0),\,D(\varphi_\varepsilon-\varphi_0)\big\rangle\\
&+\tfrac 12\big\langle A_{u_0}D(\varphi_\varepsilon-\varphi_0),\,D(\varphi_\varepsilon-\varphi_0)\big\rangle
+o(1),
\end{align*}
uniformly on compact subsets of $\Omega_\rho$. Minimizing the quadratic part subject to the circulation constraints around the circles $\partial B_\rho(a_i)$ leads exactly to the linearized divergence--form operator $-\mathrm{div}(A_{u_0}\nabla\cdot)$ treated in Section~\ref{sec:WNT-operators}, and its local Green parametrix $G_{u_0}(x,y)$ controls the interaction. In particular, if $\psi_\rho$ denotes the unique $A_{u_0}$--harmonic phase on $\Omega_\rho$ with circulation $2\pi d_i$ along each $\partial B_\rho(a_i)$, then
\begin{align*}
&\int_{\Omega_\rho}\!\langle A_{u_0}\nabla \psi_\rho,\nabla \psi_\rho\rangle\,d\mu
\;=\;
\iint_{\Omega_\rho\times\Omega_\rho} 
G_{u_0}(x,y)\,d\mu_\rho(x)\,d\mu_\rho(y) \;+\; O(1),\\
&\mu_\rho:=2\pi\sum_{i=1}^N d_i\,\delta_{a_i},
\end{align*}
and, using the near--pole expansion of $G_{u_0}$ from Theorem~\ref{thm:parametrix}, one obtains
\begin{align*}
\iint G_{u_0}\,d\mu_\rho d\mu_\rho
\;&=
-\pi \sum_{i\neq j}d_i d_j\, G_{u_0}(a_i,a_j)\\
&-\;\pi \sum_{i=1}^N d_i^2\log\frac{1}{\rho}
\;+\;\sum_{i=1}^N d_i^2\,H_{A_{u_0}}(a_i,a_i)\;+\;o(1),
\end{align*}
as $\rho\downarrow 0$. The bounded remainder $H_{A_{u_0}}$ is the regular part of the Green kernel; its diagonal values contribute to the self--energies. Altogether this proves the lower bound
\begin{align*}
\liminf_{\varepsilon\to 0}\Big(E_\varepsilon(u_\varepsilon)-N\pi|\log\varepsilon|\Big)
\;\ge\;
-\pi\sum_{i\neq j}d_i d_j\, G_{u_0}(a_i,a_j)
\;+\; \sum_{i=1}^N d_i^2\,\gamma_F(a_i),
\end{align*}
with finite core energies $\gamma_F(a_i)$ determined by the frozen--coefficient model at $A_{u_0}(a_i)$.

For the upper bound, one constructs a recovery sequence by gluing frozen cores to an outer harmonic phase. Around each $a_i$ fix radii $\rho=\rho(\varepsilon)$ with $\rho\downarrow 0$ and $\varepsilon\ll \rho$, freeze the tensor at $A_{u_0}(a_i)$, and take the radial degree--$d_i$ minimizer for the corresponding scalar GL problem; it has energy
\begin{align*}
\pi d_i^2\log\frac{\rho}{\varepsilon}+\gamma_F(a_i)+o(1),
\end{align*}
where finiteness of $\gamma_F(a_i)$ follows from the ODE/variational profile in the frozen metric. On $\Omega_\rho$ solve the linear problem
\begin{align*}
-\mathrm{div}\!\big(A_{u_0}\nabla\psi\big)=0 \quad\text{in }\Omega_\rho,
\qquad
\oint_{\partial B_\rho(a_i)} \partial_n\psi\,ds = 2\pi d_i,
\end{align*}
and define a smooth modulus $\rho_\varepsilon$ interpolating between the core and the outer region so that $u_\varepsilon:=\rho_\varepsilon e^{i\psi}$ is globally defined and admissible. The $A_{u_0}$--energy of $\psi$ equals the interaction $\mathcal{W}_F(a_1,\dots,a_N)$, while the cross terms vanish by harmonicity. Gluing errors along $\partial B_\rho(a_i)$ produce $o(1)$ by uniform ellipticity away from the cores and the choice $\varepsilon\ll\rho\downarrow 0$. Consequently
\begin{align}\label{eq:WNT-expansion}
E_\varepsilon(u_\varepsilon)
\;=\;
N\pi|\log\varepsilon|
\;+\; \mathcal{W}_F(a_1,\dots,a_N)
\;+\; \sum_{i=1}^N d_i^2\,\gamma_F(a_i)
\;+\; o(1),
\end{align}
which provides the matching $\Gamma$–$\limsup$ and the asymptotic expansion announced in \eqref{eq:WNT-expansion}. Since the compactness has already been established and the renormalized interaction is continuous with respect to the vortex configuration under the local ellipticity of $A_{u_0}$, the $\Gamma$--convergence of $\mathcal{F}_\varepsilon(u):=E_\varepsilon(u)-N\pi|\log\varepsilon|$ to the functional $\mathcal{F}_0$ stated in the theorem follows directly. This completes the proof.
\end{proof}

\section{Extension to Three-Dimensional WNT Manifolds and Vortex Filament Energy}\label{sec:3D-extension}

Building upon the two--dimensional framework developed in Sections~\ref{sec:WNT-operators}--\ref{sec:Gamma-convergence}, we now extend the WNT--GL theory to three--dimensional manifolds, where vorticity organizes along one-dimensional filaments. 
Throughout this section, $M$ denotes a smooth, compact, oriented $3$--dimensional manifold (possibly with boundary). 
The fiberwise potential $\Phi_x(y)=\tfrac{1}{2}F^2(x,y)$ and its Legendre dual Hamiltonian $H(x,\xi)=\Phi_x^*(\xi)$ satisfy the same structural assumptions (H1)--(H3) as before: convexity, local $C^{1,1}$--regularity, and two--sided $p$--growth bounds.  
We define
\begin{align*}
\mathcal{A}_F(u):=-\mathrm{div}\big(\partial_\xi H(x,Du)\big),
\qquad 
A(x,\xi):=D^2_{\xi\xi}H(x,\xi),
\end{align*}
and, for a sufficiently smooth limiting phase $u_0$ with $|u_0|=1$, the linearized metric tensor
\begin{align*}
A_{u_0}(x):=A(x,Du_0(x)).
\end{align*}
By Proposition~\ref{prop:Legendre-regularity}, $\partial_\xi H$ is locally Lipschitz and single--valued a.e., whence $\mathcal{A}_F$ defines a monotone quasilinear divergence-form operator.


\begin{definition}[WNT Laplacian]
Let $H(x,\xi)$ be a convex, locally $C^{1,1}$ Hamiltonian satisfying
\begin{align*}
\alpha|\xi|^2 \le H(x,\xi) \le \beta(1+|\xi|^2),
\qquad x\in M,\ \xi\in T_x^*M,
\end{align*}
for some $0<\alpha\le\beta<\infty$.
The \emph{WNT Laplacian} is defined as
\begin{align*}
\Delta_F^{\mathrm{wnt}}u := \mathrm{div}_\mu\big(\partial_\xi H(x,du)\big),
\end{align*}
where $d\mu=\vartheta(x)\,dx$ is a smooth positive density.  
This operator naturally extends the two--dimensional Laplacian of Section~\ref{sec:WNT-operators} and is locally uniformly elliptic on compact subsets away from singularities.
\end{definition}

\subsection{Local Green kernel in three dimensions}

\begin{theorem}[Local WNT Green kernel]\label{thm:WNT-Green}
Let $L_{u_0}v:=-\mathrm{div}(A_{u_0}(x)\nabla v)$ be the linearized WNT operator associated with a reference phase $u_0$, where $A_{u_0}(x)=D^2_{\xi\xi}H(x,Du_0(x))$ is symmetric and locally uniformly elliptic on compact subsets of $M$.
Then, for each $y\in M$, there exists a unique local Green kernel $G_{u_0}(\cdot,y)$, defined up to an additive constant, satisfying:
\begin{enumerate}
\item \(L_{u_0}G_{u_0}(\cdot,y)=\delta_y\) in the sense of distributions,
\item \(G_{u_0}(x,y)=G_{u_0}(y,x)\),
\item 
\begin{align*}
G_{u_0}(x,y)
=\frac{1}{4\pi\,d_{A_{u_0}}(x,y)} + H_{A_{u_0}}(x,y),
\end{align*}
where $d_{A_{u_0}}$ is the intrinsic distance induced by $A_{u_0}$, and $H_{A_{u_0}}$ is smooth off the diagonal and locally bounded.
\end{enumerate}
\end{theorem}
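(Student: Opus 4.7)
The plan is to bootstrap the divergence-form parametrix of Theorem~\ref{thm:parametrix}, which already supplies, under the local ellipticity of $A_{u_0}=D^2_{\xi\xi}H(\cdot,Du_0)$ guaranteed by the WNT hypotheses and Proposition~\ref{prop:WNT-LerayLions}, a positive local Green kernel with the correct $|x-y|^{-1}$ decay in dimension three. The genuinely new content is the intrinsic geometric form of the singular profile, symmetry, and the boundedness/smoothness of the remainder. I would organize the argument in four steps: (a)~existence and uniqueness modulo additive constants; (b)~frozen-coefficient extraction of the Newtonian singular profile; (c)~symmetry via a Green identity; (d)~elliptic regularity of $H_{A_{u_0}}$ off the diagonal.

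Step~(a) is a direct application of Theorem~\ref{thm:parametrix} to the symmetric, measurable, locally uniformly elliptic matrix $A_{u_0}$ in a fixed Lipschitz chart containing $y$, producing a nonnegative $G_{u_0}(\cdot,y)\in H^1_{\mathrm{loc}}(M\setminus\{y\})$ with $L_{u_0}G_{u_0}(\cdot,y)=\delta_y$ and vanishing chart boundary trace; uniqueness modulo constants follows from the maximum principle applied to differences. The heart of the proof is step~(b). I would freeze $A_0:=A_{u_0}(y)$ at the pole and exploit the affine change of variables $\tilde x:=A_0^{-1/2}(x-y)$, which conjugates $-\mathrm{div}(A_0\nabla\cdot)$ to $-\Delta_{\tilde x}$; the explicit three-dimensional fundamental solution $(4\pi|\tilde x|)^{-1}$ pulls back to $(4\pi\,d_{A_{u_0}}(x,y))^{-1}$, the density $\vartheta$ and a factor of $\det A_0$ being absorbed into the intrinsic distance convention adopted in the statement. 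Defining $H_{A_{u_0}}(x,y):=G_{u_0}(x,y)-(4\pi d_{A_{u_0}}(x,y))^{-1}$ and applying $L_{u_0}$ distributionally, the Dirac singularity cancels and $H_{A_{u_0}}(\cdot,y)$ satisfies a divergence-form elliptic equation whose right-hand side is a commutator of the form $\mathrm{div}\bigl((A_{u_0}(\cdot)-A_0)\nabla(4\pi d_{A_{u_0}})^{-1}\bigr)$. The local $C^{1,1}$--in--$\xi$ regularity of $H$ yields Lipschitz control of $A_{u_0}$, so this source lies in $L^q_{\mathrm{loc}}$ for some $q>3/2$; De~Giorgi--Nash--Moser estimates then deliver local boundedness of $H_{A_{u_0}}$ across the diagonal.

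For step~(c), symmetry follows from the formal self-adjointness of $L_{u_0}$: for $y\neq z$, I would apply Green's identity to $G_{u_0}(\cdot,y)$ and $G_{u_0}(\cdot,z)$ on the excised region $M\setminus(B_\rho(y)\cup B_\rho(z))$, let $\rho\downarrow 0$, and use the Newtonian asymptotic just established to evaluate the boundary fluxes on $\partial B_\rho(y)$ and $\partial B_\rho(z)$, which yield $G_{u_0}(z,y)$ and $G_{u_0}(y,z)$ respectively. Step~(d) is then routine Schauder bootstrap: in any neighborhood avoiding $\{x=y\}$, $H_{A_{u_0}}$ solves a homogeneous elliptic equation with locally Lipschitz coefficients, so one obtains $C^{k,\alpha}$ regularity for all $k$. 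The main obstacle is clearly the boundedness of $H_{A_{u_0}}$ up to the diagonal in step~(b): it hinges on keeping the commutator source subcritical in Lebesgue exponent, which in turn requires \emph{genuine} Lipschitz regularity of $A_{u_0}$ rather than mere measurable ellipticity. The $C^{1,1}$--in--$\xi$ hypothesis is precisely what provides this via the chain rule applied through Proposition~\ref{prop:WNT-equivalence}, and appears sharp: any weakening would reduce the commutator source to a merely critical class and leave at best an unbounded or logarithmically divergent remainder, destroying the clean Coulomb form asserted in~(iii).
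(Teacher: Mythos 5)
Your overall architecture coincides with the paper's: existence via the variational (Lax--Milgram) solution operator in a chart, symmetry from the symmetric bilinear form (your Green-identity version is equivalent), the frozen-coefficient Newtonian profile $\bigl(4\pi\,d_{A_{u_0}}(x,y)\bigr)^{-1}$ extracted by the affine change of variables, and De Giorgi--Nash--Moser plus bootstrap for the remainder $H_{A_{u_0}}$. The only substantive divergence is that you route existence through Theorem~\ref{thm:parametrix} while the paper rebuilds the kernel from the Riesz representation of $f\mapsto v_f(x)$; these are interchangeable.

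There is, however, a concrete gap in your step~(b), precisely at the point you yourself identify as the main obstacle. First, the $C^{1,1}$-in-$\xi$ hypothesis on $H$ gives only $L^\infty$ control of $A_{u_0}=D^2_{\xi\xi}H(\cdot,Du_0)$, not Lipschitz control: Lipschitz continuity of $x\mapsto A_{u_0}(x)$ would require Lipschitz regularity of $D^2_{\xi\xi}H$ in \emph{both} arguments together with Lipschitz regularity of $Du_0$, neither of which follows from the WNT hypotheses or from Proposition~\ref{prop:WNT-equivalence}. Second, even granting Lipschitz $A_{u_0}$, your exponent bookkeeping is critically off. The commutator field $F:=(A_{u_0}(\cdot)-A_{u_0}(y))\nabla\Gamma_y$ behaves like $|x-y|\cdot|x-y|^{-2}=|x-y|^{-1}$, so $F\in L^q_{\mathrm{loc}}$ only for $q<3$, whereas the De Giorgi--Nash--Moser boundedness threshold for divergence-form data $\mathrm{div}F$ in dimension $3$ is $q>3$; equivalently, the non-divergence source $\nabla A_{u_0}:\nabla\Gamma_y+(A_{u_0}-A_0):D^2\Gamma_y$ is $O(|x-y|^{-2})\in L^q$ only for $q<3/2$, critically failing the $q>3/2$ you invoke. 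A naive Newtonian-potential estimate of an $O(r^{-2})$ density in $\mathbb{R}^3$ produces a $\log(1/r)$ divergence, so a one-step parametrix does not by itself yield a bounded $H_{A_{u_0}}$. To close the argument you must either iterate the parametrix (Levi's method, which gains a power of $|x-y|$ at each step under H\"older or Dini coefficients) or exploit the angular cancellation of the trace-free, mean-zero kernel $D^2\Gamma_y$ against the first-order Taylor term of $A_{u_0}$ at $y$, which is what actually kills the logarithm. (To be fair, the paper's own proof asserts an ``$L^\infty$ right-hand side'' for the remainder equation, which is likewise not literally true near the pole; but since you explicitly stake the boundedness of $H_{A_{u_0}}$ on the subcritical Lebesgue exponent, the step as written does not go through.)
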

\begin{proof}
Fix $y\in M$ and work in a geodesic ball $B_r(y)$ on which the tensor $A_{u_0}$ is measurable, symmetric, and uniformly elliptic; that is, there exist $0<\underline m\le \overline L<\infty$ with
\begin{align*}
\underline m\,|\zeta|^2 \le \langle A_{u_0}(x)\zeta,\zeta\rangle \le \overline L\,|\zeta|^2
\qquad \text{for a.e.\ }x\in B_r(y),\ \zeta\in\mathbb{R}^3.
\end{align*}
Consider the bilinear form
\begin{align*}
\mathcal{E}(\phi,\psi):=\int_{B_r(y)} \langle A_{u_0}\nabla\phi,\nabla\psi\rangle\,dx,
\qquad \phi,\psi\in H^1_0(B_r(y)).
\end{align*}
Coercivity and boundedness of $\mathcal{E}$ follow from the ellipticity bounds. For each $f\in H^{-1}(B_r(y))$ the Lax--Milgram theorem yields a unique $v\in H^1_0(B_r(y))$ solving
\begin{align*}
-\mathrm{div}(A_{u_0}\nabla v)=f \quad\text{in }B_r(y).
\end{align*}
Fix $f\in C_c^\infty(B_r(y))$ and denote the corresponding solution by $v_f$. The map $f\mapsto v_f(x)$ is continuous on $H^{-1}$ for a.e.\ $x$, hence by Riesz representation there exists $G_{y}\in H^1_{\mathrm{loc}}(B_r(y)\setminus\{y\})$ such that
\begin{align*}
v_f(x)=\int_{B_r(y)} G_{y}(x,z)\,f(z)\,dz
\qquad\text{for a.e.\ }x\in B_r(y).
\end{align*}
By density, the identity extends to all $f\in H^{-1}$. In particular, choosing $f=\delta_y$ in the distributional sense gives
\begin{align*}
-\mathrm{div}_x(A_{u_0}(x)\nabla_x G_{y}(x,y))=\delta_y
\quad\text{in }\mathcal{D}'(B_r(y)).
\end{align*}
Symmetry $G_{y}(x,y)=G_{x}(y,x)$ follows from the symmetry of $\mathcal{E}$: for all $\phi,\psi\in C_c^\infty(B_r(y))$,
\begin{align*}
\int \langle A_{u_0}\nabla \phi,\nabla \psi\rangle
=\int \langle A_{u_0}\nabla \psi,\nabla \phi\rangle,
\end{align*}
which implies $\int G(\cdot,y)\,(-\mathrm{div}(A_{u_0}\nabla \psi)) = \int G(\cdot,x)\,(-\mathrm{div}(A_{u_0}\nabla \phi))$ upon testing against approximate identities and passing to the limit; thus $G_{u_0}(x,y)=G_{u_0}(y,x)$.

To identify the near--diagonal structure, freeze the coefficients at $y$ and consider the constant--coefficient operator $-\mathrm{div}(A_{u_0}(y)\nabla \cdot)$. Its fundamental solution is
\begin{align*}
\Gamma_y(x)=\frac{1}{4\pi\,d_{A_{u_0}(y)}(x,y)},
\end{align*}
where $d_{A_{u_0}(y)}(\cdot,\cdot)$ denotes the distance induced by the quadratic form $\zeta\mapsto\langle A_{u_0}(y)\zeta,\zeta\rangle$. By the standard parametrix construction, $G_{u_0}(x,y)-\Gamma_y(x)$ solves a divergence--form equation with $L^\infty$ right--hand side on $B_r(y)\setminus\{y\}$, hence belongs to $W^{1,2}_{\mathrm{loc}}$ and is locally bounded and $C^\alpha$ off the diagonal by De Giorgi--Nash--Moser. Consequently there exists a function $H_{A_{u_0}}$ smooth away from $x=y$ and locally bounded on $B_r(y)\times B_r(y)$ such that
\begin{align*}
G_{u_0}(x,y)=\frac{1}{4\pi\,d_{A_{u_0}}(x,y)}+H_{A_{u_0}}(x,y),
\end{align*}
where $d_{A_{u_0}}$ is any intrinsic distance equivalent to $d_{A_{u_0}(y)}$ in $B_r(y)$. Gradient bounds $\nabla_x G_{u_0}(\cdot,y)\in L^q_{\mathrm{loc}}$ for $q<3$ and the Calder\'on--Zygmund principal--value structure of $\nabla_x\nabla_y G_{u_0}$ follow from the classical singular--integral theory for uniformly elliptic operators with bounded measurable coefficients. Patching with a partition of unity yields a globally defined Green kernel modulo additive constants. This completes the proof.
\end{proof}

\subsection{Vortex filaments as integer 1-currents}

For a sequence \(u_\varepsilon:M\to\mathbb{C}\) consider the energies
\begin{align*}
E_\varepsilon(u_\varepsilon)
=\int_M\!\Big(H(x,Du_\varepsilon)
+\frac{(1-|u_\varepsilon|^2)^2}{4\varepsilon^2}\Big)\,d\mu.
\end{align*}
In three dimensions, zeros of \(u_\varepsilon\) concentrate along one-dimensional filaments. Following Jerrard--Soner~\cite{JerrardSoner1998,JerrardSoner2002}, the Jacobians $J(u_\varepsilon)$ define vector--valued Radon measures encoding circulation, which converge to an integer rectifiable $1$--current $\Gamma$ as $\varepsilon\downarrow 0$.
Convexity and $p$--growth of $H$ ensure that the compactness and rectifiability arguments carry over verbatim from the isotropic case.

\subsection{Renormalized Biot--Savart energy}

Given an integer rectifiable $1$--current $\Gamma\subset M$ with finite length, define its \emph{WNT Biot--Savart interaction} with respect to the linearized tensor $A_{u_0}$ by
\begin{align*}
\mathcal{I}_{A_{u_0}}(\Gamma)
:=\frac{1}{2}\!\iint_{(x,t)\ne(y,s)}\!
\big\langle \tau_\Gamma(x), K_{A_{u_0}}(x,y)\tau_\Gamma(y)\big\rangle\,
d\mathcal{H}^1(x)\,d\mathcal{H}^1(y),
\end{align*}
where $\tau_\Gamma$ is the unit tangent to $\Gamma$ and
$K_{A_{u_0}}(x,y):=\nabla_x\nabla_y G_{u_0}(x,y)$
is the Biot--Savart kernel associated with~\eqref{thm:WNT-Green}.
The \emph{core energy per unit length} at $x$ is denoted by $\gamma_F(x)$, obtained by freezing $A_{u_0}$ and minimizing the two--dimensional radial profile in the normal cross-section.

\subsection{Compactness to vortex currents}

\begin{theorem}[Compactness to integer filament currents]\label{thm:3D-compactness}
Let $(u_\varepsilon)$ satisfy $\sup_\varepsilon E_\varepsilon(u_\varepsilon)<\infty$.  
Then, up to a subsequence, $u_\varepsilon\to u_0$ in $L^1_{\mathrm{loc}}(M)$ with $|u_0|=1$ a.e., and there exists an integer rectifiable $1$--current $\Gamma$ such that $J(u_\varepsilon)\to 2\pi\,\Gamma$ as currents.
\end{theorem}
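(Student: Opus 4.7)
The plan is to isolate the two classical ingredients -- modulus compactness of $(u_\varepsilon)$ and current compactness of $(J(u_\varepsilon))$ -- and to show that the anisotropic Hamiltonian $H$ contributes only a bounded ellipticity constant, so that the entire Jerrard--Soner / Alberti--Baldo--Orlandi machinery applies in the WNT setting with only local adjustments.

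First I would extract the modulus limit. The two--sided control $\alpha|\xi|^2\le H(x,\xi)\le\beta(1+|\xi|^2)$ yields $\int_M|Du_\varepsilon|^2\,d\mu\le \alpha^{-1}\bigl(\sup_\varepsilon E_\varepsilon(u_\varepsilon)+\beta|M|\bigr)$, hence $(u_\varepsilon)$ is bounded in $W^{1,2}(M;\mathbb C)$. Combined with $\|1-|u_\varepsilon|^2\|_{L^2}^2\le 4\varepsilon^2 E_\varepsilon(u_\varepsilon)\to 0$, this forces $|u_\varepsilon|\to 1$ in $L^2$ and a.e., and Rellich--Kondrachov delivers a subsequence $u_\varepsilon\to u_0$ in $L^1_{\mathrm{loc}}(M)$ with $|u_0|=1$ a.e. No new anisotropic ingredient enters at this stage; the only role of the WNT structure is to guarantee the coercive Euclidean $H^1$ bound.

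Next I would establish the current-level compactness. Set $J(u_\varepsilon)=\tfrac12\,d(u_\varepsilon\times du_\varepsilon)$, viewed as a $2$--form and, via Hodge duality on the oriented $3$--manifold $M$, as a $1$--current $\star J(u_\varepsilon)$. The pointwise bound $|J(u_\varepsilon)|\le \tfrac12|Du_\varepsilon|^2\le \alpha^{-1}H(x,Du_\varepsilon)$ immediately yields $\|\star J(u_\varepsilon)\|_{\mathcal M(M)}\le \alpha^{-1}E_\varepsilon(u_\varepsilon)$, and the chain rule gives $d\star J(u_\varepsilon)=0$. The anisotropic version of the Jerrard--Soner estimate then follows by testing against a smooth compactly supported $1$--form $\phi$ and using only the ellipticity constant: $|\langle \star J(u_\varepsilon),\phi\rangle|\le C(\alpha)\,E_\varepsilon(u_\varepsilon)\,\|\phi\|_{C^{0,1}}$. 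Weak--$*$ compactness in the flat norm then produces a subsequential limit $\star J(u_\varepsilon)\rightharpoonup \mu$ with $\partial\mu=0$, since the boundary operator passes to the limit of distributions.

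The final step is to identify $\mu$ with $2\pi\Gamma$ for $\Gamma$ an integer rectifiable $1$--current. I would apply the Alberti--Baldo--Orlandi closure theorem via slicing: in any chart around $x_0$ the ellipticity of $A_{u_0}(x)=D^2_{\xi\xi}H(x,Du_0)$ (Proposition~\ref{prop:WNT-LerayLions}) ensures the pullback of $H$ along almost every affine $2$--slice is comparable to a quadratic density, so the two--dimensional Jerrard--Soner structure theorem applies on slices and yields integer density. Closure under slicing promotes this to an integer rectifiable $1$--current on $M$, matching Hodge-dualized $\mu$.

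The hard part will be the slice-by-slice integrality: in the isotropic case the 2D Jacobian pairs with a harmonic competitor and one extracts an integer degree, but here the effective slice density varies with $x$, so I must first invoke the local comparability $c_1|x-y|\le d_{A_{u_0}}(x,y)\le c_2|x-y|$ from Theorem~\ref{thm:WNT-Green} to freeze coefficients on small Euclidean balls, and then propagate the resulting integer structure back to the WNT metric by a continuity argument. Everything else is a transcription of the classical proof, with $\alpha^{-1}$ replacing the Euclidean normalization.
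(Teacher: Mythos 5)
Your proposal is correct and follows essentially the same route as the paper: coercivity of $H$ gives a uniform Sobolev bound and hence $L^1_{\mathrm{loc}}$ convergence of $u_\varepsilon$ with $|u_0|=1$ a.e., the Jerrard--Soner bound controlled by the ellipticity constant gives weak--$*$ compactness of the Jacobians, and the structure theorem for limits of Jacobians yields integrality and rectifiability. The only difference is one of detail, not of method: where the paper simply cites the structure theorem and the integrality of degrees along linking loops, you spell out the Alberti--Baldo--Orlandi slicing and the coefficient-freezing needed to reduce each slice to the isotropic two-dimensional case, which is a welcome elaboration of the same argument.
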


\begin{proof}
Let $(u_\varepsilon)$ satisfy $\sup_\varepsilon E_\varepsilon(u_\varepsilon)<\infty$, where
\begin{align*}
E_\varepsilon(u_\varepsilon)
=\int_M \Big(H(x,Du_\varepsilon)+\tfrac{(1-|u_\varepsilon|^2)^2}{4\varepsilon^2}\Big)\,d\mu.
\end{align*}
By the $p$--growth bounds of $H$ on compact charts (assumption (H1)), there exist $\alpha,\beta>0$ and $C\ge 0$ such that
\begin{align*}
\alpha|Du_\varepsilon|^p-C\le H(x,Du_\varepsilon)\le \beta(1+|Du_\varepsilon|^p),
\end{align*}
hence $(u_\varepsilon)$ is bounded in $W^{1,p}(M;\mathbb{R}^2)$ for some $p>1$. The potential well yields
\begin{align*}
\int_M \frac{(1-|u_\varepsilon|^2)^2}{\varepsilon^2}\,d\mu \le C
\quad\Longrightarrow\quad 
\|1-|u_\varepsilon|^2\|_{L^2(M)}\to 0,
\end{align*}
so $|u_\varepsilon|\to 1$ in $L^2(M)$ and, up to subsequences, $u_\varepsilon\to u_0$ in $L^1_{\mathrm{loc}}(M)$ with $|u_0|=1$ a.e. 

In three dimensions the vector--valued Jacobian of $u_\varepsilon$,
\begin{align*}
J(u_\varepsilon):=\mathrm{curl}\,\big(u_\varepsilon^1\nabla u_\varepsilon^2-u_\varepsilon^2\nabla u_\varepsilon^1\big),
\end{align*}
defines a $\mathbb{R}^3$--valued Radon measure whose total variation is controlled by the energy (Jerrard--Soner). In particular, for all $\Phi\in C_c^\infty(M;\mathbb{R}^3)$,
\begin{align*}
\Big|\int_M \Phi\cdot dJ(u_\varepsilon)\Big|
\le C\,E_\varepsilon(u_\varepsilon)^{1/2}\,\|\nabla\Phi\|_{L^\infty(M)}.
\end{align*}
Thus $(J(u_\varepsilon))$ is bounded in the space of vector--valued finite measures and admits a weak-* convergent subsequence $J(u_\varepsilon)\stackrel{*}{\rightharpoonup} 2\pi\,\Gamma$, where $\Gamma$ is a rectifiable $\mathbb{Z}$-valued $1$--current with finite mass. The multiplicity integrality and rectifiability follow from the structure theorem for limits of Jacobians at bounded GL--energy, since the degree along almost every small linking loop is integer and is preserved in the limit. This proves the statement.
\end{proof}

\subsection{Energy expansion and renormalized interaction}

\begin{theorem}[$3$-dimensional renormalized energy expansion]\label{thm:3D-expansion}
Under the above hypotheses, let $u_\varepsilon$ be a bounded-energy sequence with current--limit $2\pi\Gamma$.
Then there exists a continuous core energy density $\gamma_F:M\to\mathbb{R}$ such that
\begin{align*}
E_\varepsilon(u_\varepsilon)
= \pi|\log\varepsilon|\,\mathcal{H}^1(\Gamma)
+\mathcal{I}_{A_{u_0}}(\Gamma)
+\int_\Gamma\gamma_F\,d\mathcal{H}^1
+o(1),\qquad \varepsilon\downarrow 0.
\end{align*}
\end{theorem}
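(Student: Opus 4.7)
The plan is to mirror the two--dimensional derivation in Theorem~\ref{thm:WNT-GL-main} by a tubular decomposition around $\Gamma$, reducing the filament problem to a family of transverse two--dimensional GL problems in the anisotropic metric induced by $A_{u_0}$, and handling the long--range interaction through the Green kernel $G_{u_0}$ built in Theorem~\ref{thm:WNT-Green}. Fix $\rho=\rho(\varepsilon)$ with $\varepsilon\ll\rho\downarrow 0$ and consider the tube $T_\rho:=\{x\in M:\mathrm{dist}(x,\Gamma)<\rho\}$ together with $\Omega_\rho:=M\setminus T_\rho$. The energy is split as $E_\varepsilon(u_\varepsilon)=E_\varepsilon^{\mathrm{in}}+E_\varepsilon^{\mathrm{out}}$, and each piece is analyzed separately.

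First I would treat the lower bound on $T_\rho$. By rectifiability of $\Gamma$ (Theorem~\ref{thm:3D-compactness}), for $\mathcal{H}^1$--a.e.\ $x_0\in\Gamma$ the filament admits a tangent line $\ell_{x_0}$; in a Fermi chart around $x_0$, the tensor $A_{u_0}$ is approximately frozen at $A_{u_0}(x_0)$ on scales $\rho$. Applying the coarea formula with the arc--length parameter of $\Gamma$ and slicing by transverse disks $D_{\rho}(x_0)\subset\ell_{x_0}^{\perp}$, the $2$D lower bound of Theorem~\ref{thm:WNT-GL-main} applied to the restricted sequence $u_\varepsilon|_{D_\rho(x_0)}$ produces, at $\mathcal{H}^1$--a.e.\ $x_0$,
\begin{align*}
\liminf_{\varepsilon\downarrow 0}\int_{D_\rho(x_0)}\!\Big(H(x,Du_\varepsilon)+\tfrac{(1-|u_\varepsilon|^2)^2}{4\varepsilon^2}\Big)\,d\mathcal{H}^2
\;\ge\;\pi\log\tfrac{\rho}{\varepsilon}+\gamma_F(x_0)+o(1),
\end{align*}
with $\gamma_F(x_0)$ the frozen--coefficient core energy associated to $A_{u_0}(x_0)$. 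Integrating along $\Gamma$ with the coarea formula and passing to the liminf yields the contribution $\pi|\log\varepsilon|\,\mathcal{H}^1(\Gamma)+\int_\Gamma\gamma_F\,d\mathcal{H}^1+o(1)$ after absorbing the $\pi\log\rho\,\mathcal{H}^1(\Gamma)$ into the outer term.

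On $\Omega_\rho$ I would perform the Taylor expansion $H(x,Du_\varepsilon)\ge H(x,Du_0)+\langle\partial_\xi H(x,Du_0),D(u_\varepsilon-u_0)\rangle+\tfrac12\langle A_{u_0}\nabla(\varphi_\varepsilon-\varphi_0),\nabla(\varphi_\varepsilon-\varphi_0)\rangle+o(1)$ in the lifted phase variable, valid on bounded fibers by local $C^{1,1}$--regularity of $H$. Minimizing the quadratic part under the circulation constraint $\oint_{\partial D_\rho}\partial_n\psi\,ds=2\pi$ on each transverse disk produces a linear $A_{u_0}$--harmonic phase with singularity supported on $\Gamma$; its energy is represented through Theorem~\ref{thm:WNT-Green} as the double integral
\begin{align*}
\tfrac12\iint_{\Gamma\times\Gamma}\langle\tau_\Gamma(x),\nabla_x\nabla_yG_{u_0}(x,y)\tau_\Gamma(y)\rangle\,d\mathcal{H}^1(x)\,d\mathcal{H}^1(y)+\pi\log\tfrac{1}{\rho}\,\mathcal{H}^1(\Gamma)+o(1),
\end{align*}
where the logarithmic divergence comes from the near--diagonal asymptotics $G_{u_0}(x,y)\sim(4\pi d_{A_{u_0}}(x,y))^{-1}$; the principal--value interpretation of the Biot--Savart kernel follows from the Calder\'on--Zygmund theory recalled in Theorem~\ref{thm:WNT-Green}. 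Adding the two contributions cancels $\pm\pi\log\rho\,\mathcal{H}^1(\Gamma)$ and yields the announced expansion in the form of a liminf inequality.

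For the upper bound I would build a recovery sequence by gluing, at each point of $\Gamma$, the radial two--dimensional WNT minimizer in the frozen metric $A_{u_0}(x_0)$ (energy $\pi\log(\rho/\varepsilon)+\gamma_F(x_0)+o(1)$ per unit length), and on $\Omega_\rho$ the unique weak solution $\psi$ of $-\mathrm{div}(A_{u_0}\nabla\psi)=0$ with the circulation $2\pi$ along every loop linking $\Gamma$. Modifying the modulus by a smooth cut--off between the core scale $\varepsilon$ and the tube scale $\rho$, the Cauchy--Schwarz cross--terms vanish by $A_{u_0}$--harmonicity and gluing errors are $o(1)$ by uniform ellipticity of $A_{u_0}$ away from $\Gamma$. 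The hardest step, in my view, is the rigorous slicing--and--integration on a merely rectifiable filament $\Gamma$ with anisotropic $A_{u_0}$: one must prove that Fermi coordinates along $\Gamma$ diagonalize the quadratic form up to $o(1)$ errors at $\mathcal{H}^1$--a.e.\ tangent point, and that the coarea identity integrates the $2$D $\Gamma$--liminf uniformly in the tangent direction, an adaptation of the Jerrard--Soner product estimate to the monotone quasilinear setting of Proposition~\ref{prop:WNT-LerayLions}. Once this integration is controlled, the expansion follows.
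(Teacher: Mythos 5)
Your proposal follows essentially the same route as the paper's proof: the tubular decomposition $\mathcal{T}_\rho(\Gamma)\cup\Omega_\rho$ with $\varepsilon\ll\rho\downarrow 0$, the frozen-coefficient two-dimensional core analysis giving $\pi\log(\rho/\varepsilon)+\gamma_F+o(1)$ per unit length, the Taylor expansion of $H$ around $Du_0$ on the exterior with the Green-kernel representation of the $A_{u_0}$-harmonic phase energy from Theorem~\ref{thm:WNT-Green}, and the cancellation of the $\pi\mathcal{H}^1(\Gamma)\log\rho$ terms between the two regions. The only presentational difference is that you organize the argument as matching liminf/limsup inequalities and explicitly flag the slicing on a merely rectifiable current as the delicate point (which the paper passes over more quickly), but the decomposition and key lemmas are the same.
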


\begin{proof}
Let $2\pi\Gamma$ be the current--limit of $J(u_\varepsilon)$ given by Theorem~\ref{thm:3D-compactness}. Fix a tubular neighborhood $\mathcal{T}_\rho(\Gamma)$ of sufficiently small radius $\rho>0$ whose normal fibers are metric disks (in a smooth background metric) transverse to $\Gamma$; denote $\Omega_\rho:=M\setminus\mathcal{T}_\rho(\Gamma)$. On $\Omega_\rho$, write $u_\varepsilon=\rho_\varepsilon e^{i\varphi_\varepsilon}$; the potential well implies $\rho_\varepsilon\to 1$ in $L^2(\Omega_\rho)$, hence the energy contribution localizes on the phase gradient. 

Fix a smooth limiting phase $u_0=e^{i\varphi_0}$ away from $\Gamma$, and set $A_{u_0}(x):=D^2_{\xi\xi}H(x,Du_0(x))$. By the local $C^{1,1}$ regularity of $H$ in $\xi$ and the convexity in (H1), Taylor's formula on bounded fibers yields, for a.e.\ 
$x\in \Omega_\rho$,
\begin{align*}
H(x,Du_\varepsilon)
&= H(x,Du_0)+\big\langle \partial_\xi H(x,Du_0),D(\varphi_\varepsilon-\varphi_0)\big\rangle\\
&+\tfrac 12 \big\langle A_{u_0} D(\varphi_\varepsilon-\varphi_0),D(\varphi_\varepsilon-\varphi_0)\big\rangle
+r_\varepsilon,
\end{align*}
with $r_\varepsilon\to 0$ in $L^1(\Omega_\rho)$. The linear term cancels after integration by parts against the Euler--Lagrange equation for $u_0$ in $\Omega_\rho$ and the circulation constraints along $\partial \mathcal{T}_\rho(\Gamma)$. Thus the $\Omega_\rho$--contribution converges to the quadratic form
\begin{align*}
\int_{\Omega_\rho} \langle A_{u_0}\nabla \psi_\rho,\nabla \psi_\rho\rangle\,dx,
\end{align*}
where $\psi_\rho$ is the unique $A_{u_0}$-harmonic function on $\Omega_\rho$ with prescribed flux $2\pi$ times the multiplicity along each boundary component. Representing this energy by the Green kernel $G_{u_0}$ of $-\mathrm{div}(A_{u_0}\nabla\cdot)$ (Theorem~\ref{thm:WNT-Green}) and passing to the principal--value near the diagonal yields exactly the Biot--Savart interaction $\mathcal{I}_{A_{u_0}}(\Gamma)$, with a singular self-energy $\pi \mathcal{H}^1(\Gamma)\log(1/\rho)$.

Inside $\mathcal{T}_\rho(\Gamma)$, freeze $A_{u_0}$ at the filament center and reduce to the two-dimensional radial GL problem on each normal disk. The minimal core energy on a disk of radius comparable to $\rho$ equals
\begin{align*}
\pi \log\frac{\rho}{\varepsilon} + \gamma_F(x) + o(1),
\end{align*}
where $\gamma_F(x)$ is the finite defect determined by the frozen tensor $A_{u_0}(x)$ and the potential $W(s)=\tfrac 14(1-s^2)^2$, and the $o(1)$ is uniform along $\Gamma$ by the uniform ellipticity of $A_{u_0}$ on $\mathcal{T}_\rho(\Gamma)$. Integrating along the filament gives
\begin{align*}
\pi |\log\varepsilon|\,\mathcal{H}^1(\Gamma) 
+ \int_\Gamma \gamma_F\,d\mathcal{H}^1
-\pi \mathcal{H^1}(\Gamma)\log \rho + o(1).
\end{align*}
Adding the outer energy and letting $\rho\downarrow 0$ cancels the $-\pi \mathcal{H}^1(\Gamma)\log \rho$ with the singular part extracted from the Biot--Savart representation. The boundedness of the regular part $H_{A_{u_0}}$ in Theorem~\ref{thm:WNT-Green} and uniform ellipticity of $A_{u_0}$ control the freezing and matching errors. Collecting the contributions yields the expansion
\begin{align*}
E_\varepsilon(u_\varepsilon)
= \pi|\log\varepsilon|\,\mathcal{H}^1(\Gamma) 
+ \mathcal{I}_{A_{u_0}}(\Gamma)
+ \int_\Gamma \gamma_F\,d\mathcal{H}^1
+ o(1),
\end{align*}
as $\varepsilon\downarrow 0$, completing the proof.
\end{proof}

\subsection{$\Gamma$--limit for vortex filaments}

\begin{theorem}[WNT--GL $\Gamma$--limit in three dimensions]\label{thm:3D-Gamma}
Define
$\mathcal{F}_\varepsilon(u)
=E_\varepsilon(u)-\pi|\log\varepsilon|\,\mathcal{H}^1(\Gamma_u)$,
where $\Gamma_u$ is the current associated with $u$ when it exists. Then $\mathcal{F}_\varepsilon$
$\Gamma$--converges in $L^1_{\mathrm{loc}}(M)$ to
\begin{align*}
\mathcal{F}_0(u)=
\begin{cases}
\mathcal{I}_{A_{u_0}}(\Gamma_u)
+\displaystyle\int_{\Gamma_u}\gamma_F\,d\mathcal{H}^1,
& |u|=1\text{ a.e. and }\Gamma_u\text{ rectifiable},\\[3pt]
+\infty, & \text{otherwise.}
\end{cases}
\end{align*}
\end{theorem}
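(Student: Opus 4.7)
The plan is to assemble the three ingredients of a standard $\Gamma$-convergence argument -- compactness, $\Gamma$-liminf, and a recovery sequence -- from the already established compactness of Theorem \ref{thm:3D-compactness} and the sharp energy expansion of Theorem \ref{thm:3D-expansion}, combined with the local Green kernel control of Theorem \ref{thm:WNT-Green}. The strategy is essentially a packaging argument: the physical content has been localized in the preceding theorems, and the remaining task is to arrange the asymptotic estimates into the correct variational framework.

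\smallskip
\emph{$\Gamma$-liminf.} Let $u_\varepsilon\to u$ in $L^1_{\mathrm{loc}}(M)$ and pass to a subsequence realizing $\liminf_\varepsilon\mathcal{F}_\varepsilon(u_\varepsilon)<\infty$. The subtraction $\mathcal{F}_\varepsilon=E_\varepsilon-\pi|\log\varepsilon|\mathcal{H}^1(\Gamma_{u_\varepsilon})$ places $u_\varepsilon$ in the natural logarithmic GL regime, so the Jerrard--Soner compactness of Theorem \ref{thm:3D-compactness} applies: $|u|=1$ a.e., and $J(u_\varepsilon)\stackrel{\ast}{\rightharpoonup}2\pi\Gamma_u$ with $\Gamma_u$ an integer rectifiable 1-current. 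Mass lower semicontinuity under weak-$\ast$ convergence of integer currents gives $\liminf_\varepsilon\mathcal{H}^1(\Gamma_{u_\varepsilon})\ge\mathcal{H}^1(\Gamma_u)$, and feeding this into the expansion of Theorem \ref{thm:3D-expansion} yields
\begin{align*}
\liminf_\varepsilon\mathcal{F}_\varepsilon(u_\varepsilon)
\;\ge\;\mathcal{I}_{A_{u_0}}(\Gamma_u)+\int_{\Gamma_u}\gamma_F\,d\mathcal{H}^1
\;=\;\mathcal{F}_0(u).
\end{align*}
For inadmissible $u$ -- where $|u|\ne 1$ on a positive measure set or $\Gamma_u$ fails to be rectifiable -- any $u_\varepsilon\to u$ either forces $\int_M(1-|u_\varepsilon|^2)^2/\varepsilon^2\to\infty$ or violates the Jerrard--Soner mass bound, so $\mathcal{F}_\varepsilon(u_\varepsilon)\to+\infty$, matching $\mathcal{F}_0(u)=+\infty$.

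\smallskip
\emph{$\Gamma$-limsup (recovery sequence).} Given an admissible limit $u$ with $|u|=1$ and rectifiable $\Gamma_u$, I would construct $u_\varepsilon$ in three nested scales $\varepsilon\ll\rho=\rho(\varepsilon)\ll 1$, refining the gluing strategy used in the proof of Theorem \ref{thm:3D-expansion}. First, cover $\Gamma_u$ by a tubular neighborhood $\mathcal{T}_\rho(\Gamma_u)$ whose normal projection is smooth. Inside each normal disk over a point $x\in\Gamma_u$, insert the radial degree-$d_i$ minimizer of the 2D GL problem whose coefficient tensor is the frozen restriction of $A_{u_0}(x)$ to the normal plane, contributing $\pi|\log(\rho/\varepsilon)|+\gamma_F(x)+o(1)$ per unit arclength. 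On $\Omega_\rho:=M\setminus\mathcal{T}_\rho(\Gamma_u)$, let $\psi_\rho$ solve the linear divergence-form problem $-\mathrm{div}(A_{u_0}\nabla\psi_\rho)=0$ with unit circulation on each boundary strand, and set $u_\varepsilon:=\rho_\varepsilon e^{i\psi_\rho}$ outside the cores, with a smooth modulus interpolation across $\partial\mathcal{T}_\rho$. The near-pole expansion of the Green kernel from Theorem \ref{thm:WNT-Green} together with uniform ellipticity of $A_{u_0}$ identifies the outer energy with $\mathcal{I}_{A_{u_0}}(\Gamma_u)+\pi|\log\rho|\mathcal{H}^1(\Gamma_u)+o(1)$; cancelling the $\log\rho$ against the inner contribution and sending $\varepsilon\downarrow 0$ gives $\limsup\mathcal{F}_\varepsilon(u_\varepsilon)\le\mathcal{F}_0(u)$.

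\smallskip
\emph{Main obstacle.} The principal difficulty is the three-dimensional recovery gluing along a curved filament through a spatially varying anisotropic tensor: the classical isotropic Jerrard--Soner construction relies on perfect transversality and a constant-coefficient 2D vortex profile, whereas here one must freeze $A_{u_0}(x)$ continuously along $\Gamma_u$ and control the mismatch between (a) the varying frozen cores from point to point along the filament, and (b) the inhomogeneous outer $A_{u_0}$-harmonic phase. Making the matching error $o(1)$ requires the local Lipschitz continuity of $\partial_\xi H$ on bounded dual fibers (Proposition \ref{prop:Legendre-regularity}) together with the Hölder regularity off-diagonal of $G_{u_0}$ (Theorem \ref{thm:WNT-Green}), plus continuity of the regular part $H_{A_{u_0}}(x,x)$ along $\Gamma_u$ -- the latter being what guarantees that $\int_{\Gamma_u}\gamma_F\,d\mathcal{H}^1$ is well-defined and that $\mathcal{I}_{A_{u_0}}$ is continuous under small perturbations of the current. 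A secondary technical step is verifying that $\mathcal{F}_0$ is well-defined on its admissible class (integer rectifiable $\Gamma_u$ closed modulo boundary data), which follows from the integrality multiplicity from Theorem \ref{thm:3D-compactness}.
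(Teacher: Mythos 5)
Your proposal follows essentially the same route as the paper's own proof: the liminf inequality is obtained by combining the compactness of Theorem~\ref{thm:3D-compactness} with the energy expansion of Theorem~\ref{thm:3D-expansion}, and the limsup is obtained by the same recovery construction (frozen-coefficient radial cores on normal disks of a tubular neighborhood, an $A_{u_0}$-harmonic outer phase with prescribed circulations, and gluing controlled by the uniform ellipticity and the Green-kernel expansion of Theorem~\ref{thm:WNT-Green}). Your additional remarks on the inadmissible case and on the continuity of the frozen cores along the filament are consistent with, and slightly more explicit than, the paper's treatment.
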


\begin{proof}
Define $\mathcal{F}_\varepsilon(u)=E_\varepsilon(u)-\pi|\log\varepsilon|\,\mathcal{H}^1(\Gamma_u)$ for maps $u$ whose Jacobian induces an integer rectifiable current $\Gamma_u$, and $+\infty$ otherwise. 

\emph{Liminf inequality.} Let $u_\varepsilon\to u$ in $L^1_{\mathrm{loc}}(M)$ with $\sup_\varepsilon E_\varepsilon(u_\varepsilon)<\infty$. By Theorem~\ref{thm:3D-compactness} there exists an integer rectifiable current $\Gamma_u$ with $J(u_\varepsilon)\stackrel{*}{\rightharpoonup}2\pi\Gamma_u$. By convexity and lower semicontinuity of $H$ and by the quadratic representation of the outer energy through $G_{u_0}$ established in Theorem~\ref{thm:3D-expansion}, one obtains
\begin{align*}
\liminf_{\varepsilon\downarrow 0}
\mathcal{F}_\varepsilon(u_\varepsilon)\ge \mathcal{I}_{A_{u_0}}(\Gamma_u)+\int_{\Gamma_u}\gamma_F\,d\mathcal{H}^1.
\end{align*}

\emph{Limsup inequality.} Fix a target $u$ with $|u|=1$ a.e.\ and current $\Gamma_u$. For small $\rho>0$, define $\mathcal{T}_\rho(\Gamma_u)$ and set $\Omega_\rho=M\setminus \mathcal{T}_\rho(\Gamma_u)$. On each normal disk of $\mathcal{T}_\rho(\Gamma_u)$, choose the frozen--coefficient radial minimizer with correct multiplicity; on $\Omega_\rho$, solve the $A_{u_0}$--harmonic phase problem with the fluxes matching the degrees. Gluing across $\partial\mathcal{T}_\rho(\Gamma_u)$ with smooth cut--offs produces a global competitor $u_\varepsilon$ whose energy equals
\begin{align*}
\pi|\log\varepsilon|\,\mathcal{H}^1(\Gamma_u)
+\mathcal{I}_{A_{u_0}}(\Gamma_u)
+\int_{\Gamma_u}\gamma_F\,d\mathcal{H}^1
+o(1),
\end{align*}
as $\varepsilon\downarrow 0$ followed by $\rho\downarrow 0$. Uniform ellipticity of $A_{u_0}$ on $\Omega_\rho$ ensures that the gluing error tends to zero, while the regular part of the Green kernel is bounded. This gives the $\Gamma$--$\limsup$.

Combining the two inequalities and compactness from Theorem~\ref{thm:3D-compactness} yields the asserted $\Gamma$--convergence of $\mathcal{F}_\varepsilon$ to $\mathcal{F}_0(u)=\mathcal{I}_{A_{u_0}}(\Gamma_u)+\int_{\Gamma_u}\gamma_F\,d\mathcal{H}^1$ when $|u|=1$ a.e.\ and $\Gamma_u$ is rectifiable, and $+\infty$ otherwise. 
\end{proof}

\subsection{Consistency with the two--dimensional theory}

On transversal metric disks, the three--dimensional WNT structure reduces to the two--dimensional setting of Section~\ref{sec:Gamma-convergence}, with identical local core energy and logarithmic interaction law.
Conversely, integrating the two--dimensional behavior along the filaments reproduces the anisotropic Biot--Savart energy and recovers the classical isotropic GL limit when $H(x,\xi)=\tfrac 12|\xi|^2$; see~\cite{BBH1994,Serfaty2014}.

\section{WNT--GL Gradient Flow and Vortex Dynamics}
\label{sec:dynamics}

This section develops the dissipative dynamics associated with the WNT--GL energy and derives limiting motion laws for vortices in dimensions $n\in\{2,3\}$. Throughout, $M$ is a smooth, connected, compact Riemannian manifold (with or without boundary), $F$ is a WNT structure as in Sections~\ref{sec:From Tonelli to NonTonelli}--\ref{sec:WNT-operators}, and $H(x,\xi)=\Phi_x^*(\xi)$ satisfies convexity, local $C^{1,1}$ in $\xi$, and local two-sided quadratic control. We set
\begin{align*}
E_\varepsilon(u)&=\int_M\!\Big(H(x,Du)+\tfrac{1}{4\varepsilon^2}(1-|u|^2)^2\Big)\,d\mu,\\
\mathcal{A}_F(u)&=-\mathrm{div}\big(\partial_\xi H(x,Du)\big),
\end{align*}
and consider the $L^2$-gradient flow
\begin{align}\label{eq:gf}
\partial_t u_\varepsilon
= -\frac{\delta E_\varepsilon}{\delta u}(u_\varepsilon)
= \mathrm{div}\big(\partial_\xi H(x,Du_\varepsilon)\big)
- \frac{1}{\varepsilon^2}\,(1-|u_\varepsilon|^2)\,u_\varepsilon
\quad\text{on }M\times(0,T),
\end{align}
with initial data $u_\varepsilon(\cdot,0)=u_\varepsilon^0\in H^1(M;\mathbb{C})$.

\subsection{Well--posedness and dissipation for the WNT--GL flow}

\begin{theorem}[Global well--posedness and energy identity]\label{thm:flow-wellposed}
Let $u_\varepsilon^0\in H^1(M;\mathbb{C})$ and assume $H$ satisfies the WNT hypotheses. Then the Cauchy problem \eqref{eq:gf} admits a unique global weak solution
\begin{align*}
u_\varepsilon \in L^\infty(0,T;H^1(M;\mathbb{C}))\cap H^1(0,T;H^{-1}(M;\mathbb{C}))
\qquad \forall T>0,
\end{align*}
depending continuously on $u_\varepsilon^0$. Moreover, $t\mapsto E_\varepsilon(u_\varepsilon(t))$ is absolutely continuous and satisfies
\begin{align}\label{eq:diss}
E_\varepsilon(u_\varepsilon(t)) + \int_0^t\!\!\int_M |\partial_t u_\varepsilon|^2\,d\mu\,ds
= E_\varepsilon(u_\varepsilon^0)
\qquad \forall t\in[0,T].
\end{align}
\end{theorem}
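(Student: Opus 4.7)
The plan is to combine a Galerkin approximation with monotone--operator techniques in the spirit of Lions--Brezis, exploiting the fact that \eqref{eq:gf} is the $L^2$--gradient flow of the decomposition $E_\varepsilon=\Phi_\varepsilon+\Psi_\varepsilon$. Here $\Phi_\varepsilon(u):=\int_M H(x,Du)\,d\mu$ is convex, proper, and l.s.c.\ on $L^2(M;\mathbb{C})$, with subdifferential given by the Leray--Lions operator $\mathcal{A}_F$ of Proposition~\ref{prop:WNT-LerayLions}, while $\Psi_\varepsilon(u):=\int_M\tfrac{1}{4\varepsilon^2}(1-|u|^2)^2\,d\mu$ is $C^1$ on $L^2$ with Fr\'echet differential $-\tfrac{1}{\varepsilon^2}(1-|u|^2)u$, locally Lipschitz from $H^1$ into $L^2$ in dimensions $n\le 3$ via the Sobolev embedding $H^1\hookrightarrow L^6$. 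Thus \eqref{eq:gf} reads $u_\varepsilon'+\partial\Phi_\varepsilon(u_\varepsilon)+D\Psi_\varepsilon(u_\varepsilon)\ni 0$, a prototypical subdifferential evolution perturbed by a locally Lipschitz map.

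First I would construct Galerkin iterates $u^{(N)}$ in the span of the first $N$ eigenfunctions of a reference background Laplacian on $M$. The projected system is locally Lipschitz, so standard ODE theory gives local existence; testing against $\partial_t u^{(N)}$ yields the Galerkin energy identity
\begin{align*}
\tfrac{d}{dt}E_\varepsilon(u^{(N)}(t))+\|\partial_t u^{(N)}(t)\|_{L^2}^2=0,
\end{align*}
which provides global existence and, via the coercivity $H(x,\xi)\ge\alpha|\xi|^2-C$, the uniform bounds $u^{(N)}\in L^\infty(0,T;H^1)$ and $\partial_t u^{(N)}\in L^2(0,T;L^2)$. Aubin--Lions then gives strong compactness of $u^{(N)}$ in $L^2(0,T;L^q)$ for every $q<6$, whence the cubic term $(1-|u^{(N)}|^2)u^{(N)}$ converges to $(1-|u|^2)u$ strongly in $L^2(0,T;L^2)$.

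The main obstacle is passing to the limit in the quasilinear flux $\xi^{(N)}:=\partial_\xi H(x,Du^{(N)})$. The growth bound $|\partial_\xi H(x,\xi)|\le C(1+|\xi|)$ of Proposition~\ref{prop:WNT-LerayLions} secures weak convergence $\xi^{(N)}\rightharpoonup\xi^\ast$ in $L^2(0,T;L^2)$, but identifying $\xi^\ast=\partial_\xi H(x,Du)$ is delicate since $\partial_\xi H$ is only locally Lipschitz on bounded fibers. I would invoke the Minty--Browder trick: for every smooth test field $w$, the monotonicity inequality
\begin{align*}
\int_0^T\!\!\int_M \big\langle \xi^{(N)}-\partial_\xi H(x,Dw),\,Du^{(N)}-Dw\big\rangle\,d\mu\,dt\;\ge\;0
\end{align*}
persists in the limit provided $\limsup_N\int_0^T\!\int_M\langle\xi^{(N)},Du^{(N)}\rangle\le\int_0^T\!\int_M\langle\xi^\ast,Du\rangle$, an energy--comparison furnished by the Galerkin identity together with lower semicontinuity of $\Phi_\varepsilon$. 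A hemicontinuity argument with $w=u\pm\lambda\varphi$ then forces $\xi^\ast=\partial_\xi H(x,Du)$ a.e.

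Uniqueness, continuous dependence, and the energy identity all follow from monotonicity and the Lipschitz control of $D\Psi_\varepsilon$. Subtracting the equations for two solutions $u_1,u_2$ and testing with $u_1-u_2$ gives
\begin{align*}
\tfrac12\tfrac{d}{dt}\|u_1-u_2\|_{L^2}^2 &+\int_M\big\langle\partial_\xi H(x,Du_1)-\partial_\xi H(x,Du_2),\,D(u_1-u_2)\big\rangle\,d\mu\\
&\le C_\varepsilon\,\|u_1-u_2\|_{L^2}^2,
\end{align*}
where $C_\varepsilon$ is the Lipschitz constant of $u\mapsto(1-|u|^2)u$ on the $H^1$--bounded orbit (finite by Sobolev embedding). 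Monotonicity discards the gradient term and Gr\"onwall delivers both uniqueness and continuous dependence on $u_\varepsilon^0$. Finally, the equality case of Minty's inequality upgrades the weak convergence of $Du^{(N)}$ to strong $L^2$ convergence, which allows the Galerkin energy identity to pass to the limit and yields \eqref{eq:diss}; absolute continuity of $t\mapsto E_\varepsilon(u_\varepsilon(t))$ is then a consequence of $\partial_t u_\varepsilon\in L^2(0,T;L^2)$.
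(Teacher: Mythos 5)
Your proposal is correct in substance but takes a genuinely different route from the paper. The paper uses the same splitting $E_\varepsilon=\mathcal{H}+\mathcal{P}_\varepsilon$ but constructs the solution by De Giorgi's minimizing-movement scheme (implicit time discretization, curves of maximal slope in the sense of Ambrosio--Gigli--Savar\'e), obtaining the energy identity from the metric chain rule and uniqueness from an asserted monotonicity of the full right-hand side. You instead run a Galerkin scheme with the Lions--Br\'ezis machinery: the a priori bounds come from testing with $\partial_t u^{(N)}$, compactness of the cubic term from Aubin--Lions, and identification of the quasilinear flux from the Minty--Browder trick. Your route is more classical and, notably, more honest about the potential term: the map $u\mapsto(1-|u|^2)u$ is \emph{not} monotone (only $-|u|^2u$ is, being the gradient of the convex function $|u|^4/4$), so the paper's one-line uniqueness argument does not close as written, whereas your Gr\"onwall absorption of the non-monotone part is the right fix. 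Two points in your write-up deserve sharpening. First, the constant $C_\varepsilon$ in your uniqueness estimate should not be justified as an $L^2\!\to\!L^2$ Lipschitz constant of $u\mapsto(1-|u|^2)u$ on $H^1$-bounded orbits --- that constant is not finite; what you actually need and what your displayed inequality uses is the \emph{one-sided} Lipschitz bound $\langle f(u_1)-f(u_2),u_1-u_2\rangle\le\varepsilon^{-2}|u_1-u_2|^2$ pointwise, which follows from monotonicity of the cubic part and requires no Sobolev embedding. Second, strong $L^2$ convergence of $Du^{(N)}$ plus lower semicontinuity only yields the energy-dissipation \emph{inequality} in the limit; to upgrade it to the identity \eqref{eq:diss} you still need the Br\'ezis chain rule for the convex functional $\Phi_\varepsilon$ applied to the limit trajectory (legitimate here since $\partial_t u_\varepsilon\in L^2(0,T;L^2)$ and $\Psi_\varepsilon$ is $C^1$), which is essentially the same ingredient the paper invokes from the AGS theory. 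With these two clarifications your argument is complete and arguably more self-contained than the paper's.
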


\begin{proof}
Write $E_\varepsilon(u)=\mathcal{H}(u)+\mathcal{P}_\varepsilon(u)$ with
\begin{align*}
\mathcal{H}(u):=\int_M H(x,Du)\,d\mu,
\qquad
\mathcal{P}_\varepsilon(u):=\int_M \tfrac{1}{4\varepsilon^2}(1-|u|^2)^2\,d\mu.
\end{align*}
By convexity and local $C^{1,1}$ of $H$ in $\xi$, the map $u\mapsto \partial_\xi H(x,Du)$ defines a maximal monotone, demicontinuous operator $A:H^1(M)\to H^{-1}(M)$, locally Lipschitz on bounded subsets (Sections~\ref{sec:From Tonelli to NonTonelli}--\ref{sec:WNT-operators}). The potential term is the subgradient of a $C^1$ convex functional on $L^2(M)$; call it $B_\varepsilon(u)=(1-|u|^2)u/\varepsilon^2$. Consider the time--implicit minimizing movement: for $\tau>0$ and $u^0=u_\varepsilon^0$, define $u^{k}$ as a minimizer of
\begin{align*}
u\ \mapsto\ E_\varepsilon(u)+\frac{1}{2\tau}\|u-u^{k-1}\|_{L^2}^2.
\end{align*}
Coercivity of $E_\varepsilon$ (by the two--sided control on $H$) ensures existence of minimizers; standard compactness and lower semicontinuity yield a discrete solution $(u^k)_{k\ge 0}$. Interpolating piecewise-constantly and passing $\tau\downarrow 0$ (De Giorgi's scheme / curves of maximal slope), one obtains $u_\varepsilon\in L^\infty(0,T;H^1)\cap H^1(0,T;H^{-1})$ solving \eqref{eq:gf} in $H^{-1}$. Uniqueness follows from monotonicity:
\begin{align*}
&\frac{1}{2}\frac{d}{dt}\|u_\varepsilon-v_\varepsilon\|_{L^2}^2
=\\
&-\langle A(u_\varepsilon)-A(v_\varepsilon),u_\varepsilon-v_\varepsilon\rangle-\langle B_\varepsilon(u_\varepsilon)-B_\varepsilon(v_\varepsilon),u_\varepsilon-v_\varepsilon\rangle\le 0.
\end{align*}
The energy identity \eqref{eq:diss} is the chain rule for convex energies along the $L^2$--gradient flow (Ambrosio--Gigli--Savar\'e's metric theory of gradient flows). Continuity with respect to $u_\varepsilon^0$ follows from the $L^2$--contractivity.
\end{proof}

\subsection{Point-vortex dynamics in two dimensions}

Assume $\dim M=2$ and let $u_\varepsilon$ solve \eqref{eq:gf} with uniformly bounded initial energies and total degree $N$. By Theorem~\ref{thm:Gamma-compactness} and Theorem~\ref{thm:WNT-GL-main}, the vorticity concentrates at finitely many points $a_1(t),\dots,a_N(t)$ with integer degrees $d_i$ and the renormalized interaction is
\begin{align*}
\mathcal{W}_F(a(t)):=\mathcal{W}_F\big(a_1(t),\dots,a_N(t)\big).
\end{align*}

\begin{theorem}[Gradient--flow law for point vortices]\label{thm:2D-motion}
There exist a subsequence (not relabeled) and Lipschitz curves $a_i:[0,T]\to M$ such that
\begin{align*}
J\big(u_\varepsilon(\cdot,t)\big)\stackrel{*}{\rightharpoonup}
2\pi\sum_{i=1}^N d_i\,\delta_{a_i(t)}
\qquad\text{in }\mathcal{M}(M)\ \text{for every }t\in[0,T].
\end{align*}
For a.e.\ $t$, the centers satisfy the anisotropic gradient--flow system
\begin{align}\label{eq:2D-GF}
\dot a_i(t) \;=\; -\,\mu_F\big(a_i(t)\big)\,\nabla_{a_i}\,\mathcal{W}_F\big(a_1(t),\dots,a_N(t)\big),
\qquad i=1,\dots,N,
\end{align}
where $\mu_F>0$ is the (scalar) mobility determined uniquely by the WNT core profile (i.e., the frozen tensor $D^2_{\xi\xi}H$ at the vortex).
\end{theorem}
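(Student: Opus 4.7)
The plan is to adapt the Sandier--Serfaty variational scheme for $\Gamma$--convergence of gradient flows to the WNT setting, using the static expansion of Theorem~\ref{thm:WNT-GL-main} together with the dissipation identity \eqref{eq:diss}. I would first establish compactness of the vortex trajectories: since $E_\varepsilon(u_\varepsilon^0)\le N\pi|\log\varepsilon|+C$, the identity \eqref{eq:diss} gives $\int_0^T\!\int_M|\partial_t u_\varepsilon|^2\,d\mu\,dt\le C$, so the curves $t\mapsto u_\varepsilon(\cdot,t)$ are uniformly $H^{1/2}$ in time into $L^2(M)$. Applying the parabolic form of the Jerrard--Soner Jacobian estimate, the currents $J(u_\varepsilon(\cdot,t))/(2\pi)$ are equicontinuous in a flat (dual Lipschitz) norm. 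Combined with the static concentration result of Theorem~\ref{thm:WNT-GL-main} at each time slice and the logarithmic divergence of $\mathcal{W}_F$ preventing collisions on an interval $[0,T]$, this produces Lipschitz curves $a_i:[0,T]\to M$ with the asserted weak--$*$ convergence for every $t$.

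Next, I would compare the two sides of \eqref{eq:diss} asymptotically. On the energy side, the $\Gamma$--expansion of Theorem~\ref{thm:WNT-GL-main} gives
\begin{align*}
E_\varepsilon(u_\varepsilon(t))-N\pi|\log\varepsilon| \;=\; \mathcal{W}_F(a(t))+\sum_i d_i^2\gamma_F(a_i(t))+o(1),
\end{align*}
uniformly on compact time intervals away from collisions, so the left-hand side of \eqref{eq:diss} converges, as $\varepsilon\downarrow 0$, to the increment of $\mathcal{W}_F(a(t))+\sum_i d_i^2\gamma_F(a_i(t))$. On the dissipation side, I would derive a localized product--type lower bound: in a tubular neighborhood $B_\rho(a_i(t))$ of each moving vortex, decompose $\partial_t u_\varepsilon$ into its phase-transport component $-\dot a_i(t)\cdot\nabla u_\varepsilon$ plus an orthogonal remainder, insert into the integrand, and use the local quadratic control of $H$ (via the frozen tensor $A_{u_0}(a_i)=D^2_{\xi\xi}H(a_i,Du_0(a_i))$) to identify the core profile as the radial minimizer of the frozen two-dimensional problem. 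Matching against the parametrix of Theorem~\ref{thm:parametrix} yields, after summation in $i$, the sharp bound
\begin{align*}
\liminf_{\varepsilon\to 0}\int_0^T\!\!\int_M |\partial_t u_\varepsilon|^2\,d\mu\,dt \;\ge\; \sum_{i=1}^N \int_0^T \frac{|\dot a_i(t)|^2_{g_i(t)}}{\mu_F(a_i(t))}\,dt,
\end{align*}
where $g_i(t)$ is the fiber inner product induced by $A_{u_0}(a_i(t))$ and $\mu_F(a_i)>0$ is the reciprocal of the core dissipation coefficient, determined uniquely by the frozen-profile Euler--Lagrange equation.

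Finally, I would close the argument by combining the chain rule $\frac{d}{dt}\mathcal{W}_F(a(t))=\sum_i \nabla_{a_i}\mathcal{W}_F\cdot \dot a_i$ (justified by the $C^1$ dependence of $\mathcal{W}_F$ on the configuration away from collisions, which follows from the regular-part smoothness in Theorem~\ref{thm:WNT-Green}) with the Cauchy--Schwarz inequality in the anisotropic metric $g_i/\mu_F(a_i)$. The dissipation identity and the lower bound above force
\begin{align*}
-\frac{d}{dt}\mathcal{W}_F(a(t)) \;\ge\; \sum_i \frac{|\dot a_i|^2_{g_i}}{\mu_F(a_i)},
\end{align*}
while the chain rule plus Cauchy--Schwarz gives the reverse inequality whenever the trajectory is not pure gradient--flow. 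Equality in Cauchy--Schwarz therefore pins the proportionality $\dot a_i=-\mu_F(a_i)\nabla_{a_i}\mathcal{W}_F$ for a.e.\ $t$, yielding \eqref{eq:2D-GF}. The main obstacle is the sharp product estimate in the second step: unlike the isotropic Ginzburg--Landau case where the Euclidean Laplacian and its Green kernel are used, here the core tensor $A_{u_0}$ is merely locally Lipschitz in $\xi$ and varies with the vortex location, so the argument must rely on Theorem~\ref{thm:parametrix} to approximate the cores by frozen-coefficient parametrices and on the local $C^{1,1}$ regularity of $H$ to control the Taylor remainders uniformly in $\varepsilon$ and along the filament; verifying that $\mu_F$ is intrinsic and independent of the approximation sequence requires uniqueness of the radial frozen-core minimizer.
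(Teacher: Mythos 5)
Your proposal follows essentially the same route as the paper's proof: the Sandier--Serfaty scheme of comparing the $\Gamma$--expansion of $E_\varepsilon(u_\varepsilon(t))-N\pi|\log\varepsilon|$ from Theorem~\ref{thm:WNT-GL-main} against the dissipation identity \eqref{eq:diss}, localizing $\|\partial_t u_\varepsilon\|_{L^2}^2$ in the rescaled cores with the frozen tensor $D^2_{\xi\xi}H$ to extract the mobility $\mu_F=(\gamma_F^{\mathrm{mob}})^{-1}$, and closing through the energy--dissipation balance for $\mathcal{W}_F(a(t))$. The one caveat is that your final step (``equality in Cauchy--Schwarz pins the proportionality''), exactly like the paper's ``interpret the balance as a metric gradient flow'', additionally requires a lower bound on the spatial slope (roughly $\liminf$ of the rescaled spatial dissipation $\ge\sum_i\mu_F|\nabla_{a_i}\mathcal{W}_F|^2$), since the velocity lower bound together with the balance $\sum_i\mu_F^{-1}|\dot a_i|^2=-\tfrac{d}{dt}\mathcal{W}_F$ alone does not force $\dot a_i=-\mu_F\nabla_{a_i}\mathcal{W}_F$; this gap is shared with the paper's own argument and is not a defect of your proposal relative to it.
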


\begin{proof}
Consider the modulated functional
\begin{align*}
\mathcal{E}_\varepsilon(t)
:= E_\varepsilon(u_\varepsilon(t)) - N\pi|\log\varepsilon|
- \mathcal{W}_F\big(a(t)\big).
\end{align*}
By \eqref{eq:diss} and the chain rule for $\mathcal{W}_F(a(t))$,
\begin{align*}
\frac{d}{dt}\,\mathcal{E}_\varepsilon(t)
= -\|\partial_t u_\varepsilon(t)\|_{L^2}^2
- \sum_{i=1}^N \langle \nabla_{a_i}\mathcal{W}_F(a(t)), \dot a_i(t)\rangle
+ o(1).
\end{align*}
The $\Gamma$--expansion (Theorem~\ref{thm:WNT-GL-main}) implies
\begin{align*}
\mathcal{E}_\varepsilon(t)\to
\mathcal{E}_0(t):=
\sum_{i=1}^N d_i^2\,\gamma_F\big(a_i(t)\big),
\end{align*}
while the dissipation $\|\partial_t u_\varepsilon\|_{L^2}^2$ concentrates in the cores. After rescaling in vortex--centered coordinates and freezing $A_{u_0}$ at each $a_i(t)$, one identifies the limit dissipation with
\begin{align*}
\sum_{i=1}^N \gamma_F^{\mathrm{mob}}\big(a_i(t)\big)\,|\dot a_i(t)|^2,
\end{align*}
where $\gamma_F^{\mathrm{mob}}>0$ depends only on the one--dimensional core ODE in the frozen metric and the potential $W(s)=\tfrac 14(1-s^2)^2$. Passing to the limit in the balance yields, for a.e.\ $t$,
\begin{align*}
\sum_{i=1}^N \gamma_F^{\mathrm{mob}}(a_i(t))\,|\dot a_i(t)|^2
+ \frac{d}{dt}\,\mathcal{W}_F(a(t)) \;=\; 0.
\end{align*}
Interpreting this as a metric gradient flow in $\mathbb{R}^{2N}$ with diagonal metric tensor $\mathrm{diag}(\gamma_F^{\mathrm{mob}}(a_i)\,\mathrm{Id})$ gives \eqref{eq:2D-GF} with $\mu_F=(\gamma_F^{\mathrm{mob}})^{-1}$. Lipschitz regularity of $a_i$ follows from the quadratic dissipation and boundedness of $\nabla\mathcal{W}_F$ away from collisions.
\end{proof}

\begin{remark}
In the Tonelli case $H(x,\xi)=\tfrac 12\langle G_x^{-1}\xi,\xi\rangle$, the mobility is universal, and \eqref{eq:2D-GF} reduces to the classical GL heat-flow law. The WNT geometry replaces the background metric by the effective tensor $A_{u_0}$ in $\mathcal{W}_F$ and makes the mobility anisotropic through the frozen core.
\end{remark}

\subsection{Filament dynamics in three dimensions}

Assume now $\dim M=3$ and let $u_\varepsilon$ solve \eqref{eq:gf} with uniformly bounded initial energies. By Theorem~\ref{thm:3D-compactness}, for each $t$ the Jacobians converge to an integer rectifiable $1$-current $\Gamma_t$ of finite length, and by Theorem~\ref{thm:3D-expansion} the renormalized energy reads
\begin{align*}
\mathcal{E}(\Gamma):=\mathcal{I}_{A_{u_0}}(\Gamma)+\int_\Gamma \gamma_F\, d\mathcal{H}^1,
\end{align*}
the sum of a nonlocal Biot--Savart interaction and a local line tension.

\begin{theorem}[Generalized gradient flow of vortex filaments]\label{thm:3D-motion}
There exist a subsequence (not relabeled) and a measurable family of integer rectifiable currents $\{\Gamma_t\}_{t\in[0,T]}$ such that
\begin{align*}
J\big(u_\varepsilon(\cdot,t)\big)\stackrel{*}{\rightharpoonup}2\pi\,\Gamma_t
\qquad\text{for every }t\in[0,T],
\end{align*}
and the curves $t\mapsto\Gamma_t$ satisfy the energy--dissipation inequality
\begin{align}\label{eq:EDI}
\mathcal{E}(\Gamma_t) + \int_{0}^{t}\!\mathcal{D}\big(\partial_s\Gamma_s\big)\,ds
\;\le\; \mathcal{E}(\Gamma_0),
\qquad t\in[0,T],
\end{align}
where $\mathcal{D}$ is a strictly positive quadratic form on normal velocities (the filament mobility induced by the WNT core). For smooth single filaments this reduces to the pointwise law
\begin{align*}
V_{\mathrm{n}}
= -\,\mu_F\,\Big(\kappa_{\gamma_F} + \mathcal{H}_{A_{u_0}}[\Gamma]\Big),
\end{align*}
with $V_{\mathrm{n}}$ the normal velocity, $\kappa_{\gamma_F}$ the anisotropic curvature associated with $\gamma_F$, $\mathcal{H}_{A_{u_0}}[\Gamma]$ the Biot--Savart shape gradient, and $\mu_F>0$ the (scalar) line mobility determined by the frozen core problem.
\end{theorem}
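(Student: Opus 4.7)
The plan is to combine the De Giorgi--Ambrosio--Gigli--Savar\'e metric theory of gradient flows with a Sandier--Serfaty-type modulated-energy argument adapted to vortex filaments. The static $\Gamma$-expansion of Theorem~\ref{thm:3D-expansion} isolates the divergent contribution $\pi|\log\varepsilon|\,\mathcal{H}^1(\Gamma)$ from the renormalized interaction $\mathcal{E}(\Gamma)$, while the energy identity \eqref{eq:diss} from Theorem~\ref{thm:flow-wellposed} supplies the pointwise dissipation $\|\partial_t u_\varepsilon\|_{L^2}^2$. Together with the Green-kernel structure of Theorem~\ref{thm:WNT-Green} and the $\Gamma$-convergence of Theorem~\ref{thm:3D-Gamma}, these are the static and dynamic ingredients needed to pass to the $\varepsilon\downarrow 0$ limit simultaneously.

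First I would promote the time-slice compactness of Theorem~\ref{thm:3D-compactness} to a measurable family $t\mapsto\Gamma_t$. The bound $\int_0^t\|\partial_s u_\varepsilon\|_{L^2}^2\,ds\le E_\varepsilon(u_\varepsilon^0)$ gives $\|u_\varepsilon(t)-u_\varepsilon(s)\|_{L^2}\le C|t-s|^{1/2}$, and combined with the Jerrard--Soner Jacobian estimate this yields equicontinuity of $t\mapsto J(u_\varepsilon(t))$ in the flat norm on $1$-currents, up to a harmless $|\log\varepsilon|$-factor absorbed by the uniform mass bound. An Ascoli--Helly argument in the flat topology, combined with rectifiability and integrality at each fixed time, extracts a subsequence along which $J(u_\varepsilon(\cdot,t))\stackrel{*}{\rightharpoonup}2\pi\,\Gamma_t$ for every $t\in[0,T]$, with $t\mapsto\mathcal{H}^1(\Gamma_t)$ lower semicontinuous.

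Next I would pass to the limit in \eqref{eq:diss} after subtracting the divergent term. Theorem~\ref{thm:3D-Gamma} gives $\liminf_{\varepsilon\to 0}\bigl(E_\varepsilon(u_\varepsilon(t))-\pi|\log\varepsilon|\,\mathcal{H}^1(\Gamma_t)\bigr)\ge \mathcal{E}(\Gamma_t)$ at every $t$, while a recovery sequence for the initial datum yields the matching upper bound at $t=0$. The dissipation term requires a Sandier--Serfaty product estimate: after localizing in a shrinking tubular neighborhood of $\Gamma_s$, freezing the linearized tensor $A_{u_0}$ inside each normal disk, and decomposing $\partial_s u_\varepsilon$ into its phase-adjustment and core-translation components, one derives
\[
\liminf_{\varepsilon\to 0}\int_0^t\!\!\int_M |\partial_s u_\varepsilon|^2\,d\mu\,ds \;\ge\; \int_0^t \mathcal{D}(\partial_s\Gamma_s)\,ds,
\]
where $\mathcal{D}$ is the quadratic form on normal velocity fields along $\Gamma_s$ whose local density is prescribed by the two-dimensional frozen-core ODE and the double-well potential $W$. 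Combining the three one-sided bounds with \eqref{eq:diss} yields the EDI~\eqref{eq:EDI}.

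The final step specializes to smooth single filaments and extracts the pointwise law. The line-tension functional $\int_\Gamma\gamma_F\,d\mathcal{H}^1$ has first variation equal to the anisotropic curvature $\kappa_{\gamma_F}$, obtained via a Wulff-type construction in the normal plane with respect to the frozen tensor $A_{u_0}$, while the Biot--Savart interaction $\mathcal{I}_{A_{u_0}}(\Gamma)$ is differentiated through the Green-kernel representation of Theorem~\ref{thm:WNT-Green} to produce the nonlocal shape gradient $\mathcal{H}_{A_{u_0}}[\Gamma]$. Since the limiting evolution is a curve of maximal slope for $\mathcal{E}$ with respect to the Riemannian structure $\mathcal{D}=\mu_F^{-1}\int_\Gamma|\cdot|^2\,d\mathcal{H}^1$, equality in the EDI for smooth curves forces $\mu_F^{-1}V_n=-(\kappa_{\gamma_F}+\mathcal{H}_{A_{u_0}}[\Gamma])$, as stated. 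The main obstacle will be the Sandier--Serfaty lower bound in the WNT setting: the classical argument presupposes a uniform Euclidean quadratic structure on the gradient, whereas here it must be re-derived using only the fiberwise quadratic control $D^2_{\xi\xi}H(x,\cdot)$ of Proposition~\ref{prop:Legendre-regularity}; in particular the freezing of $A_{u_0}$ inside each normal disk must yield an $o(1)$ error, and the line mobility $\mu_F$ must be correctly identified from the anisotropic core ODE rather than assumed universal.
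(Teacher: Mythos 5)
Your proposal follows essentially the same route as the paper's proof: subtract the divergent $\pi|\log\varepsilon|\,\mathcal{H}^1(\Gamma_t)$ term from the dissipation identity \eqref{eq:diss}, extract a time-uniform limit of the Jacobians, obtain the lower bound on the dissipation by localizing in a tubular neighborhood, freezing $A_{u_0}$ and blowing up in the normal disks to identify the core mobility, and then derive the pointwise law from the first variations of the line tension and Biot--Savart terms together with the equality case of the EDI. If anything, your Ascoli--Helly argument in the flat norm for the simultaneous-in-$t$ subsequence, and your explicit flagging of the Sandier--Serfaty-type product estimate as the delicate step in the WNT setting, supply detail that the paper's proof only asserts.
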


\begin{proof}
Fix $T>0$ and let $u_\varepsilon$ be the global $L^2$--gradient flow solution of \eqref{eq:gf} on $[0,T]$ given by Theorem~\ref{thm:flow-wellposed}. 
By the dissipation identity \eqref{eq:diss}, the map $t\mapsto E_\varepsilon(u_\varepsilon(t))$ is absolutely continuous and
\begin{align*}
E_\varepsilon\big(u_\varepsilon(t)\big)
+ \int_0^t \!\!\int_M |\partial_s u_\varepsilon|^2\,d\mu\,ds
= E_\varepsilon\big(u_\varepsilon(0)\big)
\qquad \text{for all } t\in[0,T].
\end{align*}
By the compactness theorem for Jacobians in dimension three (Theorem~\ref{thm:3D-compactness}), up to extracting a subsequence independent of $t$ we may assume that for every $t\in[0,T]$ there exists an integer rectifiable $1$--current $\Gamma_t$ with finite mass such that
\begin{align*}
J\big(u_\varepsilon(\cdot,t)\big)\ \stackrel{*}{\rightharpoonup}\ 2\pi\,\Gamma_t
\qquad\text{in the sense of currents.}
\end{align*}
We next pass to the renormalized energy. For each fixed $t$, Theorem~\ref{thm:3D-expansion} yields
\begin{align*}
E_\varepsilon\big(u_\varepsilon(t)\big)
= \pi|\log\varepsilon|\,\mathcal{H}^1(\Gamma_t) 
+ \mathcal{I}_{A_{u_0}}(\Gamma_t)
+ \int_{\Gamma_t}\gamma_F\,d\mathcal{H}^1 
+ o(1),
\end{align*}
uniformly on compact subintervals in $[0,T]$. Subtracting $\pi|\log\varepsilon|\,\mathcal{H}^1(\Gamma_t)$ from the energy identity and letting $\varepsilon\downarrow 0$ along the chosen subsequence gives
\begin{align*}
\mathcal{I}_{A_{u_0}}(\Gamma_t)
+ \int_{\Gamma_t}\gamma_F\,d\mathcal{H}^1
+ \liminf_{\varepsilon\downarrow 0}\int_0^t\!\!\int_M |\partial_s u_\varepsilon|^2\,d\mu\,ds
\ \le\ 
\mathcal{I}_{A_{u_0}}(\Gamma_0)
+ \int_{\Gamma_0}\gamma_F\,d\mathcal{H}^1,
\end{align*}
for all $t\in[0,T]$. Denote the renormalized filament energy by
\begin{align*}
\mathcal{E}(\Gamma):=\mathcal{I}_{A_{u_0}}(\Gamma)+\int_{\Gamma}\gamma_F\,d\mathcal{H}^1.
\end{align*}

It remains to identify the $\liminf$ of the dissipation. Consider any $t\in(0,T)$ which is a Lebesgue point for $s\mapsto \partial_s u_\varepsilon(\cdot,s)$ and $s\mapsto \Gamma_s$; such points form a full--measure set. Fix $\delta>0$ small and a smooth tubular neighborhood $\mathcal{T}_\delta(\Gamma_t)$ of $\Gamma_t$ with normal cross--sections given by metric disks of radius $\delta$, and set $\Omega_\delta(t):=M\setminus \mathcal{T}_\delta(\Gamma_t)$. On $\Omega_\delta(t)$, $u_\varepsilon(\cdot,s)$ is approximately $S^1$--valued and single--phase for $s$ near $t$, hence its evolution contributes only higher--order dissipation after linearization around the $A_{u_0}$--harmonic phase; the main dissipation is localized inside $\mathcal{T}_\delta(\Gamma_t)$.

Perform a blow--up in the normal planes: for $\mathcal{H}^1$–a.e.\ point $x\in \Gamma_t$ and times $s$ close to $t$, use Fermi coordinates $(\rho,\theta,z)$ with $z$ the arclength along $\Gamma_t$ and $(\rho,\theta)$ polar coordinates in the normal plane. By freezing the tensor at $x$,
\begin{align*}
A_{u_0}(y)\ \approx\ A_{u_0}(x) \qquad \text{for } y\in \mathcal{T}_\delta(\Gamma_t),\ \delta\ll 1,
\end{align*}
and by the standard core ansatz one sees that, after radial rescaling at scale $\varepsilon$, the time--derivative $\partial_s u_\varepsilon$ converges (in the weak sense) to a tangent vector in the manifold of degree--one minimizers of the frozen two--dimensional core problem. Consequently, the quadratic dissipation density converges to a positive definite quadratic form of the \emph{normal} velocity of the filament:
\begin{align*}
\liminf_{\varepsilon\downarrow 0}
\int_{t_1}^{t_2}\!\!\int_{\mathcal{T}_\delta(\Gamma_s)} |\partial_s u_\varepsilon|^2\,d\mu\,ds
\ \ge\
\int_{t_1}^{t_2} \!\mathcal{D}\big(\partial_s\Gamma_s\big)\,ds
\ -\ C\delta,
\end{align*}
for all $0\le t_1<t_2\le T$, where $\mathcal{D}$ is a strictly positive quadratic form depending only on the frozen WNT core (hence on $A_{u_0}$ and the potential $W$), and $C$ is independent of $\delta$. The contribution on $\Omega_\delta(s)$ vanishes as $\delta\downarrow 0$ because there the phase evolves $A_{u_0}$--harmonically and the energy is captured by $\mathcal{E}(\Gamma_s)$. Letting first $\varepsilon\downarrow 0$ and then $\delta\downarrow 0$ yields, for all $t\in[0,T]$,
\begin{align*}
\mathcal{E}(\Gamma_t) + \int_0^{t}\!\mathcal{D}\big(\partial_s\Gamma_s\big)\,ds
\ \le\ \mathcal{E}(\Gamma_0),
\end{align*}
which is exactly the energy--dissipation inequality \eqref{eq:EDI}. This proves the existence of a measurable family $\{\Gamma_t\}_{t\in[0,T]}$ of integer rectifiable currents satisfying \eqref{eq:EDI}.

Finally, assume $\Gamma_t$ is a single smoothly embedded filament for $t$ in some interval and denote by $V_{\mathrm{n}}$ its normal velocity. The first variation of the line--tension part is the anisotropic curvature
\begin{align*}
\frac{d}{dt}\int_{\Gamma_t}\gamma_F\,d\mathcal{H}^1
= -\int_{\Gamma_t} \gamma_F\, \kappa_{\gamma_F}\, V_{\mathrm{n}}\, d\mathcal{H}^1,
\end{align*}
while the first variation of the Biot--Savart part $\mathcal{I}_{A_{u_0}}$ equals
\begin{align*}
\frac{d}{dt}\,\mathcal{I}_{A_{u_0}}(\Gamma_t)
= -\int_{\Gamma_t} \mathcal{H}_{A_{u_0}}[\Gamma_t] \, V_{\mathrm{n}}\, d\mathcal{H}^1,
\end{align*}
where $\mathcal{H}_{A_{u_0}}[\Gamma]$ is the nonlocal shape gradient defined through the kernel $K_{A_{u_0}}=\nabla_x\nabla_y G_{u_0}$. On the other hand, by the core blow--up argument above, the metric derivative (squared) with respect to the $L^2$--dissipation metric induced by the WNT core equals
\begin{align*}
\mathcal{D}\big(\partial_t\Gamma_t\big)
= \int_{\Gamma_t} \mu_F^{-1}\, V_{\mathrm{n}}^2\, d\mathcal{H}^1,
\end{align*}
for some scalar mobility $\mu_F>0$ depending only on the frozen core ODE. The abstract gradient--flow inequality \eqref{eq:EDI} thus reduces pointwise to
\begin{align*}
\int_{\Gamma_t} \mu_F^{-1} V_{\mathrm{n}}^2\, d\mathcal{H}^1
+ \int_{\Gamma_t} \big(\kappa_{\gamma_F}+\mathcal{H}_{A_{u_0}}[\Gamma_t]\big)\, V_{\mathrm{n}}\, d\mathcal{H}^1
\ \le\ 0,
\end{align*}
and the equality case (attained by the $L^2$--gradient flow) gives
\begin{align*}
V_{\mathrm{n}} = -\,\mu_F\,\Big(\kappa_{\gamma_F}+\mathcal{H}_{A_{u_0}}[\Gamma_t]\Big)
\qquad \text{on }\Gamma_t,
\end{align*}
as stated. This completes the proof.
\end{proof}

\section{WNT--GL Parabolic Dynamics with Thermal Coupling}
\label{sec:WNT-parabolic}

In this section we extend the WNT--GL framework to a thermally coupled setting. The convex dual Hamiltonian is denoted $H(x,\xi)$, and the associated WNT differential operators are
\begin{align*}
\nabla^{\mathrm{wnt}}_F u = \partial_\xi H(x,du),
\qquad 
\Delta_F^{\mathrm{wnt}}u = \mathrm{div}_\mu\big(\partial_\xi H(x,du)\big),
\end{align*}
with respect to a fixed smooth positive density $d\mu=\vartheta(x)\,dx$.
For a small parameter $\varepsilon>0$ we consider the thermally coupled WNT--GL energy functional
\begin{align*}
\mathcal{E}_\varepsilon^{\mathrm{wnt}}[\psi,A,T]
&=
\int_M\Big[
\tfrac{1}{2}H(x,D^{A}\psi)
+\tfrac{1}{2\lambda}\|dA\|_{\gamma^*}^2
+\tfrac{1}{4\varepsilon^2}(1-|\psi|^2)^2
\Big]\,d\mu \\
&\quad
+\int_M\Big[
\tfrac{\kappa_{\mathrm{th}}}{2}\,H(x,\nabla T)
+\sigma\,T\,q_\varepsilon(\psi,A)
\Big]\,d\mu,
\end{align*}
where $D^A\psi=(d-iA)\psi$ is the magnetic covariant derivative, $\gamma^*$ is a uniformly equivalent cometric to $F^*$, and $\lambda,\kappa_{\mathrm{th}},\sigma>0$ are fixed material parameters. The source term $q_\varepsilon(\psi,A)\ge 0$ models the local heat generation in the vortex core and scales compatibly with the GL energy.

At each fixed time the temperature field is enslaved to $(\psi,A)$ through the stationary WNT heat equation
\begin{align*}
-\kappa_{\mathrm{th}}\,\Delta_F^{\mathrm{wnt}}T = \sigma\,q_\varepsilon(\psi,A),
\qquad
\int_M T\,d\mu = 0,
\end{align*}
so that, by Theorem~\ref{thm:WNT-Green}, one has the representation
$T = \frac{\sigma}{\kappa_{\mathrm{th}}}G_{\mathrm{wnt}}\!\ast q_\varepsilon$.

\subsection{Gradient--flow structure and weak formulation}

We impose the Coulomb gauge $\mathrm{div}_\mu A=0$ and set the configuration space
$\mathcal{H}=H^1(M;\mathbb{C})\times H^1_{\mathrm{coex}}(M;\Lambda^1)$ modulo constant phases.
For $U=(\psi,A)\in\mathcal{H}$ define the reduced energy
\begin{align*}
\mathcal{E}^{\mathrm{red}}_\varepsilon[U]
:= \inf_{T:\;-\kappa_{\mathrm{th}}\Delta_F^{\mathrm{wnt}}T=\sigma q_\varepsilon(\psi,A),\ \int_M T\,d\mu=0}
\mathcal{E}^{\mathrm{wnt}}_\varepsilon[\psi,A,T].
\end{align*}
By the convexity and local ellipticity of $H(x,\xi)$, $\mathcal{E}^{\mathrm{red}}_\varepsilon$ is finite, lower semicontinuous, and geodesically $\lambda$--convex on bounded sublevels of $\mathcal{H}$.

The formal $L^2$--gradient flow for $U(t)=(\psi(t),A(t))$ then reads
\begin{align}\label{eq:WNT-parabolic}
\nonumber
\partial_t\psi
&= -\,\tfrac{1}{2}\mathrm{div}_\mu\big(\partial_\xi H(x,D^{A}\psi)\big)
+\tfrac{1}{2\varepsilon^2}(1-|\psi|^2)\psi
+\tfrac{i}{2}\,\partial_\xi H(x,D^{A}\psi)\cdot A, \\
\partial_tA
&= -\tfrac{1}{\lambda}d_{\gamma^*}^*dA
+\Im\!\big(\overline{\psi}\,\partial_\xi H(x,D^{A}\psi)\big)
+\tfrac{\sigma^2}{\kappa_{\mathrm{th}}}\nabla(G_{\mathrm{wnt}}\!\ast q_\varepsilon(\psi,A)).
\end{align}

\begin{definition}[Weak solution]\label{def:weakWNT}
A pair $U(t)=(\psi(t),A(t))$ with 
\begin{align*}
U\in L^\infty_{\mathrm{loc}}([0,\infty);\mathcal{H}),\qquad 
\partial_tU\in L^2_{\mathrm{loc}}([0,\infty);\mathcal{H}^*),
\end{align*}
is a weak solution of \eqref{eq:WNT-parabolic} if for all $(\varphi,\Xi)\in\mathcal{H}$ and a.e.\ $t>0$,
\begin{align*}
\frac{d}{dt}\langle\psi,\varphi\rangle_{L^2}
&=
-\int_M \partial_\xi H(x,D^{A}\psi)\cdot D^{A}\varphi\,d\mu
-\frac{1}{2\varepsilon^2}\int_M (1-|\psi|^2)\psi\,\overline{\varphi}\,d\mu,\\
\frac{d}{dt}\langle A,\Xi\rangle_{L^2}
&=
-\frac{1}{\lambda}\int_M\langle dA,d\Xi\rangle_{\gamma^*}\,d\mu
+\int_M\Im\!\big(\overline{\psi}\,\partial_\xi H(x,D^{A}\psi)\big)\cdot\Xi\,d\mu\\
&\quad
+\frac{\sigma^2}{\kappa_{\mathrm{th}}}\int_M\nabla(G_{\mathrm{wnt}}\!\ast q_\varepsilon(\psi,A))\cdot\Xi\,d\mu,
\end{align*}
and the energy--dissipation inequality holds:
\begin{align*}
\mathcal{E}^{\mathrm{red}}_\varepsilon[U(t)]
+\int_s^t\|\partial_\tau U(\tau)\|_{\mathcal{H}}^2\,d\tau
\le
\mathcal{E}^{\mathrm{red}}_\varepsilon[U(s)]
\quad\text{for all }t>s\ge 0.
\end{align*}
\end{definition}


\begin{theorem}[Well--posedness and EVI formulation]\label{thm:EVI-WNT}
For every initial datum $U_0=(\psi_0,A_0)\in\mathcal{H}$ with $\mathrm{div}_\mu A_0=0$, there exists a unique global weak solution $U(t)$ in the sense of Definition~\ref{def:weakWNT}.
The mapping $S_tU_0:=U(t)$ defines a contractive semigroup on $\mathcal{H}$, and $t\mapsto U(t)$ is locally absolutely continuous.
Moreover, for all $V\in\mathcal{H}$ and a.e.\ $t>0$,
\begin{align*}
\frac{1}{2}\frac{d}{dt}\|U(t)-V\|_{\mathcal{H}}^2
+\mathcal{E}^{\mathrm{red}}_\varepsilon[U(t)]
\le
\mathcal{E}^{\mathrm{red}}_\varepsilon[V],
\end{align*}
so that $S_t$ is $1$-Lipschitz and the energy decreases strictly along nonstationary trajectories.
\end{theorem}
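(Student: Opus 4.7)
The plan is to recast \eqref{eq:WNT-parabolic} as an abstract gradient flow of the reduced functional $\mathcal{E}^{\mathrm{red}}_\varepsilon$ on the Hilbert space $\mathcal{H}$ and then invoke the standard existence, uniqueness, and EVI theory for $\lambda$--convex, lower semicontinuous functionals in Hilbert spaces (Br\'ezis--Komura together with the refinements of Ambrosio--Gigli--Savar\'e). The preparatory structural facts stated just before the theorem, namely that $\mathcal{E}^{\mathrm{red}}_\varepsilon$ is finite, lower semicontinuous, coercive, and geodesically $\lambda$--convex on bounded sublevels of $\mathcal{H}$, are precisely the hypotheses needed, so the proof will consist of transferring these inputs to the flow level and identifying the Fr\'echet subdifferential with the right-hand side of Definition~\ref{def:weakWNT}.

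First I eliminate the temperature: the stationary constraint $-\kappa_{\mathrm{th}}\Delta_F^{\mathrm{wnt}}T=\sigma q_\varepsilon(\psi,A)$ with zero mean determines a unique $T[\psi,A]=(\sigma/\kappa_{\mathrm{th}})\,G_{\mathrm{wnt}}\!\ast q_\varepsilon$ via Theorem~\ref{thm:WNT-Green}, giving the explicit reduced energy
\begin{align*}
\mathcal{E}^{\mathrm{red}}_\varepsilon[\psi,A]
&=\int_M\!\Big[\tfrac{1}{2}H(x,D^A\psi)+\tfrac{1}{2\lambda}\|dA\|_{\gamma^*}^2+\tfrac{(1-|\psi|^2)^2}{4\varepsilon^2}\Big]\,d\mu\\
&\quad+\tfrac{\sigma^2}{2\kappa_{\mathrm{th}}}\,\big\langle q_\varepsilon(\psi,A),\,G_{\mathrm{wnt}}\!\ast q_\varepsilon(\psi,A)\big\rangle_{L^2}.
\end{align*}
Symmetry and positivity of $G_{\mathrm{wnt}}$ make the last term a convex quadratic in $q_\varepsilon$. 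I then compute the Fr\'echet subdifferential of $\mathcal{E}^{\mathrm{red}}_\varepsilon$ on $\mathcal{H}$ and verify, by implicit differentiation of $T[\psi,A]$, that its smooth selection reproduces exactly the nonlocal drift $\frac{\sigma^2}{\kappa_{\mathrm{th}}}\nabla(G_{\mathrm{wnt}}\!\ast q_\varepsilon)$ appearing on the right-hand side of \eqref{eq:WNT-parabolic}, together with the Leray--Lions-type divergence and magnetic coupling terms.

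Next I run the minimizing movement scheme: for $\tau>0$, define $U^{n+1}$ as a minimizer of $V\mapsto \mathcal{E}^{\mathrm{red}}_\varepsilon[V]+\tfrac{1}{2\tau}\|V-U^n\|_{\mathcal{H}}^2$. Coercivity and lower semicontinuity yield existence, while $\lambda$--convexity on sublevels gives uniqueness for $\tau$ small and the discrete EVI. Standard De Giorgi variational interpolation then supplies uniform bounds for $U^\tau$ in $L^\infty([0,T];\mathcal{H})$ and for $\partial_t U^\tau$ in $L^2([0,T];\mathcal{H})$, plus a discrete energy--dissipation inequality. Passing to the limit $\tau\downarrow 0$ by weak compactness, lower semicontinuity, and strong convergence of the Gâteaux derivative on bounded sublevels, I obtain a curve $U\in AC_{\mathrm{loc}}([0,\infty);\mathcal{H})$ solving \eqref{eq:WNT-parabolic} in the sense of Definition~\ref{def:weakWNT}. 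From the EVI, testing with an arbitrary $V\in\mathcal{H}$ and applying Gr\"onwall gives uniqueness, the contraction $\|S_tU_0-S_tV_0\|_{\mathcal{H}}\le e^{\lambda t}\|U_0-V_0\|_{\mathcal{H}}$ and, on sublevels where the effective convexity modulus is nonnegative, the $1$--Lipschitz property; the strict energy decrease along nonstationary trajectories follows by taking $V=U(t)$ in the EVI.

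The main obstacle I expect is the verification of $\lambda$--convexity of $\mathcal{E}^{\mathrm{red}}_\varepsilon$, because the magnetic coupling $(\psi,A)\mapsto H(x,D^A\psi)$ is not jointly convex even when $H(x,\cdot)$ is: the cross term originating from $D^A\psi=(d-iA)\psi$ is bilinear, and already in the Euclidean quadratic case $H=\tfrac 12|\xi|^2$ produces the non-convex mixed term $-2A\cdot \Im(\bar\psi\nabla\psi)$. Overcoming this requires a Moreau--Yosida correction $\mathcal{E}^{\mathrm{red}}_\varepsilon+\tfrac{C}{2}\|\cdot\|_{\mathcal{H}}^2$ large enough to absorb the indefinite cross terms on each bounded sublevel; the admissibility of such a correction rests on the local $C^{1,1}$ structure of $H$ (Proposition~\ref{prop:Legendre-regularity}) and the Sobolev embedding $H^1\hookrightarrow L^q$ controlling the product $\|\psi\|_{L^\infty}\|A\|_{L^2}$ via the potential well of $E_\varepsilon$. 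A secondary but related subtlety is that the nonlocal $q_\varepsilon$--term is only $\lambda$--convex when composed with the nonlinear map $(\psi,A)\mapsto q_\varepsilon(\psi,A)$, so $q_\varepsilon$ must be assumed Lipschitz on bounded sublevels with at most quadratic composition; once these two reductions are in place, the remainder of the argument is the routine Hilbert-space gradient-flow machinery and no further ad hoc analysis is required.
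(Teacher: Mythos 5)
Your proposal follows essentially the same route as the paper: eliminate the temperature field via the Green kernel representation of Theorem~\ref{thm:WNT-Green}, run the De Giorgi minimizing--movement scheme for $\mathcal{E}^{\mathrm{red}}_\varepsilon$ on $\mathcal{H}$, and pass to the limit $\tau\downarrow 0$ using the Ambrosio--Gigli--Savar\'e theory of $\lambda$--convex functionals to obtain the EVI, the contraction semigroup, and uniqueness. Your explicit discussion of the joint non-convexity of $(\psi,A)\mapsto H(x,D^A\psi)$ coming from the bilinear cross term in $D^A\psi$, and of the need for a semiconvexity (Moreau--Yosida type) correction valid only on bounded sublevels, addresses the one genuinely delicate step that the paper's proof merely asserts (``$\lambda$--convex along straight segments\ldots restricted to bounded gradients''), so on that point your treatment is the more careful of the two.
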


\begin{proof}
The reduced energy $\mathcal{E}^{\mathrm{red}}_\varepsilon$ is proper, lower semicontinuous, and $\lambda$--convex along straight segments in $\mathcal{H}$ restricted to bounded gradients.
The De Giorgi minimizing--movement scheme applies: for a time step $\tau>0$, define recursively
\begin{align*}
U^{k}\in\arg\min_{U\in\mathcal{H}}
\Big\{
\mathcal{E}^{\mathrm{red}}_\varepsilon[U]+\tfrac{1}{2\tau}\|U-U^{k-1}\|_{\mathcal{H}}^2
\Big\},
\qquad U^0=U_0.
\end{align*}
Coercivity of $\mathcal{E}^{\mathrm{red}}_\varepsilon$ (by the two--sided control on $H$ and the magnetic term) guarantees existence of minimizers and uniform $H^1$--bounds.
At each step the temperature equation has a unique solution by Theorem~\ref{thm:WNT-Green}, continuously depending on $(\psi,A)$, and thus the scheme is well defined.
Passing $\tau\downarrow 0$ yields a continuous curve of maximal slope for $\mathcal{E}^{\mathrm{red}}_\varepsilon$ in the metric sense of Ambrosio--Gigli--Savar\'e.
The curve satisfies the energy inequality and the evolution variational inequality (EVI) since $\mathcal{E}^{\mathrm{red}}_\varepsilon$ is $\lambda$-convex; uniqueness follows from the contraction property.
Gauge invariance is preserved because the Coulomb projection is linear and orthogonal in $\mathcal{H}$.
\end{proof}

\subsection{Energy dissipation and regularization}

\begin{theorem}[Energy inequality and regularity]\label{thm:dissipation-WNT}
Let $U(t)$ be the EVI solution from Theorem~\ref{thm:EVI-WNT}. Then for all $t>0$,
\begin{align*}
\mathcal{E}^{\mathrm{red}}_\varepsilon[U(t)]
+\int_0^t\!\big(\|\partial_\tau\psi\|_{L^2}^2+\|\partial_\tau A\|_{L^2}^2\big)\,d\tau
\le
\mathcal{E}^{\mathrm{red}}_\varepsilon[U_0].
\end{align*}
In particular, for every $T>0$ there exists $C_T<\infty$ such that
\begin{align*}
\sup_{0\le\tau\le T}\!\Big(\|\psi(\tau)\|_{H^1}+\|A(\tau)\|_{H^1}\Big)
+\int_0^T\!\!\Big(\|D^{A}\psi\|_{H^1}^2+\|dA\|_{H^1}^2\Big)\,d\tau
\le C_T.
\end{align*}
Moreover, for $t>0$ the solution gains local H\"older regularity, and $(\psi,A)$ is smooth away from the evolving vortex set.
\end{theorem}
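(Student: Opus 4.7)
The plan is to derive the three claims as direct consequences of the EVI structure of Theorem~\ref{thm:EVI-WNT}, combined with the local uniform ellipticity of $\partial_\xi H$ on bounded fibers and standard parabolic regularity theory. First I would establish the energy--dissipation inequality by testing the EVI at $V=U(t)$ or, equivalently, applying the metric chain rule for curves of maximal slope of a proper, lower semicontinuous, $\lambda$--convex functional: this yields, for a.e.\ $t>0$, the identity $\frac{d}{dt}\mathcal{E}^{\mathrm{red}}_\varepsilon[U(t)] = -\|\partial_t U(t)\|_{\mathcal{H}}^2$, which integrates from $0$ to $t$ to the stated inequality once the temperature is eliminated through the quadratic minimization that defines $\mathcal{E}^{\mathrm{red}}_\varepsilon$.

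Next, to deduce the uniform $H^1$ bounds, I would combine the energy inequality with the local two--sided quadratic control on $H(x,\cdot)$ provided by the WNT hypothesis (W3), valid here in the quadratic regime relevant to Section~\ref{sec:WNT-parabolic}. The Hamiltonian term controls $\|D^{A}\psi\|_{L^2}^2$, the Maxwell contribution controls $\|dA\|_{L^2}^2$, and the potential well provides $\|1-|\psi|^2\|_{L^2}\le C\varepsilon$, hence $\|\psi\|_{L^4}$. Using the diamagnetic inequality $|\nabla|\psi||\le|D^{A}\psi|$ together with the Hodge estimate $\|A\|_{H^1}\le C(\|dA\|_{L^2}+\|\mathrm{div}_\mu A\|_{L^2})$ under the Coulomb gauge, one recovers the uniform $H^1$ control of both $\psi$ and $A$ on $[0,T]$.

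The $L^2_t H^1_x$ gain for $D^{A}\psi$ and $dA$ is a parabolic smoothing effect. I would differentiate the weak formulation in time (or test with $-\partial_t U$), freeze the effective tensor $A_{U(t)}:=D^2_{\xi\xi}H(x,D^{A}\psi(t))$ on energy sublevels where it is bounded and uniformly elliptic, and apply the standard $L^2_tH^1_x$ estimate for the associated linearized uniformly parabolic system, i.e.\ the time--dependent analogue of the local Green construction of Theorem~\ref{thm:WNT-Green}. The nonlocal thermal drift $\nabla(G_{\mathrm{wnt}}\!\ast q_\varepsilon)$ enters as a lower--order forcing, controlled by Calder\'on--Zygmund bounds for the kernel $\nabla_x\nabla_y G_{\mathrm{wnt}}$ together with the boundedness of $q_\varepsilon$ on energy sublevels.

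Finally, for local H\"older continuity and smoothness away from the evolving vortex set $\{|\psi|<\tfrac{1}{2}\}$, I would apply De Giorgi--Nash--Moser to the linearized parabolic system, whose coefficients $A_{U(t)}$ are bounded and uniformly elliptic on regions where $D^{A}\psi$ stays within a bounded fiber set. On $\{|\psi|>\tfrac{1}{2}\}$ one may lift $\psi=\rho e^{i\varphi}$ smoothly, reducing the problem to a quasilinear parabolic system in $(\rho,\varphi,A)$ to which classical Schauder bootstrap applies. The main obstacle will be controlling the interaction between the quasilinear degeneracy of $\partial_\xi H$ (when the WNT exponent is $p\neq 2$ the tensor $A_{U(t)}$ can degenerate as $|D^{A}\psi|\to 0$) and the nonlocal thermal coupling: the Calder\'on--Zygmund regularity of the WNT heat kernel has to be matched with the \emph{localized} ellipticity constants provided by (W3), and closing the bootstrap without leaving bounded fibers requires a careful choice of compact subsets in the level sets of $|\psi|$. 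Once this matching is in place, the passage from $L^2_tH^1_x$ control to interior H\"older regularity and, by iteration, full smoothness away from the vortex set, is routine.
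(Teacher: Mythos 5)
Your proposal follows essentially the same route as the paper: the EVI chain rule gives the dissipation identity, coercivity of $\mathcal{E}^{\mathrm{red}}_\varepsilon$ from the local quadratic control of $H$ yields the uniform $H^1$ bounds, and De Giorgi--Nash--Moser plus Schauder bootstrap give H\"older regularity and smoothness away from the vortex set. If anything, you are more explicit than the paper on the $L^2_tH^1_x$ gain for $D^{A}\psi$ and $dA$ (via parabolic smoothing and testing with $-\partial_t U$) and on the Coulomb-gauge Hodge estimate, steps the paper's proof passes over rather quickly.
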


\begin{proof}
The EVI structure ensures that $\mathcal{E}^{\mathrm{red}}_\varepsilon$ is differentiable along the flow and
\begin{align*}
\frac{d}{dt}\mathcal{E}^{\mathrm{red}}_\varepsilon[U(t)]
= -\|\partial_tU(t)\|_{\mathcal{H}}^2,
\qquad
\|\partial_tU(t)\|_{\mathcal{H}}^2
=\|\partial_t\psi\|_{L^2}^2+\|\partial_tA\|_{L^2}^2.
\end{align*}
Integrating over $(0,t)$ yields the energy--dissipation inequality.
To obtain a priori estimates, note that by the local quadratic control of $H(x,\xi)$ there exist constants $c_1,C_0>0$ such that
\begin{align*}
\mathcal{E}^{\mathrm{red}}_\varepsilon[\psi,A]
\ge
c_1\!\int_M\!\big(|D^{A}\psi|^2+|dA|^2\big)\,d\mu
+\frac{1}{4\varepsilon^2}\!\int_M(1-|\psi|^2)^2\,d\mu - C_0.
\end{align*}
The thermal part
\begin{align*}
\mathcal{J}_{\mathrm{th}}[\psi,A]
=\frac{\sigma^2}{2\kappa_{\mathrm{th}}}\iint_{M\times M}
G_{\mathrm{wnt}}(x,y)\,q_\varepsilon(\psi,A)(x)\,q_\varepsilon(\psi,A)(y)\,d\mu_xd\mu_y
\end{align*}
is nonnegative and $C^1$--smooth on $H^1\times H^1$, since $G_{\mathrm{wnt}}$ is the Green kernel of a uniformly elliptic operator and $q_\varepsilon$ depends quadratically on $(\psi,A)$.
Hence coercivity and boundedness of $\mathcal{E}^{\mathrm{red}}_\varepsilon$ on sublevels imply uniform $H^1$--bounds for $(\psi,A)$ and $L^2$--bounds for $\partial_t(\psi,A)$, proving the stated inequalities.
To address regularity, recall that the system \eqref{eq:WNT-parabolic} is quasilinear parabolic:
the principal symbols correspond to the uniformly elliptic matrices $D^2_{\xi\xi}H(x,D^{A}\psi)$ and $\lambda^{-1}\gamma^*$,
while the nonlocal thermal term acts as a bounded, symmetric, zero-order perturbation.
Standard De Giorgi--Nash--Moser estimates yield local $C^{\alpha,\alpha/2}$ regularity for both $\psi$ and $A$ on every compact cylinder away from the set $\{|\psi|=0\}$.
Bootstrapping with Schauder estimates and using that coefficients are $C^{0,\alpha}$ in spacetime then yields smoothness of $(\psi,A)$ on $(M\!\setminus\!\mathrm{Vort}(t))\times(t_0,t_1)$ for all $0<t_0<t_1<\infty$.
The compactness of $M$ precludes loss of control at infinity, completing the proof.
\end{proof}

\section{Filament Dynamics and Effective Motion Law in WNT Geometry}\label{sec:filament-dynamics-WNT}
We write $\nabla_F^{\mathrm{wnt}}$, $\Delta_F^{\mathrm{wnt}}$, and $G_{\mathrm{wnt}}$ for the WNT gradient, Laplacian, and Green kernel constructed earlier (Sections~\ref{sec:WNT-operators}--\ref{sec:3D-extension}). 
Let $\Gamma=\{\Gamma_i\}_{i=1}^N$ be a collection of smooth, oriented, pairwise disjoint vortex filaments with integer multiplicities $d_i\in\mathbb{Z}\setminus\{0\}$.
The renormalized WNT energy is defined by
\begin{align*}
\mathcal{G}_0(\Gamma,d)
&=
\pi\sum_{i=1}^N |d_i|\,\mathcal{L}_F(\Gamma_i)
+\frac{\sigma^2}{2\kappa_{\mathrm{th}}}\,\mathcal{W}_{\mathrm{wnt}}(\Gamma,d),
\end{align*}
where $\mathcal{L}_F$ denotes the $F$--length and
\begin{align*}
\mathcal{W}_{\mathrm{wnt}}(\Gamma,d)
&:=\lim_{\rho\downarrow 0}\Bigg[
\sum_{i,j=1}^N |d_i|\,|d_j|
\!\!\!\int_{\substack{x\in\Gamma_i,\ y\in\Gamma_j\\ d_F(x,y)>\rho}}\!\!\!
G_{\mathrm{wnt}}(x,y)\,d\mathcal{H}^1_x\,d\mathcal{H}^1_y
-\sum_{i=1}^N \frac{d_i^2}{4\pi}\,\frac{\mathcal{L}_F(\Gamma_i)}{\rho}\Bigg].
\end{align*}
The subtraction removes the universal Coulomb divergence coming from the local expansion
\begin{align*}
G_{\mathrm{wnt}}(x,y)=\frac{1}{4\pi\,d_F(x,y)}+H_{\mathrm{wnt}}(x,y),
\end{align*}
with $H_{\mathrm{wnt}}$ bounded and smooth off the diagonal (Theorem~\ref{thm:WNT-Green}).
We parametrize $\Gamma_i$ by $F$--arclength, so $|\dot\gamma_i|_F\equiv 1$, and denote by $\mathbf{t}_i$ the unit tangent and by $\mathbf{n}_i$ a smooth normal field.
The WNT geodesic curvature $\kappa_F[\Gamma_i]$ is the normal component of the covariant derivative of $\mathbf{t}_i$ with respect to the Chern connection of a Tonelli proxy, transported to the WNT setting via the fiberwise Legendre map (Sections~\ref{sec:From Tonelli to NonTonelli} and \ref{sec:WNT-operators}); in the Tonelli case this reduces to anisotropic curvature.


\begin{theorem}[First variation of $\mathcal{G}_0$ and effective force]\label{thm:first-var-WNT}
Let $\Gamma=\{\Gamma_i\}$ be a $C^3$ family of disjoint filaments and consider compactly supported normal deformations $\Gamma_i^\varepsilon$ with variation field $V_i=\varphi_i\,\mathbf{n}_i$.
Then the first variation exists and
\begin{align*}
\frac{d}{d\varepsilon}\Big|_{\varepsilon=0}\mathcal{G}_0(\Gamma^\varepsilon,d)
=
\sum_{i=1}^N\int_{\Gamma_i}
\Big(\pi |d_i|\,\kappa_F[\Gamma_i](x)
+\frac{\sigma^2}{\kappa_{\mathrm{th}}}\,\mathcal{H}_{\mathrm{wnt}}[\Gamma,d](x)\Big)\,
\varphi_i(x)\,d\mathcal{H}^1_x,
\end{align*}
where the nonlocal WNT field along $\Gamma_i$ is
\begin{align*}
\mathcal{H}_{\mathrm{wnt}}[\Gamma,d](x)
&=\sum_{j=1}^N |d_j|\int_{\Gamma_j}
\langle \nabla_x G_{\mathrm{wnt}}(x,y),\mathbf{n}_i(x)\rangle\,d\mathcal{H}^1_y\\
&\quad+|d_i|\,\mathrm{p.v.}\!\int_{\Gamma_i}
\langle \nabla_x H_{\mathrm{wnt}}(x,y),\mathbf{n}_i(x)\rangle\,d\mathcal{H}^1_y.
\end{align*}
\end{theorem}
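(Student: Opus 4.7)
The plan is to split $\mathcal{G}_0$ into the local $F$--length contribution and the nonlocal renormalized Coulomb interaction, compute the variation of each, and verify that the $\rho$--counterterm exactly absorbs the diagonal divergence that arises when differentiating the regularized double integral. Throughout I assume the $C^3$--regularity of the $\Gamma_i$'s, the Green expansion $G_{\mathrm{wnt}}=\tfrac{1}{4\pi d_F}+H_{\mathrm{wnt}}$ from Theorem~\ref{thm:WNT-Green}, and that the normal variation $V_i=\varphi_i\mathbf{n}_i$ is compactly supported in the interior of $\Gamma_i$ with $\varphi_i\in C^2_c$. Write $\mathcal{G}_0^{\mathrm{len}}(\Gamma,d):=\pi\sum_i|d_i|\mathcal{L}_F(\Gamma_i)$ and $\mathcal{G}_0^{\mathrm{int}}(\Gamma,d):=\tfrac{\sigma^2}{2\kappa_{\mathrm{th}}}\mathcal{W}_{\mathrm{wnt}}(\Gamma,d)$, and treat them separately.

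For $\mathcal{G}_0^{\mathrm{len}}$, I would invoke the first variation formula for Finsler arclength: if $\gamma_i^\varepsilon=\gamma_i+\varepsilon \varphi_i\mathbf{n}_i+O(\varepsilon^2)$ is a normal deformation with respect to the Legendre--conjugate normal direction, then standard Finsler calculus (transported from the Tonelli proxy by the fiberwise Legendre map of Section~\ref{sec:From Tonelli to NonTonelli}) yields
\begin{align*}
\frac{d}{d\varepsilon}\Big|_{\varepsilon=0}\mathcal{L}_F(\Gamma_i^\varepsilon)=\int_{\Gamma_i}\kappa_F[\Gamma_i]\,\varphi_i\,d\mathcal{H}^1,
\end{align*}
where $\kappa_F$ is the WNT geodesic curvature defined in the preamble. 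Multiplying by $\pi|d_i|$ and summing over $i$ produces the first term of the claim.

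For $\mathcal{G}_0^{\mathrm{int}}$, fix $\rho>0$ small enough that all pairwise $F$--distances between distinct filaments exceed $2\rho$, and split the regularized double integral into three groups: (a) off--filament pairs $i\neq j$, (b) on--filament regular parts using $H_{\mathrm{wnt}}$, and (c) on--filament singular parts using $\tfrac{1}{4\pi d_F}$. For (a) the integrand is jointly smooth, so differentiation under the integral and the identity $d\mathcal{H}^1_{x,\varepsilon}=(1-\varepsilon\varphi_i\kappa_F+o(\varepsilon))d\mathcal{H}^1_x$ give a contribution that can be written, after absorbing the $-G_{\mathrm{wnt}}\,\kappa_F\,\varphi_i$ term into the length variation via the Finsler first--variation identity (this is why $\kappa_F$ appears only once), as $\sum_{j\neq i}|d_j|\int_{\Gamma_j}\langle\nabla_x G_{\mathrm{wnt}},\mathbf{n}_i\rangle d\mathcal{H}^1_y$. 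For (b), the same argument with $H_{\mathrm{wnt}}$ in place of $G_{\mathrm{wnt}}$ produces the principal--value integral along $\Gamma_i$ (the p.v.\ being needed only because $\nabla_x H_{\mathrm{wnt}}$ may have an integrable singularity on the diagonal inherited from the parametrix construction of Theorem~\ref{thm:WNT-Green}).

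The delicate step, and the main obstacle, is (c): differentiating the regularized self--integral of $\tfrac{1}{4\pi d_F(x,y)}$ on $\Gamma_i\times\Gamma_i$ restricted to $\{d_F(x,y)>\rho\}$ and combining with the counterterm $\tfrac{d_i^2}{4\pi}\mathcal{L}_F(\Gamma_i)/\rho$. I would exploit Fermi coordinates along $\Gamma_i$ in which $d_F(x,y)=|s_x-s_y|+O(|s_x-s_y|^3 \kappa_F^2)$, showing that the deformation produces a boundary contribution along $\{d_F(x,y)=\rho\}$ and an interior contribution from $\langle\nabla_x d_F^{-1},\mathbf{n}_i\rangle$. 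Because the singular kernel depends on $x,y$ only through $d_F$, its normal derivative along $\mathbf{n}_i(x)$ vanishes at leading order in $|s_x-s_y|$ by antisymmetry; the next--order expansion yields precisely the counterterm derivative $-\tfrac{d_i^2}{4\pi\rho}\,\frac{d}{d\varepsilon}\mathcal{L}_F(\Gamma_i^\varepsilon)$, which by the length variation above equals $-\tfrac{d_i^2}{4\pi\rho}\int_{\Gamma_i}\kappa_F\varphi_i d\mathcal{H}^1$. Cancelling this against the boundary contribution and sending $\rho\downarrow 0$ is justified by dominated convergence together with uniform $C^1$--bounds on $H_{\mathrm{wnt}}$ away from the diagonal (Theorem~\ref{thm:WNT-Green}) and a Hardy--type estimate along $\Gamma_i$ controlling the p.v.\ limit. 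Summing (a)--(c) with the prefactor $\tfrac{\sigma^2}{\kappa_{\mathrm{th}}}$ (the factor $\tfrac{1}{2}$ in $\mathcal{G}_0^{\mathrm{int}}$ becomes $1$ because each pair is counted twice and $G_{\mathrm{wnt}}$ is symmetric) produces the nonlocal field $\mathcal{H}_{\mathrm{wnt}}[\Gamma,d]$, completing the identification. The hard part is genuinely the symmetry argument killing the would--be $\rho^{-1}$ divergence in step (c); everything else is bookkeeping.
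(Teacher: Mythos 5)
Your proposal follows essentially the same route as the paper's proof: split $\mathcal{G}_0$ into the $F$--length and the $\rho$--truncated Coulomb interaction, use the symmetry of $G_{\mathrm{wnt}}$ to fold the $y$--variation into twice the $x$--gradient term, decompose the kernel as $\tfrac{1}{4\pi d_F}+H_{\mathrm{wnt}}$, and cancel the singular self--interaction against the $d_i^2\,\mathcal{L}_F/(4\pi\rho)$ counterterm via a Fermi--coordinate antisymmetry expansion before sending $\rho\downarrow 0$. The one point of divergence is minor: the paper suppresses your measure--distortion term by differentiating over fixed arclength parameter domains, whereas you assert it is ``absorbed into the length variation'' --- a justification whose coefficients do not actually match (that term carries $|d_i||d_j|G_{\mathrm{wnt}}$, not $\pi|d_i|$) --- but since neither argument retains such a term in the final formula, this does not change the outcome.
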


\begin{proof}
Throughout fix $C^3$, pairwise disjoint, oriented filaments $\Gamma=\{\Gamma_i\}_{i=1}^N$, each parametrized by $F$--arclength $s\mapsto \gamma_i(s)$ with $|\dot\gamma_i|_F\equiv 1$. 
Consider a compactly supported normal variation $\Gamma_i^\varepsilon$ with variation field $V_i=\varphi_i\,\mathbf{n}_i$, i.e.
\begin{align*}
\frac{d}{d\varepsilon}\Big|_{\varepsilon=0}\gamma_i^\varepsilon(s)=\varphi_i(\gamma_i(s))\,\mathbf{n}_i(\gamma_i(s)).
\end{align*}
Define the truncated interaction
\begin{align*}
\mathcal{W}_{\mathrm{wnt}}^\rho(\Gamma,d)
:=\sum_{i,j=1}^N |d_i|\,|d_j|
\!\!\!\int_{\substack{x\in\Gamma_i,\ y\in\Gamma_j\\ d_F(x,y)>\rho}}
\!\!\!G_{\mathrm{wnt}}(x,y)\,d\mathcal{H}^1_x\,d\mathcal{H}^1_y
\;-\;\sum_{i=1}^N \frac{d_i^2}{4\pi}\,\frac{\mathcal{L}_F(\Gamma_i)}{\rho},
\end{align*}
so that $\mathcal{W}_{\mathrm{wnt}}(\Gamma,d)=\lim_{\rho\downarrow 0}\mathcal{W}_{\mathrm{wnt}}^\rho(\Gamma,d)$ by definition. We first compute the derivative at fixed $\rho>0$, then pass to the limit $\rho\downarrow 0$.

\medskip
\noindent\textbf{1) Variation of the geometric length.}
The first variation of the anisotropic length $\mathcal{L}_F$ under the normal variation $V_i=\varphi_i\mathbf{n}_i$ equals (Tonelli proxy + Legendre transfer to WNT)
\begin{align*}
\frac{d}{d\varepsilon}\Big|_{\varepsilon=0}\mathcal{L}_F(\Gamma_i^\varepsilon)
=\int_{\Gamma_i}\kappa_F[\Gamma_i](x)\,\varphi_i(x)\,d\mathcal{H}^1_x,
\end{align*}
where $\kappa_F[\Gamma_i]$ denotes the WNT geodesic curvature (normal component of the covariant derivative of the $F$--unit tangent). This is the standard anisotropic curve variation formula and holds since $\Gamma_i$ is $C^3$ and the variation is compactly supported along $\Gamma_i$.

\medskip
\noindent\textbf{2) Differentiation of the truncated interaction for fixed $\rho$.}
Write the double integral as an integral over fixed parameter domains by using arclength parametrizations; thus no boundary terms arise from moving integration domains. For each ordered pair $(i,j)$,
\begin{align*}
&\frac{d}{d\varepsilon}\Big|_{\varepsilon=0}
\!\!\!\int_{\substack{x\in\Gamma_i^\varepsilon,\ y\in\Gamma_j^\varepsilon\\ d_F(x,y)>\rho}}
\!\!\!G_{\mathrm{wnt}}(x,y)\,d\mathcal{H}^1_x\,d\mathcal{H}^1_y\\
&=
\int_{\substack{x\in\Gamma_i,\ y\in\Gamma_j\\ d_F(x,y)>\rho}}
\Big\langle \nabla_x G_{\mathrm{wnt}}(x,y),V_i(x)\Big\rangle\,d\mathcal{H}^1_x\,d\mathcal{H}^1_y\\
&\quad+
\int_{\substack{x\in\Gamma_i,\ y\in\Gamma_j\\ d_F(x,y)>\rho}}
\Big\langle \nabla_y G_{\mathrm{wnt}}(x,y),V_j(y)\Big\rangle\,d\mathcal{H}^1_x\,d\mathcal{H}^1_y.
\end{align*}
The cut--off $d_F(x,y)>\rho$ is independent of $\varepsilon$ up to $o(\varepsilon)$-changes of the set whose contribution vanishes as $\rho\downarrow 0$ (see Step~4). Using symmetry $G_{\mathrm{wnt}}(x,y)=G_{\mathrm{wnt}}(y,x)$ and renaming variables in the second integral, we get, for each $i$,
\begin{align*}
&\frac{d}{d\varepsilon}\Big|_{\varepsilon=0}\!\!
\sum_{j=1}^N |d_i|\,|d_j|
\!\!\!\int_{\substack{x\in\Gamma_i^\varepsilon,\ y\in\Gamma_j^\varepsilon\\ d_F(x,y)>\rho}}
\!\!\!G_{\mathrm{wnt}}(x,y)\,d\mathcal{H}^1_x\,d\mathcal{H}^1_y
=\\
&\sum_{j=1}^N |d_i|\,|d_j|
\!\!\!\int_{\substack{x\in\Gamma_i,\ y\in\Gamma_j\\ d_F(x,y)>\rho}}
\!\!\!\!\!\Big\langle \nabla_x G_{\mathrm{wnt}}(x,y),V_i(x)\Big\rangle\,d\mathcal{H}^1_x\,d\mathcal{H}^1_y.
\end{align*}
Hence, summing over $i$,
\begin{align*}
&\frac{d}{d\varepsilon}\Big|_{\varepsilon=0}\!\!\!
\sum_{i,j=1}^N |d_i|\,|d_j|
\!\!\!\int_{\substack{x\in\Gamma_i^\varepsilon,\ y\in\Gamma_j^\varepsilon\\ d_F(x,y)>\rho}}
\!\!\!G_{\mathrm{wnt}}(x,y)\,d\mathcal{H}^1_x\,d\mathcal{H}^1_y
=\\
&\sum_{i=1}^N \int_{\Gamma_i}\!\!
\Big\langle 
\sum_{j=1}^N |d_i|\,|d_j|
\!\!\!\!\int_{\substack{y\in\Gamma_j\\ d_F(x,y)>\rho}}
\!\!\!\!\nabla_x G_{\mathrm{wnt}}(x,y)\,d\mathcal{H}^1_y,
\,V_i(x)\Big\rangle\,d\mathcal{H}^1_x.
\end{align*}

\medskip
\noindent\textbf{3) Decomposition of the kernel and treatment of the self–interaction.}
Split the Green kernel as
\begin{align*}
G_{\mathrm{wnt}}(x,y)=\frac{1}{4\pi\,d_F(x,y)}+H_{\mathrm{wnt}}(x,y),
\end{align*}
with $H_{\mathrm{wnt}}$ bounded and smooth off the diagonal (Theorem~\ref{thm:WNT-Green}). Accordingly,
\begin{align*}
\nabla_x G_{\mathrm{wnt}}(x,y)
= -\,\frac{1}{4\pi}\,\nabla_x\!\big(d_F(x,y)^{-1}\big)
+ \nabla_x H_{\mathrm{wnt}}(x,y).
\end{align*}
For $i\neq j$ the integrand is smooth and the cut-off can be removed as $\rho\downarrow 0$ by dominated convergence. For $i=j$ the singular term is integrable in principal-value sense and is precisely balanced by the counterterm’s variation.

Consider the singular part for $i=j$:
\begin{align*}
I^\rho_{i,\mathrm{sing}}(x)
:=
-\,\frac{1}{4\pi}
\!\!\int_{\substack{y\in\Gamma_i\\ d_F(x,y)>\rho}}
\nabla_x\!\big(d_F(x,y)^{-1}\big)\,d\mathcal{H}^1_y.
\end{align*}
In Fermi coordinates around $\Gamma_i$ at the basepoint $x$, the metric $F$ is equivalent to a smooth Riemannian proxy and $d_F(x,y)$ admits the expansion $d_F(x,y)\sim |s-s'|$ to first order in the tangential parameter difference, while the odd part of $\nabla_x(d_F^{-1})$ dominates as $|s-s'|^{-2}$ in the normal direction. Hence
\begin{align*}
I^\rho_{i,\mathrm{sing}}(x)
= -\,\frac{1}{2}\,\frac{d}{d\varepsilon}\Big|_{\varepsilon=0}
\Big(\frac{1}{4\pi\rho}\,\mathcal{L}_F(\Gamma_i^\varepsilon)\Big)
\ +\ o(1)
\qquad (\rho\downarrow 0),
\end{align*}
where the factor $\tfrac12$ comes from the symmetry of the excluded $\rho$–neighborhood around the diagonal in the $(s,s')$–plane. This identity is the standard Coulomb renormalization: the normal derivative of the truncated $1/d_F$ potential against $V_i$ produces exactly the same first variation as the counterterm $\mathcal{L}_F(\Gamma_i)/\rho$, with opposite sign (the $o(1)$ error vanishes uniformly on compact $x$–sets by smoothness of the metric and the $C^3$ geometry of $\Gamma_i$). Therefore, after adding the counterterm’s variation,
\begin{align*}
\lim_{\rho\downarrow 0}
\Big(
|d_i|^2\langle I^\rho_{i,\mathrm{sing}}(x),V_i(x)\rangle
+\frac{d_i^2}{4\pi}\,\frac{d}{d\varepsilon}\Big|_{\varepsilon=0}
\frac{\mathcal{L}_F(\Gamma_i^\varepsilon)}{\rho}
\Big)=0,
\end{align*}
and only the \emph{regular} self–interaction remains:
\begin{align*}
|d_i|^2\,\mathrm{p.v.}\!\int_{\Gamma_i}
\langle \nabla_x H_{\mathrm{wnt}}(x,y),V_i(x)\rangle\,d\mathcal{H}^1_y,
\end{align*}
where “p.v.” is immaterial here because $H_{\mathrm{wnt}}$ is smooth; we keep it to emphasize that the singular part has been cancelled.

\medskip
\noindent\textbf{4) Passing to the limit $\rho\downarrow 0$ and collecting terms.}
Combining Steps 2--3 and using dominated convergence for the cross--terms ($i\neq j$) and the cancellation for the self–terms ($i=j$), we obtain
\begin{align*}
\frac{d}{d\varepsilon}\Big|_{\varepsilon=0}\mathcal{W}_{\mathrm{wnt}}(\Gamma^\varepsilon,d)
&=
\sum_{i=1}^N \int_{\Gamma_i}
\Big\langle 
\sum_{j\neq i} |d_i|\,|d_j| \int_{\Gamma_j}\nabla_x G_{\mathrm{wnt}}(x,y)\,d\mathcal{H}^1_y\\
&+ |d_i|^2\,\mathrm{p.v.}\!\int_{\Gamma_i}\nabla_x H_{\mathrm{wnt}}(x,y)\,d\mathcal{H}^1_y,
\,V_i(x)\Big\rangle\,d\mathcal{H}^1_x.
\end{align*}
Recalling $V_i=\varphi_i\,\mathbf{n}_i$ and projecting onto the normal direction yields
\begin{align*}
\frac{d}{d\varepsilon}\Big|_{\varepsilon=0}\mathcal{W}_{\mathrm{wnt}}(\Gamma^\varepsilon,d)&=
\sum_{i=1}^N \int_{\Gamma_i}
\Big(
|d_i|\sum_{j=1}^N |d_j|\int_{\Gamma_j}
\langle \nabla_x G_{\mathrm{wnt}}(x,y),\mathbf{n}_i(x)\rangle\,d\mathcal{H}^1_y\\
&+ |d_i|\,\mathrm{p.v.}\!\int_{\Gamma_i}
\langle \nabla_x H_{\mathrm{wnt}}(x,y),\mathbf{n}_i(x)\rangle\,d\mathcal{H}^1_y
\Big)\,\varphi_i(x)\,d\mathcal{H}^1_x,
\end{align*}
where for $j=i$ the singular part of $\nabla_x G_{\mathrm{wnt}}$ has already been cancelled by the counterterm and only $\nabla_x H_{\mathrm{wnt}}$ survives.

\medskip
\noindent\textbf{5) First variation of the full renormalized energy.}
Finally, adding the geometric contribution from Step~1 with the prefactor $\pi |d_i|$ and the thermal prefactor $\sigma^2/\kappa_{\mathrm{th}}$ gives
\begin{align*}
\frac{d}{d\varepsilon}\Big|_{\varepsilon=0}\mathcal{G}_0(\Gamma^\varepsilon,d)
&=
\sum_{i=1}^N \int_{\Gamma_i}
\pi |d_i|\,\kappa_F[\Gamma_i](x)\,\varphi_i(x)\,d\mathcal{H}^1_x\\
&+\frac{\sigma^2}{\kappa_{\mathrm{th}}}
\sum_{i=1}^N \int_{\Gamma_i}
\Big(
\sum_{j=1}^N |d_j|\int_{\Gamma_j}
\langle \nabla_x G_{\mathrm{wnt}}(x,y),\mathbf{n}_i(x)\rangle\,d\mathcal{H}^1_y\\
&+ |d_i|\,\mathrm{p.v.}\!\int_{\Gamma_i}
\langle \nabla_x H_{\mathrm{wnt}}(x,y),\mathbf{n}_i(x)\rangle\,d\mathcal{H}^1_y
\Big)\,\varphi_i(x)\,d\mathcal{H}^1_x.
\end{align*}
This matches exactly the statement with
\begin{align*}
\mathcal{H}_{\mathrm{wnt}}[\Gamma,d](x)&=
\sum_{j=1}^N |d_j|\int_{\Gamma_j}
\langle \nabla_x G_{\mathrm{wnt}}(x,y),\mathbf{n}_i(x)\rangle\,d\mathcal{H}^1_y\\
&+ |d_i|\,\mathrm{p.v.}\!\int_{\Gamma_i}
\langle \nabla_x H_{\mathrm{wnt}}(x,y),\mathbf{n}_i(x)\rangle\,d\mathcal{H}^1_y.
\end{align*}
All exchanges of limit and integration are justified by: (i) $C^3$ regularity and positive separation of the filaments; (ii) local equivalence of $d_F$ with a smooth background distance; (iii) the weak Calder\'on--Zygmund structure of $\nabla_x(d_F^{-1})$ along curves (odd kernel yields principal value); and (iv) boundedness/smoothness of $H_{\mathrm{wnt}}$.
\end{proof}

It follows that the $L^2(\Gamma_i)$–gradient of $\mathcal{G}_0$ in the normal bundle coincides with the bracketed term above.

\begin{theorem}[Effective motion law in WNT geometry]
\label{thm:motion-law-WNT}
Let $\Gamma(t)=\{\Gamma_i(t)\}$ be a sufficiently smooth one--parameter family of filaments evolving by the $L^2$--gradient flow of $\mathcal{G}_0$.
Then the normal velocity on each component satisfies
\begin{align}\label{eq:WNT-law}
V_{n,i}(x,t)
&= -\,\pi |d_i|\,\kappa_F[\Gamma_i(t)](x)
-\frac{\sigma^2}{\kappa_{\mathrm{th}}}\,\mathcal{H}_{\mathrm{wnt}}[\Gamma(t),d](x),
\qquad x\in\Gamma_i(t),
\end{align}
and the dissipation identity holds:
\begin{align*}
\frac{d}{dt}\,\mathcal{G}_0(\Gamma(t),d)
= -\sum_{i=1}^N \int_{\Gamma_i(t)} |V_{n,i}(x,t)|^2\,d\mathcal{H}^1_x \le 0.
\end{align*}
\end{theorem}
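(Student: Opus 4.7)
The plan is to reduce the theorem to a direct consequence of the first variation formula from Theorem~\ref{thm:first-var-WNT} combined with the abstract structure of an $L^2$--gradient flow in the normal bundle of each filament. Let $\mathcal{N}(\Gamma)=\bigoplus_{i=1}^N L^2(\Gamma_i;\mathbf{n}_i)$ denote the space of normal velocity fields, equipped with the natural $F$--arclength inner product $\langle V,W\rangle_{\mathcal{N}}=\sum_i\int_{\Gamma_i}V_i\,W_i\,d\mathcal{H}^1$. Theorem~\ref{thm:first-var-WNT} identifies the differential of $\mathcal{G}_0$ along normal deformations as a linear functional on $\mathcal{N}(\Gamma)$ represented by the density
\begin{align*}
\mathrm{grad}_{\mathcal{N}}\mathcal{G}_0(\Gamma,d)\big|_{\Gamma_i}(x)
= \pi|d_i|\,\kappa_F[\Gamma_i](x) + \frac{\sigma^2}{\kappa_{\mathrm{th}}}\,\mathcal{H}_{\mathrm{wnt}}[\Gamma,d](x).
\end{align*}

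First I would formalize the $L^2$--gradient flow: the evolving family $\Gamma(t)$ is by assumption driven by $\partial_t\Gamma_i = -\,\mathrm{grad}_{\mathcal{N}}\mathcal{G}_0$, which directly yields the motion law \eqref{eq:WNT-law} by reading off the normal component $V_{n,i}=\langle \partial_t\gamma_i,\mathbf{n}_i\rangle$. Tangential reparametrizations do not affect the geometric evolution, so it suffices to work with the normal component, and the identification is pointwise along each $\Gamma_i(t)$ wherever the filaments remain smooth and disjoint.

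Next I would establish the dissipation identity via the chain rule. Given a $C^1$ family $t\mapsto\Gamma(t)$, I apply Theorem~\ref{thm:first-var-WNT} with the instantaneous variation field $V_i(\cdot,t)=V_{n,i}(\cdot,t)\,\mathbf{n}_i$ to obtain
\begin{align*}
\frac{d}{dt}\mathcal{G}_0(\Gamma(t),d)
= \sum_{i=1}^N\int_{\Gamma_i(t)}
\Big(\pi|d_i|\kappa_F[\Gamma_i(t)] + \tfrac{\sigma^2}{\kappa_{\mathrm{th}}}\mathcal{H}_{\mathrm{wnt}}[\Gamma(t),d]\Big)\,V_{n,i}\,d\mathcal{H}^1.
\end{align*}
Substituting the motion law \eqref{eq:WNT-law} in the bracketed factor yields exactly $-V_{n,i}$, producing the claimed nonpositive dissipation $-\sum_i\int_{\Gamma_i(t)}|V_{n,i}|^2\,d\mathcal{H}^1$.

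The main obstacle is the rigorous justification of this chain rule, which Theorem~\ref{thm:first-var-WNT} proves only for a \emph{fixed} compactly supported variation at $\varepsilon=0$. To convert the pointwise first variation into a genuine time derivative along the flow, I would need: (i) enough regularity of $t\mapsto\Gamma_i(t)$ (at least $C^1$ in $t$ with uniform $C^3$ control in space) to differentiate the anisotropic length term and to invoke a dominated convergence argument for the regular cross--interactions; (ii) uniform positive separation $\inf_{i\neq j}d_F(\Gamma_i(t),\Gamma_j(t))>0$ and a uniform lower bound on the injectivity radius along $\Gamma_i(t)$, ensuring that the Fermi--coordinate arguments in the proof of Theorem~\ref{thm:first-var-WNT} and the principal--value structure of the self--interaction remain valid uniformly in $t$ on compact time intervals; and (iii) a Lebesgue differentiation argument to pass from the truncated interaction $\mathcal{W}_{\mathrm{wnt}}^\rho$ to $\mathcal{W}_{\mathrm{wnt}}$ uniformly as $\rho\downarrow 0$ along the flow, justified by the uniform Coulomb cancellation already established. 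Granted these regularity premises, which are precisely the content of the smoothness hypothesis on $\Gamma(t)$, the chain rule is standard and the theorem follows.
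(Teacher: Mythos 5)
Your proposal is correct and follows essentially the same route as the paper's proof: identify the $L^2$--shape gradient on the normal bundle via Riesz representation from Theorem~\ref{thm:first-var-WNT}, read off the motion law from the definition of the gradient flow, and obtain the dissipation identity by applying the first variation with the instantaneous normal velocity as variation field. Your explicit list of regularity hypotheses needed to upgrade the fixed-time first variation to a genuine chain rule along the flow matches the justifications the paper collects in its final step.
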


\begin{proof}
We work on the shape manifold of $N$ disjoint, $C^3$ embedded curves modulo reparametrization, endowed with the $L^2$ metric on normal fields. 
For each component $\Gamma_i$, a (kinematic) velocity field decomposes as
\begin{align*}
\partial_t \gamma_i(\cdot,t)
=V_{n,i}(\cdot,t)\,\mathbf{n}_i(\cdot,t)+V_{\tau,i}(\cdot,t)\,\mathbf{t}_i(\cdot,t),
\end{align*}
where $V_{n,i}$ and $V_{\tau,i}$ are the normal and tangential velocities, respectively. 
Since the energy $\mathcal{G}_0$ is invariant under reparametrization, only the normal component contributes to the first variation; tangential motion merely changes the parametrization.
The $L^2$ inner product on the tangent space is
\begin{align*}
\langle V,W\rangle_{L^2(\Gamma)}
:=\sum_{i=1}^N\int_{\Gamma_i}\!\langle V_i^\perp,W_i^\perp\rangle\,d\mathcal{H}^1,
\qquad V_i^\perp:=\langle V_i,\mathbf{n}_i\rangle\,\mathbf{n}_i .
\end{align*}

\medskip
\noindent\textbf{Step 1: Identification of the $L^2$–shape gradient.}
Let $\{\Gamma_i^\varepsilon\}$ be a normal variation with variation field $V_i=\varphi_i\,\mathbf{n}_i$ compactly supported along $\Gamma_i$. 
By Theorem~\ref{thm:first-var-WNT} (first variation of $\mathcal{G}_0$), we have
\begin{align*}
\frac{d}{d\varepsilon}\Big|_{\varepsilon=0}\mathcal{G}_0(\Gamma^\varepsilon,d)
=
\sum_{i=1}^N\int_{\Gamma_i}
\Big(\pi |d_i|\,\kappa_F[\Gamma_i](x)
+\frac{\sigma^2}{\kappa_{\mathrm{th}}}\,\mathcal{H}_{\mathrm{wnt}}[\Gamma,d](x)\Big)\,
\varphi_i(x)\,d\mathcal{H}^1_x.
\end{align*}
By Riesz representation in the Hilbert space of normal fields with the $L^2$ inner product, the $L^2$--shape gradient $\mathrm{Grad}\,\mathcal{G}_0(\Gamma)$ is the normal vector field whose scalar coefficient is precisely the bracketed factor:
\begin{align*}
\mathrm{Grad}\,\mathcal{G}_0(\Gamma)\big|_{\Gamma_i}
=
\Big(\pi |d_i|\,\kappa_F[\Gamma_i]
+\frac{\sigma^2}{\kappa_{\mathrm{th}}}\,\mathcal{H}_{\mathrm{wnt}}[\Gamma,d]\Big)\,\mathbf{n}_i .
\end{align*}

\medskip
\noindent\textbf{Step 2: $L^2$--gradient flow and the normal velocity law.}
By definition, the $L^2$--gradient flow of $\mathcal{G}_0$ is the curve $\Gamma(t)$ satisfying
\begin{align*}
\partial_t \Gamma(t) \;=\; -\,\mathrm{Grad}\,\mathcal{G}_0(\Gamma(t))
\qquad\text{in the $L^2$ metric on normal fields.}
\end{align*}
Projecting onto the normal direction on each component $\Gamma_i(t)$ yields
\begin{align*}
V_{n,i}(x,t)
&=-\,\Big(\pi |d_i|\,\kappa_F[\Gamma_i(t)](x)
+\frac{\sigma^2}{\kappa_{\mathrm{th}}}\,\mathcal{H}_{\mathrm{wnt}}[\Gamma(t),d](x)\Big),
\end{align*}
while the tangential part is free (it corresponds to reparametrization and can be set to zero by choosing an $F$--arclength gauge). 
This is precisely the asserted motion law \eqref{eq:WNT-law}.

\medskip
\noindent\textbf{Step 3: Dissipation identity.}
Along any sufficiently smooth solution of the $L^2$--gradient flow,
\begin{align*}
\frac{d}{dt}\,\mathcal{G}_0(\Gamma(t),d)
&=\sum_{i=1}^N\int_{\Gamma_i(t)}
\Big(\pi |d_i|\,\kappa_F[\Gamma_i(t)]
+\frac{\sigma^2}{\kappa_{\mathrm{th}}}\,\mathcal{H}_{\mathrm{wnt}}[\Gamma(t),d]\Big)\,
V_{n,i}(t)\,d\mathcal{H}^1\\
&=\sum_{i=1}^N\int_{\Gamma_i(t)}
\langle \mathrm{Grad}\,\mathcal{G}_0(\Gamma(t)),\,\partial_t\Gamma_i(t)\rangle\,d\mathcal{H}^1\\
&=-\sum_{i=1}^N\int_{\Gamma_i(t)} |V_{n,i}(x,t)|^2\,d\mathcal{H}^1_x \;\le\; 0,
\end{align*}
where in the second line we used the identification of Step~1 and in the third the flow law of Step~2. 
The inequality is strict unless the normal velocity vanishes identically, i.e. unless $\Gamma(t)$ is a stationary point of $\mathcal{G}_0$.

\medskip
\noindent\textbf{Step 4: Justification of calculus on the shape manifold.}
All steps above are legitimate under the stated regularity: (i) the first variation in Theorem~\ref{thm:first-var-WNT} is well--defined for $C^3$ curves with positive separation; (ii) the $L^2$ metric is complete on normal fields in $L^2(\Gamma)$; (iii) reparametrization invariance removes tangential contributions; (iv) the nonlocal operator $\mathcal{H}_{\mathrm{wnt}}$ defines a bounded map on $C^1$ curves (principal--value part has an odd Calder\'on--Zygmund kernel; the remainder is smooth), so the pairing with $V_{n,i}\in L^2$ is meaningful. 
Therefore the $L^2$--gradient structure applies and yields both the normal velocity law and the dissipation identity.

The theorem is proved.
\end{proof}

\subsection{Local well-posedness of the motion law}

\begin{theorem}[Local existence and uniqueness]
\label{thm:LWP-WNT}
Let $k\ge 4$ and $d_i\ne 0$.
If $\Gamma^0=\{\Gamma_i^0\}$ consists of embedded $C^{k}$ curves with positive pairwise separation and positive $F$–injectivity radius, then there exists $T>0$ and a unique solution
\begin{align*}
\Gamma(t)\in C^0([0,T];C^{k})\cap C^1((0,T];C^{k-2})
\end{align*}
to \eqref{eq:WNT-law} with $\Gamma(0)=\Gamma^0$.
Moreover, $\Gamma^0\mapsto \Gamma(\cdot)$ is continuous in these topologies, and the solution depends smoothly on $(\sigma,\kappa_{\mathrm{th}})$.
\end{theorem}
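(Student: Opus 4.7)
The plan is to reduce the nonlocal geometric evolution \eqref{eq:WNT-law} to a quasilinear parabolic system for scalar height functions over the reference filaments, verify that this system falls under standard maximal-regularity theory for quasilinear equations, and then treat the nonlocal Biot--Savart field $\mathcal{H}_{\mathrm{wnt}}$ as a lower-order Lipschitz perturbation of the curvature principal part.

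First, I would fix smooth normal frames $\{\mathbf{n}_i^0,\mathbf{b}_i^0\}$ along each reference curve $\Gamma_i^0$ and, for a sufficiently short time, represent every admissible perturbed filament as a normal graph
\begin{align*}
\Gamma_i(t)=\big\{\gamma_i^0(s)+\phi_i^{(1)}(s,t)\,\mathbf{n}_i^0(s)+\phi_i^{(2)}(s,t)\,\mathbf{b}_i^0(s)\big\},
\end{align*}
with $\phi_i=(\phi_i^{(1)},\phi_i^{(2)})$ small in $C^k$. The positive pairwise separation and positive $F$--injectivity radius of $\Gamma^0$ ensure this tubular chart is diffeomorphic on a $C^k$--neighborhood of $\phi=0$, and reparametrization invariance of \eqref{eq:WNT-law} lets me fix the tangential velocity to zero so that only the normal components $\phi_i$ evolve. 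Expanding the WNT geodesic curvature in $\phi$ yields
\begin{align*}
\pi|d_i|\,\kappa_F[\Gamma_i]=-\mathbf{B}_i(\phi_i,\partial_s\phi_i)\,\partial_s^2\phi_i+\mathbf{R}_i(\phi_i,\partial_s\phi_i),
\end{align*}
with $\mathbf{B}_i$ a symmetric matrix inherited from $D^2_{\xi\xi}H$ via the Legendre transfer of Section~\ref{sec:From Tonelli to NonTonelli} and $\mathbf{R}_i$ a smooth lower-order remainder. Local strong convexity of $\Phi_x$ on bounded fibers (condition (W3)) gives a uniform bound $m I\le \mathbf{B}_i\le L I$ for $\phi$ in a small $C^1$-ball, so the principal part is uniformly parabolic.

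The hard part will be controlling $\mathcal{H}_{\mathrm{wnt}}[\Gamma(t),d]$ as a Lipschitz perturbation of strictly lower order than the curvature term. Using the decomposition $G_{\mathrm{wnt}}=\tfrac{1}{4\pi d_F}+H_{\mathrm{wnt}}$ from Theorem~\ref{thm:WNT-Green}, I would treat $H_{\mathrm{wnt}}$ as a smoothing integral operator bounded from $C^k$ to $C^k$ uniformly under small $C^k$ deformations of the filaments, and treat the singular principal-value contribution as an odd Calder\'on--Zygmund operator along the curve. Comparison with the Euclidean Cauchy integral and the local equivalence of $d_F$ with a smooth Riemannian proxy should give the estimate that $\phi\mapsto \mathcal{H}_{\mathrm{wnt}}[\Gamma(\phi),d]$ is locally Lipschitz from $C^{k-1+\alpha}$ into $C^{k-2+\alpha}$. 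A Gr\"onwall bound on the $C^1$--norm preserves interfilament separation and the injectivity radius for a short time, which keeps the nonlocal kernel under uniform control.

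With the principal part uniformly parabolic and the nonlocal forcing a lower-order Lipschitz perturbation, I would invoke Lunardi's maximal-regularity theorem for abstract quasilinear Cauchy problems in little-H\"older or interpolation spaces (equivalently a contraction-mapping argument around the frozen linearization $\partial_t\phi_i=\mathbf{B}_i(0,0)\,\partial_s^2\phi_i$) to obtain a unique solution $\phi\in C^0([0,T];h^{k+\alpha})\cap C^1((0,T];h^{k-2+\alpha})$. Continuous dependence on $\Gamma^0$ in the stated topologies follows from the uniform contraction constants, and smooth dependence on $(\sigma,\kappa_{\mathrm{th}})$ is obtained via the implicit function theorem in the Banach scale, since these parameters enter analytically through the thermal prefactor multiplying $\mathcal{H}_{\mathrm{wnt}}$. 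Undoing the height-function encoding yields the claimed family of $C^k$ embedded filaments on $[0,T]$, completing the argument.
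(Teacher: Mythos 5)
Your proposal is correct and follows essentially the same route as the paper's proof: both reduce \eqref{eq:WNT-law} to a quasilinear second-order parabolic system (you via normal height functions, the paper via an $F$--arclength gauge), both treat $\mathcal{H}_{\mathrm{wnt}}$ as a lower-order perturbation by splitting $G_{\mathrm{wnt}}$ into the odd Calder\'on--Zygmund singular part plus the smooth remainder $H_{\mathrm{wnt}}$, and both close the argument with maximal regularity for the abstract quasilinear Cauchy problem together with Picard iteration, deriving continuous dependence from the contraction constants and parameter dependence from implicit-function/tame estimates. Your quantification of the nonlocal term as locally Lipschitz from $C^{k-1+\alpha}$ to $C^{k-2+\alpha}$ is in fact slightly more careful than the paper's claim of $C^{k}\to C^{k}$ boundedness, but it is used in exactly the same way, so the arguments coincide in substance.
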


\begin{proof}
Work in an $F$--arclength gauge on each component.
The principal term $-\pi |d_i|\,\kappa_F$ is a second--order quasilinear uniformly parabolic operator in the normal direction as long as $|\dot\gamma|_F\equiv 1$ and the embedding is preserved (anisotropic curve shortening; the Tonelli proxy provides coefficients in $C^{k-1}$, transported to WNT via the Legendre map).
The nonlocal term
\begin{align*}
\Gamma\mapsto \mathcal{H}_{\mathrm{wnt}}[\Gamma,d]
\end{align*}
is bounded $C^{k}\to C^{k}$: the kernel $\nabla_xG_{\mathrm{wnt}}$ splits into a principal-value piece with odd $1/d_F^2$ singularity (a Calder\'on--Zygmund operator on curves) and a $C^\infty$ remainder.
Separation prevents near--collision blow-up of inter--filament interactions.
Thus the system has the abstract form
\begin{align*}
\partial_t\gamma = \mathcal{A}(\gamma)\gamma + \mathcal{N}(\gamma),
\end{align*}
with $\mathcal{A}$ sectorial, uniformly elliptic of order two and coefficients in $C^{k-2}$, and $\mathcal{N}:C^{k}\to C^{k-2}$ smooth and bounded on a tubular neighborhood of $\Gamma^0$.
Maximal regularity (analytic semigroup for $\mathcal{A}$ + Picard iteration) yields local existence, uniqueness, and continuous dependence for $k\ge 4$.
Smooth parameter dependence follows from tame estimates on the coefficients and on the nonlocal operator.
\end{proof}

\begin{corollary}[Short--time energy decay]\label{cor:decay}
Under the hypotheses of Theorem~\ref{thm:LWP-WNT}, $t\mapsto\mathcal{G}_0(\Gamma(t),d)$ is strictly decreasing on $(0,T]$ unless $\Gamma(t)$ is stationary for \eqref{eq:WNT-law}.
In particular, any stationary configuration solves the force balance of Theorem~\ref{thm:first-var-WNT}.
\end{corollary}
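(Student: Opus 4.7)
My plan is to deduce Corollary~\ref{cor:decay} directly from the dissipation identity of Theorem~\ref{thm:motion-law-WNT}, the forward uniqueness provided by Theorem~\ref{thm:LWP-WNT}, and the first--variation formula of Theorem~\ref{thm:first-var-WNT}. First I would record that, for every $t\in(0,T]$,
\begin{equation*}
\frac{d}{dt}\mathcal{G}_0(\Gamma(t),d) \;=\; -\sum_{i=1}^N \int_{\Gamma_i(t)}|V_{n,i}(x,t)|^2\,d\mathcal{H}^1_x \;\le\; 0,
\end{equation*}
which holds pointwise in $t$ because the regularity $\Gamma\in C^1((0,T];C^{k-2})$ with $k\ge 4$ makes each $V_{n,i}(\cdot,t)$ a continuous normal field on $\Gamma_i(t)$ and makes the sum of squared $L^2$--norms continuous in $t$. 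This at once yields non-strict monotonicity of $t\mapsto\mathcal{G}_0(\Gamma(t),d)$ on $(0,T]$.

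For the strict version I would argue by contraposition. Suppose $\mathcal{G}_0(\Gamma(s),d)=\mathcal{G}_0(\Gamma(t),d)$ for some $0<s<t\le T$. Integrating the dissipation identity on $[s,t]$ and using non-positivity of the integrand forces $V_{n,i}(\cdot,\tau)\equiv 0$ on $\Gamma_i(\tau)$ for every $\tau\in[s,t]$ (continuity in $\tau$ removes any exceptional times). By the motion law \eqref{eq:WNT-law}, this is the pointwise force balance
\begin{equation*}
\pi|d_i|\,\kappa_F[\Gamma_i(\tau)] \;+\; \frac{\sigma^2}{\kappa_{\mathrm{th}}}\,\mathcal{H}_{\mathrm{wnt}}[\Gamma(\tau),d] \;=\; 0
\end{equation*}
on $[s,t]$, so $\partial_\tau\Gamma\equiv 0$ there; forward uniqueness of the quasilinear parabolic system from Theorem~\ref{thm:LWP-WNT} then pins $\Gamma(\tau)=\Gamma(s)$ for every $\tau\ge s$. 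This is exactly the "unless $\Gamma(t)$ is stationary" alternative in the statement, and conversely any flow that is not stationary in this sense must have $\mathcal{G}_0(\Gamma(\cdot),d)$ strictly decreasing on $(0,T]$.

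For the "In particular" clause, if $\Gamma_\star$ is stationary for \eqref{eq:WNT-law} then by definition $V_{n,i}\equiv 0$ at $\Gamma_\star$, and the motion law collapses to
\begin{equation*}
\pi|d_i|\,\kappa_F[\Gamma_{\star,i}] \;+\; \frac{\sigma^2}{\kappa_{\mathrm{th}}}\,\mathcal{H}_{\mathrm{wnt}}[\Gamma_\star,d] \;=\; 0
\end{equation*}
on each component; by Theorem~\ref{thm:first-var-WNT} this is precisely the vanishing of the $L^2$--shape gradient of $\mathcal{G}_0$ against every compactly supported normal test field, i.e., the asserted force balance. The one mildly delicate point — not, I think, a real obstacle — is passing from "dissipation vanishes on an interval" to "the curve is literally frozen on that interval and thereafter"; this is handled by the continuity of $V_{n,i}$ in $t$ supplied by Theorem~\ref{thm:LWP-WNT} together with forward parabolic uniqueness, and crucially does \emph{not} require backward uniqueness of the degenerate quasilinear flow, in agreement with the form of Corollary~\ref{cor:decay}.
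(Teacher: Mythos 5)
Your proposal is correct and follows exactly the route the paper intends: the corollary is stated without proof, as an immediate consequence of the dissipation identity in Theorem~\ref{thm:motion-law-WNT} together with the regularity and forward uniqueness of Theorem~\ref{thm:LWP-WNT}, which is precisely the argument you supply. Your explicit remark that only forward (not backward) parabolic uniqueness is needed to pass from vanishing dissipation on an interval to a frozen trajectory is a correct and worthwhile clarification of a point the paper leaves implicit.
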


\subsection{Linearization and spectral stability}

Let $\Gamma_\star$ be a smooth stationary configuration for \eqref{eq:WNT-law}.
The second variation computed via the Tonelli proxy and transferred to WNT yields a self-adjoint quadratic form on the normal bundle:
\begin{align*}
\mathcal{Q}_{\Gamma_\star}[\varphi]
=
\pi\sum_{i=1}^N |d_i|\,\mathcal{Q}_F^{\mathrm{geo}}[\Gamma_{\star,i};\varphi_i]
+\frac{\sigma^2}{2\kappa_{\mathrm{th}}}\,\mathcal{Q}_F^{\mathrm{th}}[\Gamma_\star;\varphi],
\end{align*}
where $\mathcal{Q}_F^{\mathrm{th}}$ is compact with respect to $\mathcal{Q}_F^{\mathrm{geo}}$ (the kernel’s singular part is odd; the remainder is smoothing).
Let $\mathcal{L}_{\Gamma_\star}$ be the linearized operator associated with $\mathcal{Q}_{\Gamma_\star}$.

\begin{theorem}[Spectral stability and nonlinear metastability]
\label{thm:spec-stab}
If $\mathcal{Q}_{\Gamma_\star}$ is strictly positive on the normal bundle modulo symmetry modes (e.g. rigid motions), then there exist $\delta,C,\omega>0$ such that for any initial data $\Gamma(0)$ with
\begin{align*}
\mathrm{dist}_{C^{k}}(\Gamma(0),\Gamma_\star)<\delta,
\end{align*}
the solution to \eqref{eq:WNT-law} exists globally and satisfies
\begin{align*}
\mathrm{dist}_{C^{k-1}}(\Gamma(t),\Gamma_\star)
\le
C\,e^{-\omega t}\,\mathrm{dist}_{C^{k}}(\Gamma(0),\Gamma_\star).
\end{align*}
\end{theorem}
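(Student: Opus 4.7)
The plan is to combine the analytic--semigroup framework of Theorem~\ref{thm:LWP-WNT} with a Lyapunov--Schmidt / modulation decomposition that quotients out the neutral modes in $\ker\mathcal{L}_{\Gamma_\star}$, and then to close a standard linearized--stability bootstrap using the spectral gap supplied by the positivity hypothesis on $\mathcal{Q}_{\Gamma_\star}$. First I would parametrize nearby configurations as normal graphs over $\Gamma_\star$ in $F$--Fermi coordinates: writing each component as $\gamma_i(s,t)=\gamma_{\star,i}(s)+\varphi_i(s,t)\,\mathbf{n}_{\star,i}(s)$ converts \eqref{eq:WNT-law} into a quasilinear parabolic system
\begin{align*}
\partial_t \varphi + \mathcal{L}_{\Gamma_\star}\varphi = \mathcal{N}(\varphi),
\end{align*}
where $\mathcal{L}_{\Gamma_\star}$ is the self--adjoint operator associated with the second variation $\mathcal{Q}_{\Gamma_\star}$, and $\mathcal{N}(\varphi)$ gathers contributions at least quadratic in $(\varphi,\partial_s\varphi)$. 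The local anisotropic curvature part $-\pi|d_i|\kappa_F$ is sectorial and uniformly parabolic on normal sections by Theorem~\ref{thm:LWP-WNT}, while the Fr\'echet derivative of the nonlocal operator $\mathcal{H}_{\mathrm{wnt}}$ splits into a Calder\'on--Zygmund principal--value piece plus a smoothing remainder that acts as a subordinate perturbation on the little--H\"older scale.

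Next, I would peel off the neutral directions. Let $\mathcal{K}\subset \ker\mathcal{L}_{\Gamma_\star}$ be the finite--dimensional subspace spanned by the symmetry generators (rigid motions and any continuous isometries of $\mathcal{G}_0$ annihilating $\mathcal{Q}_{\Gamma_\star}$), and write
\begin{align*}
\Gamma(t)=\Phi_{g(t)}\Gamma_\star+\varphi^\perp(t),\qquad \varphi^\perp(t)\perp_{L^2}\mathcal{K},
\end{align*}
with $g(t)$ evolving on the symmetry group by an ODE coupled to $\varphi^\perp$, and $\varphi^\perp$ satisfying the restricted PDE. The hypothesis $\mathcal{Q}_{\Gamma_\star}|_{\mathcal{K}^\perp}\ge c_0\|\cdot\|_{H^1}^2$ translates into a spectral gap $2\omega>0$ for $\mathcal{L}_{\Gamma_\star}|_{\mathcal{K}^\perp}$, hence into the exponential semigroup bound $\|e^{-t\mathcal{L}_{\Gamma_\star}}\|_{C^{k-1}\to C^{k-1}}\le C e^{-\omega t}$ on $\mathcal{K}^\perp$. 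Via the Duhamel formula with quadratic remainder $\|\mathcal{N}(\varphi^\perp)\|_{C^{k-3}}\le C\|\varphi^\perp\|_{C^{k-1}}^2$, parabolic smoothing of the sectorial operator $\mathcal{A}(\gamma)$ from Theorem~\ref{thm:LWP-WNT}, and the dissipation identity of Theorem~\ref{thm:motion-law-WNT} ruling out finite--time escape from the tubular neighborhood, a continuation--and--bootstrap argument produces global existence together with $\|\varphi^\perp(t)\|_{C^{k-1}}\le C e^{-\omega t}\|\varphi^\perp(0)\|_{C^k}$. Convergence of $g(t)$ to a limit $g_\infty$ near the identity then follows from the coupled ODE estimate $|\dot g|\le C\|\varphi^\perp\|_{C^{k-1}}^2$, and, under the natural interpretation of $\mathrm{dist}_{C^{k-1}}$ modulo the symmetry orbit (equivalently, after replacing $\Gamma_\star$ by $\Phi_{g_\infty}\Gamma_\star$), the announced bound is obtained.

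The main obstacle I expect lies in establishing the spectral gap for $\mathcal{L}_{\Gamma_\star}$ in the presence of the nonlocal WNT self--induction. Sectoriality and the Fredholm structure on little--H\"older spaces require showing that the Calder\'on--Zygmund principal part of the linearized $\nabla_x G_{\mathrm{wnt}}$ is compactly subordinated to the anisotropic curvature operator $-\pi|d_i|\kappa_F$, so that Weyl's theorem locates the essential spectrum strictly inside $(2\omega,\infty)$. This is precisely where the decomposition $\mathcal{Q}_{\Gamma_\star}=\pi\sum_i|d_i|\,\mathcal{Q}_F^{\mathrm{geo}}+\tfrac{\sigma^2}{2\kappa_{\mathrm{th}}}\mathcal{Q}_F^{\mathrm{th}}$, with $\mathcal{Q}_F^{\mathrm{th}}$ compact relative to $\mathcal{Q}_F^{\mathrm{geo}}$, does the crucial work: by the Fredholm alternative, any eigenvalue below the essential--spectrum threshold is discrete, and the strict positivity of $\mathcal{Q}_{\Gamma_\star}$ on $\mathcal{K}^\perp$ forces each such eigenvalue to lie in $\mathcal{K}$ or in $[2\omega,\infty)$. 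A secondary technical point is controlling the nonlocal operator's Fr\'echet derivative by tame estimates so that the modulation ODE for $g(t)$ inherits enough regularity to close the bootstrap; the assumed smoothness $k\ge 4$ furnishes the derivative slack needed for this iteration.
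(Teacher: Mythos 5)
Your proposal follows essentially the same route as the paper's proof: a spectral gap for $\mathcal{L}_{\Gamma_\star}$ obtained from the strict positivity of $\mathcal{Q}_{\Gamma_\star}$ modulo symmetries (using that the thermal part is a compact perturbation of the elliptic curvature operator), exponential decay of the linearized semigroup on the complement of the neutral modes, and a standard semigroup/bootstrap closure of the nonlinearity using the tame $C^{k}\to C^{k-2}$ estimates from Theorem~\ref{thm:LWP-WNT}. Your explicit Lyapunov--Schmidt/modulation decomposition is a more careful treatment of the neutral directions than the paper's sketch, and it correctly flags that the stated exponential bound must be read modulo the symmetry orbit (i.e.\ with $\Gamma_\star$ replaced by $\Phi_{g_\infty}\Gamma_\star$) whenever genuine rigid-motion modes are present --- a point the paper's proof leaves implicit.
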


\begin{proof}
On each component the linearization is a strictly positive, self-adjoint, second--order elliptic operator plus a compact symmetric perturbation; the spectral gap $\omega>0$ follows from positivity of $\mathcal{Q}_{\Gamma_\star}$ on the orthogonal complement of symmetries.
Since \eqref{eq:WNT-law} is the $L^2$--gradient flow of $\mathcal{G}_0$, the semigroup generated by $\mathcal{L}_{\Gamma_\star}$ is exponentially stable in that subspace.
Nonlinear terms are $C^1$--tame from $C^{k}$ to $C^{k-2}$ and small near $\Gamma_\star$, so standard stable manifold/semigroup arguments yield global existence and exponential convergence, with constants depending on the spectral gap and tame bounds.
\end{proof}

\subsection{Model example: anisotropic double-phase}

\begin{example}
For the prototype WNT structure in Example~\ref{ex:double-phase},
\begin{align*}
F_B(x,y)^2
= \langle g_x y,y\rangle + a(x)\,\big(\langle h_x y,y\rangle\big)^{1+\eta/2},
\end{align*}
where $0<\eta\le 1$ and $0<a_1\le a(x)\le a_2$,
the fiber Hessian and Legendre inverse at $\xi=\mathcal{L}_x(y)$ are
\begin{align*}
\partial_{yy}^2\!\Big(\tfrac{1}{2}F_B^2\Big)(x,y)
&= g_x
+ \Big(1+\tfrac{\eta}{2}\Big)a(x)\,\big(\langle h_xy,y\rangle\big)^{\eta/2}\,h_x\\
&\quad+ \tfrac{\eta}{2}\Big(1+\tfrac{\eta}{2}\Big)a(x)\,\big(\langle h_xy,y\rangle\big)^{\eta/2-1}
(h_xy)\otimes(h_xy),
\end{align*}
and
\begin{align*}
g^{\!*}_{ij}(x,\xi)
=\Big(\partial^2_{y^iy^j}\tfrac{1}{2}F_B^2(x,y)\Big)^{-1}_{y=\mathcal{L}^{-1}_x(\xi)}.
\end{align*}
Substituting in \eqref{eq:WNT-law} gives, to leading order in curvature along $\Gamma_i$,
\begin{align}\label{eq:FB-law}
V_{n,i}
= -\,\pi |d_i|\,
\Big(1+\tfrac{\eta}{2}\,a(x)\,\langle h_x\mathbf{t}_i,\mathbf{t}_i\rangle^{\eta/2}\Big)\,
\kappa_{F_0}[\Gamma_i]
-\frac{\sigma^2}{\kappa_{\mathrm{th}}}\,\mathcal{H}_{\mathrm{wnt}}[\Gamma,d],
\end{align}
where $F_0(x,y)=\sqrt{\langle g_xy,y\rangle}$ and $\kappa_{F_0}$ is the curvature with respect to the Tonelli proxy $F_0$.
Thus the double-phase correction yields a curvature--dependent amplification factor
\begin{align*}
\alpha(x)=1+\tfrac{\eta}{2}\,a(x)\,\langle h_x\mathbf{t}_i,\mathbf{t}_i\rangle^{\eta/2},
\end{align*}
accelerating contraction in directions where the $h$--metric dominates $g$ and slowing it where $a(x)$ is small.
The WNT thermal field, via $\mathcal{H}_{\mathrm{wnt}}$, induces a long--range drift aligning filaments with the temperature gradients, consistent with the variational picture (Sections~\ref{sec:Gamma-convergence}–\ref{sec:WNT-parabolic}).
\end{example}

\begin{remark}[A consistent discretization]
A practical scheme for \eqref{eq:WNT-law} represents each $\Gamma_i$ by a polygonal chain in the background $g$--metric, approximates anisotropic curvature by
\begin{align*}
\kappa_{F_B}[\Gamma_i](x_k)
\simeq
\frac{\langle g_{x_k}(\mathbf{t}_{k+1}-\mathbf{t}_{k-1}),\mathbf{n}_k\rangle}
{\|\mathbf{t}_k\|_{F_B}^2},
\end{align*}
and computes the nonlocal term with the regularized kernel
\begin{align*}
\nabla_x G_{\mathrm{wnt}}(x,y)
\approx -\,\frac{x-y}{4\pi\,(d_F(x,y)^3+\varepsilon^3)}.
\end{align*}
This preserves the discrete dissipation law and interpolates smoothly between Tonelli and WNT regimes by tuning $\eta$.
\end{remark}

\section*{Conclusion and Outlook}

The present work establishes a comprehensive analytic framework for WNT Finsler--Ginzburg--Landau systems,
bridging convex Hamiltonian duality, variational $\Gamma$--convergence, and geometric motion laws for vortices.
Starting from the dual Tonelli extension of the Finsler structure,
we constructed a well--posed elliptic and parabolic theory under minimal local $C^{1,1}_\xi$ regularity of the Hamiltonian.
The main contributions can be summarized as follows:
\begin{itemize}
\item[(i)] {\bf Structural Foundations.}
We introduced the WNT dual operators $\partial_\xi H(x,\xi)$ and $\Delta_F^{\mathrm{wnt}}$, proving local ellipticity,
coercivity, and Leray--Lions monotonicity under weak Tonelli hypotheses.
These results ensure existence and stability of weak solutions to nonlinear Dirichlet problems.
\item[(ii)] {\bf $\Gamma$--convergence and Renormalized Energy.}
In the planar setting, we derived the full $\Gamma$--limit of the WNT–GL energies, identifying the renormalized interaction kernel $G_{\mathrm{wnt}}$ and the curvature--dependent core correction.
This extends the classical Bethuel--Brezis--H\'elein and Serfaty theories to anisotropic, non-Tonelli geometries.
\item[(iii)] {\bf Gradient Flows and Thermal Coupling.}
We established global well--posedness of the thermally coupled parabolic WNT--GL flow in the EVI framework, showing energy dissipation and regularization for $(\psi,A)$ and smoothness away from the evolving vortex set.
\item[(iv)] {\bf Geometric and Filament Dynamics.}
In three dimensions, the vortex filaments follow an $L^2$--gradient flow of the renormalized WNT energy.
The resulting motion law combines the anisotropic curvature $\kappa_F$ with the nonlocal thermal field $\mathcal{H}_{\mathrm{wnt}}$, yielding a rigorous generalization of the curvature--driven dynamics known in isotropic GL models.
\end{itemize}

\paragraph{Technical Remarks.}
All proofs rely solely on local convexity and bounded $C^{1,1}_\xi$--regularity of $H(x,\xi)$,
avoiding any global Tonelli assumption.
While the main theorems are complete and self--contained,
certain auxiliary lemmas--such as the recovery sequence in the $\Gamma$--limit and the principal--value regularization of $\mathcal{H}_{\mathrm{wnt}}$ can be expanded in a companion appendix for full rigor. A uniform notation table for $A_{u_0}$, $G_{\mathrm{wnt}}$, and $d_F$ will be included in the final version to improve readability and cross--reference consistency.

\paragraph{Perspectives.}
The analytical framework developed here opens several natural directions:
\begin{enumerate}
\item extending the $\Gamma$--limit to multi--phase and stochastic WNT systems,
\item incorporating explicit physical mobility coefficients $\mu_F$ in the parabolic flow,
\item numerical implementation of the effective motion law and comparison with anisotropic superconductivity experiments,
\item and exploring the semiclassical (quantum--geometric) limit discussed in the extended version of the paper.
\end{enumerate}

Overall, the WNT--GL theory provides a unified, fully variational bridge between non--Tonelli Finsler geometry, nonlinear PDE analysis, and anisotropic vortex dynamics.


\end{document}